\documentclass[12pt]{article}
\usepackage[a4paper, margin=2cm]{geometry} % Sets A4 with 2cm margins
\usepackage{blindtext}
\usepackage{setspace}
\usepackage{amsmath,amsfonts}
\usepackage{algorithmic}
\usepackage{algorithm}
\usepackage{array}
\usepackage[caption=false,font=normalsize,labelfont=sf,textfont=sf]{subfig}
\usepackage{textcomp}
\usepackage{stfloats}
\usepackage{url}
\usepackage{verbatim}
\usepackage{graphicx}
\usepackage{cite}
\usepackage{amsmath}
\usepackage{array}
\usepackage{mathrsfs}
\usepackage{hyperref}
\usepackage{multicol}
\usepackage{amssymb}
\usepackage[linesnumbered,ruled,vlined,algo2e]{algorithm2e}
\usepackage{lineno}
\usepackage{amsthm}
\newtheorem{theorem}{Theorem}
\newtheorem{lemma}{Lemma}

\theoremstyle{definition}

\newtheorem{problem}{Problem}
\newtheorem{corollary}{Corollary}
\newtheorem{discussion}{Discussion}
\newtheorem{assumption}{Assumption}
\newcommand{\cN}{{\mathcal N}}

\newcommand{\cI}{{\mathcal I}}
\newcommand{\cR}{{\mathcal R}}
\newcommand{\cIc}{{\mathcal{I}^c}}
\newcommand{\A}{{\mathcal{A}}}
\newcommand{\W}{{cost}}
\newcommand{\J}{{\mathcal{J}}}

\newcommand{\mO}{{\mathcal{O}}}
\newcommand{\cD}{{\mathcal D}}
\begin{document}

\title{Approximation Algorithm on Drone Delivery Packing Problem with Battery Stations}

\author{Saswata~Jana  and~Partha~Sarathi~Mandal% <-this % stops a space
\thanks{S. Jana is a Research scholar and P. S. Mandal is professor of Indian Institute of Technology Guwahati, Assam, India.}% <-this % stops a space
\thanks{A preliminary version of this paper appeared in Proc. of 24th Int. Conference on Distributed Computing and Networking (ICDCN 2023) \cite{janaPacking}}% <-this % stops a space
\thanks{Manuscript received April 19, 2005; revised August 26, 2015.}}

% The paper headers
% \markboth{Journal of \LaTeX\ Class Files,~Vol.~14, No.~8, August~2021}%
% {Shell \MakeLowercase{\textit{et al.}}: A Sample Article Using IEEEtran.cls for IEEE Journals}

%\IEEEpubid{0000--0000/00\$00.00~\copyright~2021 IEEE}
% Remember, if you use this you must call \IEEEpubidadjcol in the second
% column for its text to clear the IEEEpubid mark.

\maketitle
%\vspace{-5pt}
\begin{abstract}
%Recent advancements in unmanned aerial vehicles (UAVs, or drones) have prompted logistics to adopt drones for multiple operations. 
Collaboration between drones and trucks in a last-mile delivery system offers numerous benefits and reduces many challenges of the traditional delivery system. Here, we introduce \textit{Drone-Delivery Packing Problem}, where a set of parcels, associated with delivery intervals and cost, should be delivered to customer locations. The system comprises a set of identical drones and battery stations along truck’s route, where drones swap depleted batteries or recharge them. The objective is to find assignment for all parcels by using the minimum number of drones, subject to the battery budget and compatibility of each drone's assignment. 
We consider three variants of the problem, based on conflicting characteristics and existence of battery service stations. All are NP-hard, and we have proposed approximation algorithms for each. When there are no battery stations, we propose an approximation algorithm that uses at most $\min\{(\frac{11}{9}OPT_{NC}+\frac{24}{9}), (\frac{3}{2}OPT_{NC} + \frac{3}{2})\}$ drones, where $OPT_{NC}$ is the size of the optimum solution.
When the intervals are non-conflicting, we design a $(2+\psi)$-approximation algorithm. In the presence of both battery stations and conflicting intervals, we present a $(4+\psi)$-approximation algorithm. The algorithm is later modified into a $(3+\psi)$-approximation algorithm when the battery service stations act as swapping stations.
Here $\psi=\frac{\epsilon_{max}-\epsilon_{min}}{1-\epsilon_{max}}$, where $\epsilon_{min}=\frac{1}{B}\min\limits_{1\leq j\leq n}cost(I_j)$, $\epsilon_{max}=\min\{\frac{1}{2},\frac{1}{B}\max\limits_{1\leq j\leq n}cost(I_j)\}$, $cost(I_j)$ is the cost for the delivery $j$, and $B$ is the drone’s battery budget.
Finally, we validate our results and compare the performance with the optimum on different instances.
\end{abstract}

%\begin{keywords}
\noindent\textbf{Keywords:}
Approximation Algorithm, Drone Delivery,  Truck, Last-mile Delivery System, Combinatorial Optimization.
%\end{keywords}

\section{Introduction}\label{sec:introduction}
\noindent \textit{Motivation:}
The rapid demand for commercial deliveries motivates the e-commerce giants to find more effective ways to deliver parcels to customers.
\textit{Last-mile delivery} \cite{LastMile} is the final step in the delivery journey, where the product must be delivered to the customer's doorstep from the distribution hub.
This stage is often the most expensive and involves significant human interaction.
However, advances in drone technology have enabled miniaturization, leading to contactless delivery systems. Major delivery companies have initiated measures to enable efficient parcel delivery using drones.
%\cite{amazon}, \cite{DHL}.
The integration of ground vehicles (e.g., trucks and vans) with drones, each operating within its respective constraints, improves profitability and reduces overall delivery time. Additionally, delivery packages by drone exhibit a substantial reduction in $CO_2$ emission compared to delivery by ground vehicles \cite{Co2Emissions_Goodchild}.
Furthermore, drones have numerous applications
%in various fields, including a, healthcare, and defence.
agriculture\cite{agriculture}, healthcare 
\cite{healthcare}, defence and disaster response \cite{defence},
emergency response, etc.
% agriculture, healthcare, defence, disaster response, emergency response, etc.

%\vspace{1mm}
\noindent \textit{Challenges:}
For a given set of customer locations, we need to know the truck route, launch and rendezvous point for a delivery. All deliveries must be completed within the specified time windows using a fixed fleet of drones with limited capacity. 
Therefore, the use of charging stations or battery-swapping policies becomes necessary.
%However, battery swapping is significantly more expensive than recharging.
Moreover, a drone cannot be assigned to arbitrary delivery sets due to overlapping time intervals, and at any given time, it can deliver at most one package. These constraints collectively influence the logistics planning, aiming to complete all deliveries with a limited number of identical drones while minimizing the total delivery cost.
%In this paper, we are motivated to minimize the number of drones that we should deploy on the truck beforehand to accomplish all the deliveries.
% We also introduce some charging station on the tuck's path, where the drones can recharge (or swap the fully/paritally drained battery with the fully recharged one) themselves, if needed.
Before presenting our problem and contributions, we briefly review the most relevant problem addressed in the literature \cite{Betti, janaSchedulingJournal} to to motivate further study.

%\vspace{1mm}
\noindent \textit{Drone delivery
scheduling problem:} For a given set of customers and delivery time intervals with the cost, profit, and battery budget of the drones, the goal is to schedule a fixed set of drones for the deliveries to maximize the total profit. This NP-hard problem was formulated by Sorbelli \emph{et al}. \cite{Betti}, who proposed approximation algorithms for both single-drone and multiple-drone cases. Later, Jana and Mandal \cite{janaSchedulingJournal} improved certain approximation factors.

%The aforementioned problem does not guarantee delivery of all deliveries because it uses a fixed set of drones to optimize total profit for delivery.
Since the aforementioned problem uses a fixed fleet of drones, some customers may not receive their deliveries. 
In this paper, we introduce the \textit{Drone-Delivery Packing Problem}, which aims to minimize the number of drones required to complete all deliveries.
Minimizing the number of drones is crucial, even though we assume that the truck carries a sufficient number of drones. In general, a delivery company typically receives requests from multiple zones and must assign several trucks throughout the day. For each zone, a single truck is usually assigned, and the available drones must be distributed across these trucks. Therefore, reducing the number of drones required per zone allows the company to serve more zones using the drones available in its warehouse. 
Additionally, we assume that several battery stations are located along the truck’s route, where drones can recharge or replace their depleted batteries with fully charged ones as needed.
Allowing drones to recharge (or swap batteries) significantly reduces the number of drones required for successful delivery operations.
We define three variants of the problem and then propose separate approximation algorithms for them. 
Two performance ratios are commonly used in the context of approximation algorithm. One is \textit{absolute approximation ratio}, which is $\sup\limits_{\mathcal{I}} \frac{\sf ALG(\mathcal{I})}{\sf OPT(\mathcal{I})}$, where $\sf ALG(\mathcal{I})$ is the size of the solution returned by the algorithm $\sf ALG$ for the instance $\mathcal{I}$ and $\sf OPT(\mathcal{I})$ is the optimal solution size. The other one is \textit{asymptotic approximation ratio}, which is $\limsup \limits_{n \to \infty}\sup\limits_{\mathcal{I}}\{ \frac{\sf ALG(\mathcal{I})}{\sf OPT(\mathcal{I})} ~| ~\sf OPT(\mathcal{I}) = n\}$. Our results, summarized below, include both absolute and asymptotic bounds.
%\vspace{1mm}
\subsection{Contributions}
\label{sec:contribution}
Our contributions in this paper are the following.
\begin{itemize}
    \item We introduce three variants of the Drone Delivery Packing Problem (DDP)(Section~\ref{section:model}). When the input does not contain battery service stations, we refer to the problem as DDP-NS. If the delivery intervals are non-conflicting, we refer to the problem as DDP-NC. In the general case, where battery service stations are present, and delivery intervals may conflict, we refer to the problem as DDP-SC.
   % \item  We prove that all three variants of the problem are NP-hard (Section \ref{section:hardness}).
    \item We present an approximation algorithm for DDP-NS (Section \ref{sec:ddp-ns}), which uses at most  $(2 + \frac{\epsilon_{max}-\epsilon_{min}}{1-\epsilon_{max}})OPT_{NS}$ drones and having running time $\mO(n \log n + n_e)$. Here, $\epsilon_{min} = \frac{1}{B}\min \limits_{1 \leq j \leq n} cost(I_j)$, $\epsilon_{max} = \min\{\frac{1}{2}, \frac{1}{B}\max \limits_{1 \leq j \leq n} cost(I_j)\}$, $cost(I_j)$ is the cost for the delivery $j$, and $B$ is the battery budget of the identical drones. $n$ is the number of deliveries, $n_e$ is the total number of conflicts among the delivery intervals, and $OPT_{NS}$ is the optimum solution size for DDP-NS.
    \item We propose an approximation algorithm for DDP-NC (Section \ref{sec:ddp-nc}) that uses at most $\min\{\frac{11}{9}$ $OPT_{NC} + \frac{24}{9}, \frac{3}{2}OPT_{NC} + \frac{3}{2}\}$ drones with running time $\mathcal{O}(n\log n + n_e + r)$, where $OPT_{NC}$ is the optimum solution for DDP-NC.
   % \item We propose an approximation algorithm for DDP-NC (Section \ref{sec:ddp-nc}) that uses at most $(\frac{11}{9}$ $OPT_{NC} + \frac{24}{9})$ drones with running time $\mathcal{O}(n\log n + n_e + r)$, where $OPT_{NC}$ is the optimum solution for DDP-NC.
  \item  We design an approximation algorithm for DDP-SC (Section \ref{sec:ddp-sc}) that uses at most $(4 + \frac{\epsilon_{max}-\epsilon_{min}}{1-\epsilon_{max}})OPT_{SC}$ drones with running time $\mathcal{O}(n\log n + n_e + r)$, where $OPT_{SC}$ is the optimum solution size for DDP-SC.
  \item We design another approximation algorithm for DDP-SC (Section \ref{sec:mod-ddp-sc}) that uses at most $(3 + \frac{\epsilon_{max}-\epsilon_{min}}{1-\epsilon_{max}})OPT_{SC}$ drones with running time $\mathcal{O}(n^{2.273} + r)$, when the battery service stations are the swapping stations.
  \item We compare our algorithm with the optimum on various instances (Section \ref{sec:experiment}).
\end{itemize}

The additional contribution of this paper, beyond our conference version~\cite{janaPacking}, lies in introducing DDP with battery service stations, which extends the definition of DDP, and in developing approximation algorithms for this variant.

\subsection{Related Work}
Ground vehicles are widely employed in various operations, including the delivery of goods. The problem of serving points of interest (in this case, customer locations) using vehicles is known in the literature as the \textit{routing problem}. Two main variants have been explored in this problem. One is \textit{traveling salesman problem} (TSP), where an uncapacitated vehicle is used to minimize the total time to accomplish the services. The other one is \textit{vehicle routing problem} (VRP), where multiple capacitated vehicles are used to minimize the total makespan.
Several variants \cite{laporte1992traveling},  \cite{anoherworkLaporte} have been studied for these two problems including the time window to serve a customer and locating depots for refilling the capacitated trucks. Additionally, the online version of the vehicle routing problem \cite{jaillet2008online}, where the customer's requests appear online, has also been studied with a similar objective as earlier. VRP with battery recharging and battery swapping has also been studied by the authors in \cite{conrad2011recharging, mao2020electric, verma2018electric}. The results in those papers mainly focus on the heuristics and experimental results.

%Nowadays, use the of drones for delivering goods has become more interesting due to the huge increase in e-commerce sales.
Since drones have limited mobility, integrating trucks enhances delivery efficiency. Extensive research has been conducted on collaborative delivery systems involving both drones and trucks.
This hybrid model of delivery comes into consideration when Murray and Chu \cite{MURRAY201586} introduced the \textit{flying sidekicks travelling salesman problem} (FSTSP), a more extended version of TSP, where customers need to be visited either by truck or by a drone starting from the depot. Meanwhile, the drone begins the journey either from the depot or from any customer location, and the same for the meeting occasion. Here, the authors aim to minimize the total makespan to ensure the completion of all deliveries. They proposed an optimal \textit{mixed integer linear programming} (MILP) formulation and two heuristic solutions.
Crisan and Nechita \cite{CRISAN201938} proposed another effective heuristic for FSTSP by using the solution for TSP. Murray and Raj \cite{Murray2019TheMF} extended FSTSP for multiple drones.
%They proposed MILP formulation for the problem and a heuristic solution with numerical tests.
Daknama and Kraus \cite{Daknama2017VehicleRW} considered the budgeted FSTSP model with the limited flying endurance of drones. A rechargeable area on the truck's roof where the drones can charge after completing each delivery and fly for the next. The authors proposed a heuristic algorithm for scheduling the truck and drones. Wang \emph{et al.} \cite{wang2017vehicle} studied the budgeted drone-delivery model from the worst-case scenario. They proposed a few worst-case factor analyses depending on the number of drones and the relative speed of the drone and the truck. Authors in \cite{ahani2020age}, \cite{kim2018hybrid}, \cite{park2017battery} studied several UAV-based models with recharging policy and proposed heuristics with the final objective of optimizing the total traveling cost, or profit, or energy consumption, etc.
 Delivery by drones only came to the attention in \cite{Boysen2018DroneDF}, where the objective is to find the launch and meet point for delivery among a set of stopping points on the trucks' route such that the total makespan for completing all the deliveries is minimized. The assumption for this problem was the knowledge of the truck's route but without any battery constraint of the drones. 

Sorbelli \emph{et al.} \cite{Betti} proposed a
\textit{multiple drone-delivery scheduling
problem} (MDSP), where a truck and multiple drones cooperate for package delivery in the last mile.
They assumed that the launching and rendezvous points for each delivery were given. 
The authors provided approximation algorithms for the single-drone and multiple-drone cases. Later, Jana and Mandal \cite{janaSchedulingJournal} proposed a few constant-factor approximation algorithms and an exact algorithm to address the MDSP. In the MDSP, some customers may not be served by any drones due to the model's reliance on a fixed set of drones. However, our objective in this paper is to schedule all the deliveries. We have also introduced a battery-recharging policy, with charging stations located along the truck's route. A fully or partially drained drone can recharge itself to resume its delivery journey. Our proposed problem aims to optimize the number of rechargeable drones, assuming the truck's and drone's routes are predefined. Consequently, this problem differs from the well-studied VRP.  
%\vspace{1mm}
% \noindent{\bf Roadmap.} We discuss the
% model and preliminaries, along with the problem definition in Section \ref{section:model}. 
% Next, we present the hardness of the problem along with an ILP formulation in Section \ref{section:hardness}. 
% We propose all the approximation algorithms and provide the analysis in Section \ref{section:aproxAlgos}. 
% Finally, we conclude in Section \ref{section:concl}. Due to space constraints we defered some proofs on the Appendix \ref{appendix}.
\vspace{-5pt}
\section{Model and Problem Definition}
\label{section:model}
%\noindent In this section, we describe the Drone-Delivery Packing Problem with Charging Stations (DDPCS) model with an example. 
%All the variables used in this paper are listed in Table \ref{tab:notation}.
\vspace{-3pt}
\noindent \textbf{Model:} Let $\mathcal{N} = \{1, 2, \ldots, n\}$ denote the set of \textit{deliveries}, where each delivery $j \in \mathcal{N}$ is associated with a customer located at $\delta_j$.
The drones are homogeneous, meaning that each is equipped with an identical battery capacity (or energy budget) denoted by $B$.
Initially, all the drones are at the company's warehouse (depot). A truck containing all the drones departs from the depot and follows a pre-decided path. To complete a delivery $j$ at position $\delta_j$, a drone takes off from the truck at a specific \textit{launching location} $(\delta_j^L)$, and after delivering the package at $\delta_j$, it meets with the truck again at a designated \textit{rendezvous location} $(\delta_j^R)$.
After completing all the deliveries, the truck, carrying all the drones, returns to the depot.
%Along the truck's path, there is a set of battery charging stations where any set of drones can recharge. Let $\mathcal{R} =\{1, 2, \cdots, r\}$ denote the collection of such stations positioned at $\delta_{\ell}^c$ for each $l \in \cR$. 
Let the truck commence its journey from the depot with all the drones at time $t_0$. We denote $t_j^L$ and $t_j^R$ as the times when the truck arrives at the location $\delta_j^L$ and $\delta_j^R$, respectively. We term $t_j^L$ as the \textit{launching time}, $t_j^R$ as the \textit{rendezvous time} and $I_j = [t_j^L, t_j^R]$ as the \textit{delivery time interval} for the delivery $j \in \mathcal{N}$.
$\cI = \{I_1, I_2, \cdots, I_n\}$ be the \textit{delivery time interval set} for the set of deliveries $\mathcal{N}$.
Furthermore, there is a real-valued cost function $cost(\cdot)$ that maps each of the interval $I_j \in \cI$ to $(0, B]$.
The value of $cost(I_j)$ refers to the energy \textit{cost} incurred by a drone to complete delivery $j \in \mathcal{N}$. This cost is measured in terms of the battery consumption by the drone to complete the delivery.
%Let $t_{\ell}^A$ and $t_{\ell}^D$ denote the \textit{arrival} and \textit{departure} times of the truck at charging station $l \in \mathcal{R}$, respectively. We refer to $I_{\ell}^c = [t_{\ell}^A, t_{\ell}^D]$ as the \textit{charging time interval} for the charging station $l \in \mathcal{R}$, during which any set of drones can fully recharge themselves at $l$. 
%A drone, whether fully or partially drained, must spend the entire interval $I_{\ell}^c$ to become fully charged.
%These intervals can also be interpreted as the time required to swap a drained battery with a fully charged one available at the station.
%Let, $\mathcal{I}^c = \{I_1^c, I_2^c, \cdots, I_r^c\}$ be the set of charging time intervals for the set of stations in $\cR$.
%We further assume that no two intervals in $\cR$ are overlapping with each other and they are sorted in $\cR$, i.e., $t_1^A < t_1^D < t_2^A < t_2^D < \cdots < t_r^A < t_r^D$. In addition, if a drone decides to recharge at $l \in \mathcal{R}$, it can not schedule any delivery $j \in \mathcal{N}$ where $I_j \cap I_{\ell}^c \neq \emptyset$.
Also, emphasize that the truck moves uni-directionally along its pre-defined route. So, if $P$ and $Q$ are any two points on the truck's path, where $Q$ occurs later than $P$, then $t_P < t_Q$, where $t_P$ and $t_Q$ are the times when the truck arrives at the positions $P$ and $Q$, respectively.

Any drone can be assigned for multiple deliveries $S$ ($S \subseteq \cI)$ constraints to the battery budget of the drone and the compatibility of the delivery time intervals. Any two delivery time intervals $I_j$ and $I_k$ are said to be \textit{compatible} or \textit{conflict-free} if $I_j \cap I_k = \emptyset$. Otherwise, they are in \textit{conflict}.
%A delivery time interval $I_j$ is \textit{compatible} with a charging time interval $I_{\ell}^c$, if $I_j \cap I_{\ell}^c = \emptyset$.
Two deliveries in $\mathcal{N}$ are said to be in conflict if the respective intervals conflict. Any set of delivery time intervals $S \subseteq \cI$ is said to be \textit{compatible} if all pairs of intervals in it are compatible. A compatible set of deliveries $S \subseteq \cI$ is said to be \emph{feasible} if at any time the addition of a new delivery doesn't violate the battery constraint. A singleton set $\{I_j\}$ is trivially a feasible set and can be assigned to any drone $i$ with battery capacity $B$. After the assignment, the remaining battery capacity of the drone $i$ is decreased by $cost(I_j)$ and equals $(B - cost(I_j))$.
We use $S_i^t$ to denote the feasible set of deliveries and $rem_i^t$ to denote the remaining battery capacity of the drone $i$ at time $t$.
For $I_j \notin S_i^t$, the set $S_i^t \cup \{I_j\}$ is feasible if $S_i^t \cup \{I_j\}$ is compatible and $cost(I_j) \leq rem_i^t$. We define $\W(S) = \sum_{I_j \in S} cost(I_j)$ as the energy cost for the set $S \subseteq \cI$.

%A drone $i$ may recharge at station $l$ between two consecutive delivery intervals $I_j$ and $I_k$ (where $t_j^L < t_k^L$), provided the compatibility constraint $I_j \cap I_{\ell}^c = I_k \cap I_{\ell}^c = \emptyset$ holds.
%In that case, after the departure time $t_{\ell}^D$ from the station $l \in \cR$, the remaining battery of the drone $i$ is set to $B$ again. So an \textit{feasible assignment} of a drone $i$ is the union of some delivery time intervals and charging time intervals, where each charging interval separates two feasible sets of deliveries. We formally define \textit{feasible assignment} of a drone along with the problem statement below. A drone is called \textit{used} if at least one delivery is assigned to it. Note that the superscript $^c$ is used with different variables to distinguish the context of charging stations from that of deliveries. We will explicitly use the indices $i$, $j$, and $l$ to mean a drone, a delivery, and a charging station, respectively.
%We now discuss about two types of battery charging policy in the following.
In addition to the deliveries, there is a set of \emph{battery service stations} along the truck route where a drone can either recharge depleted batteries or swap them with the fully charged ones. Let $\mathcal{R} = \{1, 2, \ldots, r\}$
denote the set of such stations, where station $\ell \in \mathcal{R}$ is located at position $\delta_{\ell}^c$.
Let $t_{\ell}^A$ and $t_{\ell}^D$ be the \emph{arrival} and \emph{departure} times of the truck at station $\ell \in \mathcal{R}$, respectively.
We denote the interval $[t_{\ell}^A, t_{\ell}^D]$ as $I_{\ell}^c$ and referred to as the \textit{waiting time interval} for the truck at the station $\ell \in \mathcal{R}$.
We use $\cI^c$ to represent the set of all waiting time intervals $\{\cI_1^c, \cI_2^c, \cdots, \cI_r^c\}$ for the set of stations $\cR$.
We assume that no two intervals in $\cI^c$ are overlapping with each other and they are sorted in $\cR$, i.e., $t_1^A < t_1^D < t_2^A < t_2^D < \cdots < t_r^A < t_r^D$.
We further assume that all waiting time intervals are sufficiently short compared to the delivery time intervals, so that no delivery interval $I_j$ is completely contained within a waiting time interval. If such a delivery interval exists, it must be located very close to the truck route. Therefore, we neglect these intervals for drone delivery and assume that they can be served directly by the truck.
Moreover, we assume that no delivery interval intersects multiple waiting intervals. This assumption is reasonable because battery service stations are typically located far apart. If a delivery interval overlaps two waiting intervals, the corresponding delivery location would likely be far from the truck route, making the delivery cost extremely high and impractical for drone delivery.
%A drone $i$ may recharge at station $l$ between two consecutive delivery intervals $I_j$ and $I_k$ (where $t_j^L < t_k^L$), provided the compatibility constraint $I_j \cap I_{\ell}^c = I_k \cap I_{\ell}^c = \emptyset$ holds.
%In that case, after the departure time $t_{\ell}^D$ from the station $l \in \cR$, the remaining battery of the drone $i$ is set to $B$ again.

We consider two variants of battery service stations. If a station is used for swapping partially or fully depleted drone batteries with fully charged ones, we refer to it as a \emph{swapping station}. On the other hand, if a station is used for recharging batteries, we refer to it as a \emph{charging station}. The detailed assumptions on the battery stations are described below.

\noindent\emph{Assumptions on Swapping Policy:}
For a swapping station $\ell \in \mathcal{R}$, we explicitly call the waiting time interval $I_{\ell}^c = [t_{\ell}^A, t_{\ell}^D]$ as the \emph{swapping time interval}. In this case, if a drone chooses to swap its depleted battery with the fully charged ones at the station $\ell \in \cR$, it must spend the entire interval $I_{\ell}^c$ at the station. This is because the battery swapping operation requires a fixed amount of time, regardless of the remaining capacity of the drone. Since different stations may have different efficiencies, the lengths of the swapping time intervals may vary across stations.
Furthermore, if a drone chooses to swap its battery at station $\ell \in \mathcal{R}$, due to compatibility, it cannot be scheduled for any delivery $j \in \mathcal{N}$ whose delivery interval $I_j$ intersects with $I_{\ell}^c$.% i.e., $I_j \cap I_{\ell}^c \neq \emptyset$.

\noindent\emph{Assumptions on Recharging Policy:}
In this model, a drone that decides to recharge at a charging station $\ell \in \mathcal{R}$ is allowed to select a sub-interval $I_{\ell}^{c'} = [t_{\ell}', t_{\ell}''] \subseteq I_{\ell}^c$, which we refer to as the \textit{charging time interval} for that drone at the station $\ell$. Two drones can select different charging time intervals at the same station $\ell$. The amount of battery capacity regained by the drone depends on the length of $I_{\ell}^{c'}$.
We assume the existence of a function $Ch_{\ell}(\cdot)$ that takes input as the chosen interval $I_{\ell}^{c'}$ and the remaining battery capacity $rem^{t_{\ell}'}_i$ of a drone $i$ at the time $t_{\ell}'$. Then outputs $rem^{t_{\ell}''}_i$, the battery capacity achieved after recharging (after the time $t_{\ell}''$) at station $\ell$. We assume that, for any value of $rem^{t_{\ell}'}_i$,
$Ch_{\ell}(I_{\ell}^c, rem^{t_{\ell}'}_i) = B$, i.e., if a drone spend entire duration of the waiting time of the truck at the station $\ell$ for its recharging, then the drone will be fully charged. Furthermore, for any sub-interval $I_{\ell}^{c'} \subseteq I_{\ell}^c$ and for any value of $rem^{t_{\ell}'}_i$,~ $Ch_{\ell}(I_{\ell}^{c'}, rem^{t_{\ell}'}_i) \leq B$, i.e., the post battery level after recharging can not exceed the maximum attainable capacity $B$. In addition, if a drone chooses the interval $I_{\ell}^{c'}$ for charging, then it can not be scheduled for any delivery $j$ whose delivery interval $I_j$ intersects with $I_{\ell}^{c'}$.

At any station, multiple drones can simultaneously swap or recharge their batteries. A \emph{feasible assignment} of a drone consists of both the delivery intervals as well as swapping (charging) intervals subject to the \emph{compatibility} (no two intervals are in conflict) and \emph{budget constraints} (at any time, new inclusion of a delivery does not exceed the remaining capacity of the drone). 

In this paper, we consider three variants of the problem, depending on the compatibility characteristics of the given delivery time interval set $(\cI)$ and the existence of battery service stations $(\cR)$ along the truck route. If at least one pair of intervals in $\cI$ is in conflict, then we call the set $\cI$ \emph{conflicting}; otherwise, \emph{non-conflicting}. If there are no battery service stations along the truck route, i.e., $\cR = \emptyset$, we refer to this problem as the \emph{Drone Delivery Packing Problem without Battery Stations}, or, in short, DDP-NS. On the other hand, if the given delivery time interval set is non-conflicting, we refer to this problem as the \emph{Drone Delivery Packing Problem without Conflicting Intervals}, or, in short, DDP-NC. Whereas, in general, if the given delivery interval set is conflicting or non-conflicting and $\cR \ne \emptyset$, we call the problem \emph{Drone Delivery Packing Problem with Battery Stations and Conflicting Intervals}, or, in short, DDP-SC.
We formally define all three variants of the problem below.
A drone is called \textit{used} if at least one delivery is assigned to it. Note that the superscript $^c$ is used with different variables to distinguish the context of battery stations from that of deliveries. We explicitly use the indices $i$, $j$, and $\ell$ to mean a drone, a delivery, and a battery station, respectively. All the variables used in this paper are listed in Table \ref{tab:notation}. %(Section \ref{sec:appendix}).

\begin{problem}
\textbf{Drone Delivery Packing Problem without Battery Stations (DDP-NS).}
\label{def:DDP-NS}
Given a set of delivery time intervals $\cI = \{I_1, I_2, \cdots, I_n\}$ corresponding to the set of deliveries $\mathcal{N}$, and a cost function $cost(\cdot)$ defined on $\cI$, the objective for DDP-NS is to find the minimum number of drones with identical battery capacity such that each delivery is completed by exactly one drone and the assignment of each used drone is feasible.
\end{problem}

\begin{problem}
\textbf{Drone Delivery Packing Problem without Conflicting Intervals (DDP-NC).}
\label{def:DDP-NC}
Given a set of non-conflicting delivery time intervals $\cI = \{I_1, I_2, \cdots, I_n\}$ corresponding to a set of deliveries $\mathcal{N}$, a cost function $cost(\cdot)$ defined on $\cI$, and a set of waiting time intervals $\cI^c =\{I_1^c, I_2^c, \cdots, I_r^c\}$ corresponding to a set of battery service stations $\cR$, the objective for DDP-NC is to minimize number of drones with identical battery capacity such that each delivery is completed by exactly one drone and the assignments of each used drone is feasible.
\end{problem}

\begin{problem}
\textbf{Drone Delivery Packing Problem with Battery Stations and Conflicting Intervals (DDP-SC).}
\label{def:DDP-SC}
Given a set of delivery time intervals $\cI = \{I_1, I_2, \cdots, I_n\}$ corresponding to the set of deliveries $\mathcal{N}$, a cost function $cost(\cdot)$ defined on $\cI$, and a set of waiting time intervals $\cI^c =\{I_1^c, I_2^c, \cdots, I_r^c\}$ corresponding to a set of battery service stations $\cR$, the objective for DDP-SC is to find the minimum number of drones with identical battery capacity such that each delivery is completed by exactly one drone and the assignment of each used drone is feasible.
\end{problem}

%In other words, find the smallest cardinality set of drones $\cM^* = \{1, 2, \cdots, m^*\}$ along with a family of assignments $\{\A_1^*, \A_2^*, \cdots, \A_{m^*}^*\}$, where $\A^*_i \subseteq \cI \cup \cIc$ is a feasible assignment for the drone $i \in \cM^*$, and each delivery interval in $\cI$ is associated with a unique $\A_i^*$,  $\forall 1 \leq i \leq m^*$.

We denote $OPT_{NS}$, $OPT_{NC}$, and $OPT_{SC}$ to refer to the minimum number of drones needed for DDP-NS, DDP-NC, and DDP-SC, respectively.
%Let, $\epsilon_{min} = \frac{1}{B} \min_{j \in \mathcal{N}} cost(I_j)$ $(0 < \epsilon_{min} \leq 1)$.
%We assume that $\epsilon_{min} \leq \frac{1}{2}$. Otherwise, every delivery has cost more than $\frac{B}{2}$. This implies we always need a new drone or a fully recharged used drone to complete all the deliveries, which is the optimum solution. So, we treat this as a trivial case. 
%Also note that if the maximum clique size of the interval graph $G$, is $\omega$, then the optimum solution size of DDPCS (or DDP) is at least $\omega$.
Let $\omega$ denote the maximum number of pairwise conflicting intervals in $\mathcal{I}$, i.e., the size of (in cardinality) the largest mutually conflicting subset of intervals in $\mathcal{I}$. Then the optimal solution size for any of the problems is at least $\omega$.
This is because there must be a time $t$ that is contained in $\omega$ distinct intervals, and to accomplish those deliveries, we need $\omega$ many distinct drones. Thus, 
\vspace{-3pt}
\begin{equation}
\label{eq:rel-w-opt}
    \omega \leq OPT_{NS};~ \omega \leq OPT_{NC};~ \omega \le OPT_{SC}
\end{equation}
\vspace{-3pt}
We now describe the proposed model using the example specified below.
%\vspace{-8mm}
% \begin{figure}[h]
%     \centering  \includegraphics[width=6cm]{intervals_recharge.pdf}
%     \caption{Delivery time intervals $\cI = \{I_1, I_2, \cdots, I_8\}$ where 
%     $I_j = [t_j^L, t_j^R]$ for $1 \leq j \leq 8$ and\\ charging time intervals $\cIc = \{I_1^c, I_2^c\}$ where $I_{\ell}^c = [t_{\ell}^A, t_{\ell}^D]$ for $1 \leq l\leq 2$.}
%     \label{Delivery_Interval}
% \end{figure}

%\begin{figure}%[!ht]
%\begin{minipage}{0.59\textwidth}

%\usepackage{subfig}

\begin{figure}[htbp]
\centering
\begin{minipage}{0.5\columnwidth}
    \centering
    \includegraphics[width=0.8\linewidth]{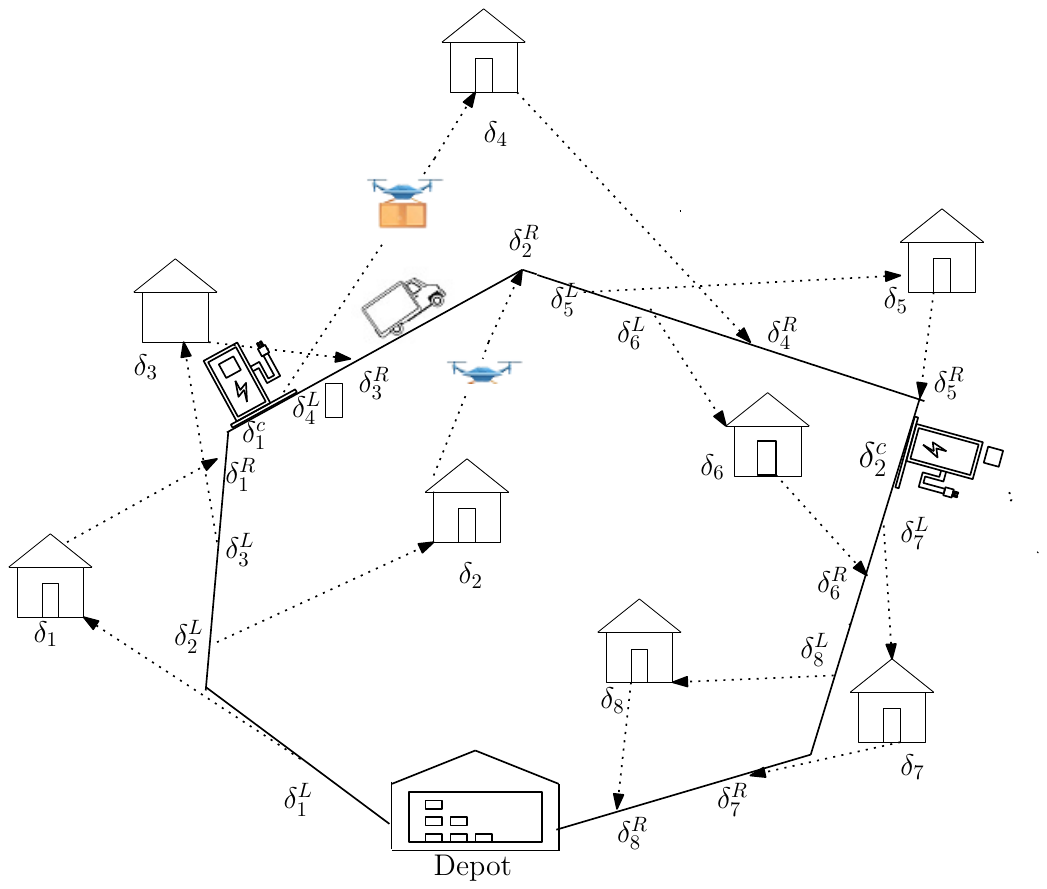}
    \caption{A drone-delivery model.}
    \label{Model}
\end{minipage}
\hfill
\begin{minipage}{0.47\columnwidth}
    \centering
    \includegraphics[width=0.8\linewidth]{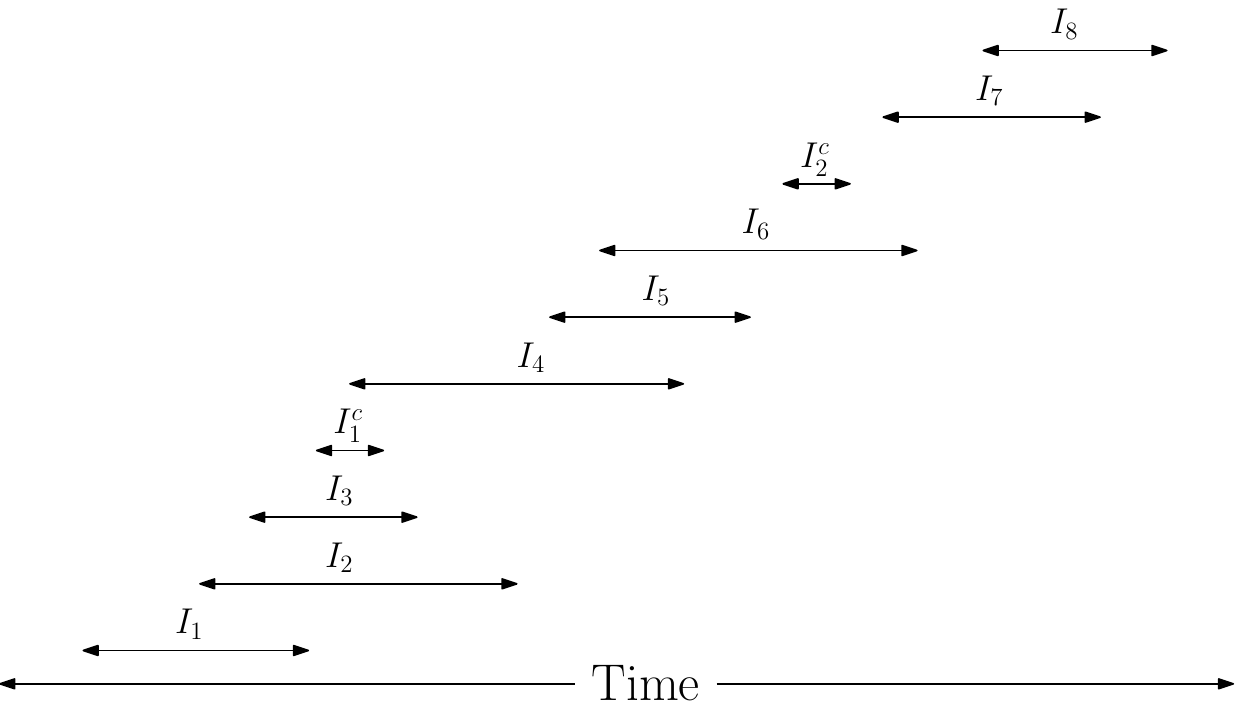}
    \caption{Representation of delivery time intervals $\cI = \{I_1, I_2, \cdots, I_8\}$ and swapping time intervals $\cIc = \{I_1^c, I_2^c\}$}
     \label{Delivery_Interval}
\end{minipage}
\end{figure}

% \begin{figure}%[h]
%     \centering
%     \includegraphics[width=5.9cm]{drone_path_recharge.pdf}
%     \caption{A drone-delivery model.}
%     \label{Model}
% \end{figure}
   
% %\end{minipage} \hfill
% %\begin{minipage}{0.39\textwidth}
% \begin{figure}%[H]
%     \centering  \includegraphics[width=5.33cm]{intervals_recharge.pdf}
%    % \caption{Representation of delivery time intervals $\cI = \{I_1, I_2, \cdots, I_8\}$ where 
%     % $I_j = [t_j^L, t_j^R]$ for $1 \leq j \leq 8$ and waiting time intervals $\cIc = \{I_1^c, I_2^c\}$ where $I_{\ell}^c = [t_{\ell}^A, t_{\ell}^D]$ for $1 \leq l\leq 2$.}
%     \caption{Representation of delivery time intervals $\cI = \{I_1, I_2, \cdots, I_8\}$ and swapping time intervals $\cIc = \{I_1^c, I_2^c\}$}
%     \label{Delivery_Interval}
%     \end{figure}
% \end{minipage}
% \end{figure}

% \begin{figure}[h]
% \centering
% \begin{tikzpicture}
% [node distance={15mm}, thick,main/.style = {draw, circle}] 
% \node[main] (1) {$I_1$};
% \node[main] (2) [right of=1] {$I_2$};
% \node[main] (3) [right of=2] {$I_3$};
% \node[main] (4) [right of=3] {$I_4$};
% \node[main] (5) [below of=4] {$I_5$};
% \node[main] (6) [left of=5] {$I_6$};
% \node[main] (7) [left of=6] {$I_7$};
% \node[main] (8) [left of=7] {$I_8$};
% \draw (1) -- (2);
% \draw (2) -- (3);
% \draw (3) -- (4);
% \draw (5) -- (4);
% \draw (5) -- (6);
% \draw (6) -- (7);
% \draw (7) -- (8);
% \draw (1) to [out=60,in=120,looseness=.5] (3);
% \draw (2) to [out=60,in=120,looseness=.5] (4);
% \draw (4) -- (6);
% \end{tikzpicture} 
% \caption{Interval Graph, $G=(V,E)$.}
% \label{Interval_Graph}
% \end{figure}

%\vspace{2mm}
\noindent{\bf Example.} Figure \ref{Model} shows an example of a drone-delivery model with eight delivery locations, two battery swapping stations. The solid lines in the figure represent the paths of the truck, while the dotted lines represent the paths of the drone. Figure \ref{Delivery_Interval} illustrates the delivery and swapping time intervals associated with the corresponding customers and battery stations shown in Figure \ref{Model}. 
Let $cost(I_1) = 6,~ cost(I_2) = 8,~ cost(I_3)= 4,~ cost(I_4) = 9,~ cost(I_5) = 5,~ cost(I_6) = 7,~ cost(I_7) = 5,~ cost(I_8) = 6$ and $B  = 10$. Then an optimal solution of DDP-SC for this instance uses $4$ drones with the family of assignments $\{\A_1, \A_2, \A_3, \A_4\}$, where $\A_1 = \{I_1, I_1^c, I_6\}, \A_2 = \{I_2\}, \A_3 = \{I_3, I_5, I_2^c, I_7\}$ and $\A_4=\{I_4, I_2^c, I_8\}$.
However, if the above instance does not include any battery station, then it becomes an instance of DDP-NS. An optimal solution for this instance of DDP-NS requires $6$ drones. One drone is scheduled to serve $I_3$ and $I_8$, and another drone is assigned to $I_5$ and $I_7$. Each remaining drone is assigned exactly one delivery time interval. Note that, for this particular instance, $\omega$, the size of the largest set of pairwise conflicting intervals, is three.

\noindent \textbf{Problem Hardness.} All three problems, DDP-NS, DDP-NC, and DDP-SC, are NP-hard, as these are the generalization of the classical \textit{bin packing problem} (BPP) \cite{Coffman1984}. To relate our problem to DDP, we can associate the given interval set with a set of items in BPP, and drones to bins. NP-hardness of the problems motivates us to design some approximation algorithms. We propose separate approximation algorithms for each of the problems in the subsequent sections. We start with an algorithm for DDP-NS, where there are no battery service stations along the truck route. We use `delivery', `delivery interval' and `delivery time interval' interchangeably as synonyms in the rest of the paper.

\section{Approximation Algorithm for DDP-NS}
\label{sec:ddp-ns}Here we demonstrates \textsc{AlgoFor-DDP-NS} (Algorithm \ref{Alg:approx_ddp_color}) using coloring to solve the DDP-NS. The algorithm starts with the construction of an an interval graph $G = (V, E)$ from the given set of delivery intervals $\cI$.
The vertices of $G$ represent the intervals in $\cI$, and two vertices are adjacent if the corresponding two intervals are in conflict. Therefore, $|V| = |\cI| = n$ and we define $n_e = |E|$, total number of conflicts among the intervals in $\cI$.
This construction inherently produces the interval graph, and being so $\chi(G)$ = $\omega(G)$ \cite{west}, where $\chi(G)$ is the \textit{chromatic number} and $\omega(G)$ is the \textit{maximum clique size} of $G$. Moreover, the vertices of $G$ can be colored by $\omega(G)$ many colors in polynomial time such that adjacent vertices get the different color.
We write $\omega$ in place of $\omega(G)$ henceforth, as this is equivalent to the size of the largest set of pairwise conflicting intervals in $\cI$.
From one of the $\omega$-coloring of $G$, we can partition the set $\cI$ into $\omega$ many sets, each corresponding to the same colored vertices of $G$. 
Let $\{\J_1, \J_2, \cdots, \J_{\omega}\}$ be the partition set, where $\J_k \subseteq \cI$ is a compatible (non-conflicting) set of intervals associated with color $k$ $(1 \leq k \leq \omega)$.

For each $k$ $(1 \le k \le \omega)$, we determine the number of drones required to schedule all intervals in $\mathcal{J}_k$, along with their corresponding schedules, using \textsc{GreedyAlgo} (Algorithm~\ref{GreedyAlgo_mod}). The algorithm follows a greedy approach similar to the algorithm used for \textit{bin packing problem} \cite{Coffman1984}. 
The algorithm selects any of the intervals and assigns it to one of the drones.
%Firstly, it sorts the deliveries based on their launch times. Then, it selects the delivery with the least launch time and assigns it to one of the drones.
The assigned drone is referred to as \textit{used} drone. Then it takes the remaining deliveries one by one and attempts to assign them to a previously used drone, subject to battery constraints. 
A new drone is introduced if any delivery does not fit into any of the drones previously used for the interval set $\mathcal{J}_k$.

We use a balanced binary search tree for efficiently implementing the greedy approach, where each \emph{node} in the tree corresponds to an \emph{used drone}.
The node's key $(node.key)$ is the drone's remaining battery capacity. Furthermore, we also store the index, and the set of assigned deliveries to the drone in the attributes, $node.index$, and $node.list$, respectively. We also use some operations in our algorithm.
The operation \textsc{ListInsert}($list$, $I_j$) is used to insert the delivery $I_j$ into the head of $list$.
Whereas \textsc{TreeUpdate}($root$, $node$, $rem$) is used to update $node.key$ by $rem$.
\textsc{TreeInsert}($root$, $rem$, $i$, $I_j$) is employed to insert a new $node$ with $node.key = rem$, $node.index = i$, and create a new list containing only one delivery $I_j$ and assign to $node.list$.
With the help of this tree, we can efficiently find either the index of an open drone to which we can assign a particular delivery (in addition to the existing assignment) or obtain confirmation to introduce a new drone. The pseudocode for this greedy approach, along with the data structure, is depicted in Algorithm \ref{GreedyAlgo_mod}. The detailed description implicitly establishes the correctness of the algorithm.

\begin{figure}[htbp]
\centering
\linenumbers
\begin{minipage}{0.51\columnwidth}
\begin{algorithm}[H]
\LinesNumbered
%\linenumbers
\footnotesize{
\textbf{Initialize:} $m_k = 0$, $root_k = \text{NULL}$\\
\For{each $I_j$ in $\J_k$}{
    $node \gets \textsc{Find}(root_k, cost(I_j))$\\
    \If{$node \neq \text{NULL}$}{
        \textsc{TreeUpdate}($root_k, node, node.key - cost(I_j)$)\\
        \textsc{ListInsert}($node.list, I_j$)\\
    }
    \Else{
        $m_k \gets m_k + 1$\\
        \textsc{TreeInsert}($root_k, B - cost(I_j), m_k, I_j$)\\
    }
}

\textbf{return} $\langle m_k, root_k \rangle$\;

\caption{\textsc{GreedyAlgo}($\J_k, cost(\cdot), B$)}}
\label{GreedyAlgo_mod}

\end{algorithm}
\end{minipage}
\hfill
\begin{minipage}{0.45\columnwidth}

\begin{algorithm}[H]
\caption{\textsc{Find}($node, cost(I_j)$)}
\footnotesize{
\If{$node = \text{NULL}$}{
    \textbf{return} $\text{NULL}$\;
}
\Else{
    \uIf{$node.key \geq cost(I_j)$}{
        \textbf{return} $node$\;
    }
    \ElseIf{$node.right \neq \text{NULL}$}{
        \textbf{return} \textsc{Find}($node.right, cost(I_j)$)\;
    }
    \Else{
        \textbf{return} $\text{NULL}$\;
    }
}
}
\label{Find_mod}
\end{algorithm}

\end{minipage}
\end{figure}

\noindent \textit{Description of Algorithm} \ref{GreedyAlgo_mod} : The inputs of the algorithm are a set of compatible delivery intervals $\mathcal{J}_k (1\le k \le \omega)$, the cost function $cost(\cdot)$, and the battery budget $B$ of the drone.
For assigning a delivery $I_j$ in $\J_k$, 
%according to their non-decreasing launching time,
algorithm calls \textsc{Find}($root_k, cost(I_j)$).
The subroutine \textsc{Find}() (Algorithm \ref{Find_mod}) finds the feasible $node$ in the tree that has remaining capacity at least $cost(I_j)$. If $root.key \geq cost(I_j)$, then the algorithm returns the $root$. Otherwise, the algorithm recurses on its right sub-tree. Note that, when $root.key > cost(I_j)$, the feasible node for the delivery $I_j$ may exists only on the right sub-tree of $root$, as all the $node$ in the left sub-tree of $root$ having $node.key \leq root.key < cost(I_j)$.
In case of the current node having $node.key < cost(I_j)$ with an empty right sub-tree, \textsc{Find()} returns $NULL$.

If \textsc{Find}($root_k, cost(I_j)$) returns $NULL$, a new drone is introduced with index $(m_k+1)$ (initially, $m_k = 0$) and assign the delivery $I_j$ there. Subsequently, a new $node$ is inserted in the tree.
%with $node.key = (B - cost(I_j))$; $node.index = (m_k+1)$ and create a list containing single interval $I_j$ and then assigned to $node.list$ (Line 10).
If \textsc{Find}($root_k, cost(I_j)$) returns a non-null pointer of a $node$ then the algorithm assigns the delivery $I_j$ at the $node$ and update the $node$ suitably.
%by decreasing $node.key$ by $cost(I_j)$ (Line 6). Additionally, we add $I_j$ in the list $node.list$ (Line 7).
At the end, algorithm returns $m_k$, the number of used drones for the interval set $\mathcal{J}_k$, along with the balanced binary search tree rooted at $root_k$, where each node stores the list of deliveries from $\mathcal{J}_k$.

Moreover, we use $k_i$ to denote the $i$-th used drone for the intervals set $\J_k$, and the corresponding assignment is denoted by $S_{k_i}$, where $1 \le k \le \omega$.
Let $\epsilon_k = \frac{1}{B}\max \limits_{I_j \in \J_k} ~cost(I_j)$ and $\epsilon'_k = \min\{\frac{1}{2}, \epsilon_k\}$.
We define $\epsilon_{min} = \frac{1}{B} \min \limits_{I_j \in \mathcal{J}_k}~cost(I_j)$ and $\epsilon_{max} = \max \limits_{1 \leq k \leq \omega} \epsilon'_k$. Then $\epsilon_{max} = \min\{\frac{1}{2}, \frac{1}{B}\max \limits_{j \in \cN}~cost(I_j)\}$. Now, we can state the following lemmas.

\begin{algorithm2e}
\DontPrintSemicolon
\textbf{Input:}$\cI = \{I_1, I_2, \cdots, I_n\}$; $cost(\cdot)$; and $B$.\;
 Construct an interval graph $G$ from the delivery time interval set $\cI$\;
 Find maximum clique size $(\omega)$ of $G$.\;
 Color all the vertices of $G$ with the colors $\{1, 2, \cdots, \omega\}$ such that no two adjacent vertices get the same color.\;
 Find $\J_k$ = \{Set of intervals in $\cI$ whose corresponding vertices in $G$ are colored with the color $k$\} $(1 \leq k \leq \omega)$.\;
 For each $\J_k$ $(1 \leq k \leq \omega)$, find number of drones, say $m_k$ and corresponding assignments, say $\mathcal{S}_k$ by using the Algorithm \ref{GreedyAlgo_mod}.\;
 Return $\sum_{k=1}^{\omega} m_k$ along with their corresponding assignments.\;
 \caption{\textsc{AlgoFor-DDP-NS}}\label{Alg:approx_ddp_color}
\end{algorithm2e}

\begin{lemma}
\label{lemma-color}
$cost(\J_k) \geq  ((m_k - 1) (B - \epsilon'_k B)) + \epsilon_{min} B$, where $m_k$ is the number of drones returned by Algorithm \ref{GreedyAlgo_mod} for the interval set $\J_k \subseteq \cI$, $\epsilon_{min} = \frac{1}{B} \min_{j \in \mathcal{N}} cost(I_j)$ and $cost(\J_k) = \sum \limits_{I_j \in \J_k} ~cost(I_j)$.
\end{lemma}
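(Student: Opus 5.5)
Since $\J_k$ is compatible (a single color class of the interval graph), only the battery budget governs the greedy assignment in Algorithm \ref{GreedyAlgo_mod}. The plan is a First-Fit-style bin packing argument. Fix the final assignments $S_{k_1},\dots,S_{k_{m_k}}$. We show that every drone except possibly one carries load strictly greater than $B-\epsilon'_kB$, and that the exceptional drone carries at least $\epsilon_{min}B$. Summing these bounds over the disjoint assignments gives
\[
cost(\J_k)=\sum_{i=1}^{m_k} cost(S_{k_i}).
\]
If $m_k\le 1$ the claim is immediate: the one drone holds at least one interval, and each interval costs at least $\epsilon_{min}B$. So assume $m_k\ge 2$.

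\textbf{Key step (load of opened drones).} Consider the moment drone $k_i$ (with $i\ge 2$) is opened by some interval $I_j$. At that moment \textsc{Find} returned NULL, so every earlier drone had remaining capacity $rem<cost(I_j)\le\epsilon_kB$. Remaining capacities only decrease, so each earlier drone finishes with load $>B-\epsilon_kB$. Applying this to the last opened drone $k_{m_k}$ shows that all of $k_1,\dots,k_{m_k-1}$ end with load $>B-\epsilon_kB$. When $\epsilon_k\le\frac12$ we have $\epsilon'_k=\epsilon_k$, and this bound suffices. The last drone has at least one interval, hence load $\ge\epsilon_{min}B$.

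\textbf{Main obstacle: the case $\epsilon_k>\frac12$.} Here $\epsilon'_k=\frac12$, so we need load $\ge B/2$ on all but one drone. I would use the standard First-Fit pairing argument. Suppose two drones $k_a$ and $k_b$ with $a<b$ both end with load $<B/2$. Every interval placed on $k_b$ has cost $\le$ load$(k_b)<B/2$, and $k_a$ always had remaining capacity $>B/2$. So when the first interval of $k_b$ arrived, it would have fit on $k_a$.

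For this to be a contradiction, \textsc{Find} must return \emph{some} feasible node whenever one exists. I would therefore first verify that \textsc{Find} fails only if no node has key $\ge cost(I_j)$. This holds because it searches the BST toward the maximum key, and the rightmost node carries the largest key. Granting this, at most one drone has load below $B/2$. If it is the last drone, the bound follows as before. Otherwise the last drone has load $\ge B/2\ge\epsilon_{min}B$, and the deficient drone still has load $\ge\epsilon_{min}B$. Swapping their roles in the sum gives the same total bound
\[
(m_k-1)(B-\epsilon'_kB)+\epsilon_{min}B .
\]
This completes the proof in both cases.
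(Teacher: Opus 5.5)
Your proposal is correct and follows the paper's route. It derives $(m_k-1)(B-\epsilon_kB)+\epsilon_{min}B$ from the moment the last drone is opened, adds a $B/2$ bound on all but one drone, and combines the two through $\epsilon'_k=\min\{\frac12,\epsilon_k\}$. Your $B/2$ step is more careful than the paper's, which asserts $cost(S_{k_i})\ge B/2$ for every $1\le i\le m_k-1$ on the grounds that ``any two assignments can be merged''; that claim can fail for an earlier drone (e.g.\ $B=10$ with costs $3$ then $8$), whereas your pairing argument via the completeness of \textsc{Find}, plus the role swap with the last drone, gives the required sum bound rigorously.
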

\begin{proof}
Let $I_{j}$ be the delivery for which the last drone $k_{m_k}$ (the $m_k$-th drone) was used for the set of deliveries $\mathcal{J}_k$. Then each of the used drones, except the drone $k_{m_k}$, has used battery capacity at least $(B - \epsilon_k B)$, as $cost(I_j) \leq \epsilon_k B$. Whereas drone $k_{m_k}$ has used battery capacity at least $\epsilon_{min} B$. So,
\vspace{-3pt}
\begin{equation}
\label{eq:10}
cost(\mathcal{J}_k) \geq  (m_k - 1) (B - \epsilon_k B) + \epsilon_{min} B.
\end{equation}
We also have, $cost(S_{k_i}) \geq \frac{B}{2}$ for $( 1 \leq i \leq m_k-1)$. Otherwise, any two assignments can be merged into a single one. Therefore, 
\vspace{-3pt}
\begin{equation}
\label{eq:11}
cost(\mathcal{J}_k) \geq  (m_k - 1) \frac{B}{2} + \epsilon_{min} B.
\end{equation}
\vspace{-5pt}
Thus from (\ref{eq:10}) and (\ref{eq:11}), we have
$cost(\mathcal{J}_k) \geq  ((m_k - 1) (B - \epsilon'_k B)) + \epsilon_{min} B$, \text{as} $\epsilon'_k = \min\{\frac{1}{2}, \epsilon_k\}$, \text{implies} $(1-\epsilon'_k) = \max\{\frac{1}{2}, 1 - \epsilon_k\}$.%\\
\end{proof}
\begin{corollary}
\label{cor:weight}
    $cost(\mathcal{J}_k) \geq  (m_k - 1) (1 - \epsilon_{max})B + \epsilon_{min} B$, where $m_k$ is the number of drones returned by Algorithm \ref{GreedyAlgo_mod} for the interval set $\J_k \subseteq \cI$, where $\epsilon_{max} = \max \limits_{1 \le k \le \omega} \epsilon_{k'} = \min\{\frac{1}{2}, \frac{1}{B}\max \limits_{j \in \cN}~cost(I_j)\}$ and $\epsilon_{min} = \frac{1}{B} \min \limits_{j \in \mathcal{N}} ~cost(I_j)$.
\end{corollary}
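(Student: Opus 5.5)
This follows directly from Lemma~\ref{lemma-color}; it only requires replacing the per-class parameter $\epsilon'_k$ by the global parameter $\epsilon_{max}$. Lemma~\ref{lemma-color} gives
\[
cost(\J_k) \geq (m_k - 1)(B - \epsilon'_k B) + \epsilon_{min} B = (m_k-1)(1-\epsilon'_k)B + \epsilon_{min}B .
\]
It therefore suffices to show that $(1-\epsilon'_k) \ge (1-\epsilon_{max})$, and that this inequality survives multiplication by the nonnegative factor $(m_k-1)B$.

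For the first claim, use the definition $\epsilon_{max} = \max_{1\le k\le \omega}\epsilon'_k$, which gives $\epsilon'_k \le \epsilon_{max}$ for every $k$, and hence $1-\epsilon'_k \ge 1-\epsilon_{max}$. I would also check the stated identity $\epsilon_{max} = \min\{\frac12, \frac1B\max_{j\in\cN} cost(I_j)\}$. Since $\{\J_1,\dots,\J_\omega\}$ partitions $\cI$, we have $\max_k \epsilon_k = \frac1B\max_{j\in\cN} cost(I_j)$. Moreover, $x\mapsto\min\{\frac12,x\}$ is monotone, so $\max_k \min\{\frac12,\epsilon_k\} = \min\{\frac12,\max_k\epsilon_k\}$.

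For the second claim, Algorithm~\ref{GreedyAlgo_mod} returns $m_k\ge 1$ whenever $\J_k\neq\emptyset$, and every color class in an $\omega$-coloring with exactly $\omega$ colors can be taken to be nonempty. Hence $m_k - 1 \ge 0$, and multiplying $1-\epsilon'_k \ge 1-\epsilon_{max}$ by $(m_k-1)B$ preserves the inequality. Adding $\epsilon_{min}B$ to both sides yields the corollary. The $\epsilon_{min}$ term needs no change, because the lemma already uses the global minimum over $\cN$.

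The only point needing care is this sign condition $m_k\ge 1$; everything else is a direct substitution.
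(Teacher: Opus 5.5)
Your proposal is correct and matches the paper's argument. The paper gives no separate proof: the corollary follows from Lemma~\ref{lemma-color} because $\epsilon'_k \le \epsilon_{max}$ and $m_k \ge 1$, so replacing $\epsilon'_k$ by $\epsilon_{max}$ only weakens the bound. Your two extra checks are details the paper leaves unstated: the identity for $\epsilon_{max}$, and the nonemptiness of each $\J_k$, which in fact holds because $\chi(G)=\omega$ forces every class of an $\omega$-coloring to be used.
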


% \begin{corollary}
%     \label{cor:cost-lowerBound}
%     $cost(\mathcal{I}) > (m-1)\frac{B}{2}$. This holds from the Eq. \ref{eq:10} and summing over all $k: 1\le k \le \omega$.
% \end{corollary}

\begin{lemma}
\label{lem:cor+compl-ddp-ns}
Algorithm \ref{Alg:approx_ddp_color} returns a feasible set of assignments in $\mathcal{O}(n\log n + n_e)$ time, where $n$ is the number of deliveries and $n_e$ is the total number of conflicts among the intervals.% in $\cI$.
\end{lemma}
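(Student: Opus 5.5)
The proof has two parts: feasibility and running time.

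\emph{Feasibility.} Every delivery ends up in exactly one assignment. The sets $\J_1,\dots,\J_\omega$ partition $\cI$, and \textsc{GreedyAlgo} puts each $I_j\in\J_k$ into exactly one node list: either the node returned by \textsc{Find}, or a freshly inserted node. Each drone's assignment is a subset of a single color class $\J_k$. Since $\J_k$ is an independent set in $G$, the assignment is compatible.

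For the budget constraint, I would argue by induction that each $node.key$ equals $B$ minus the cost of its list and is always nonnegative. A new node receives $B-cost(I_j)\ge 0$ because $cost(I_j)\le B$. An existing node is updated only after \textsc{Find} has certified $node.key\ge cost(I_j)$. Because there are no stations, a drone's battery is never replenished. So a list is feasible exactly when its total cost is at most $B$, which this invariant guarantees.

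I also need \textsc{Find} to be sound: it returns either NULL or a node with $key\ge cost(I_j)$. This is immediate, since it only returns a node after checking that inequality. It is not necessary that it returns NULL only when no fitting node exists; opening an extra drone never breaks feasibility. Thus the returned family of $\sum_k m_k$ assignments is feasible.

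\emph{Running time.} I would account for the steps in order.
\begin{itemize}
\item Building $G$ explicitly: sort the $2n$ endpoints in $\mO(n\log n)$ time. Sweep left to right, keeping the set of currently open intervals. When an interval starts, add an edge to each open interval. This takes $\mO(n+n_e)$ time.
\item Computing $\omega$ and an optimal coloring: use the standard greedy sweep on intervals sorted by left endpoint. When an interval closes, its color goes back onto a free-color stack. When an interval opens, it takes a free color, or a new one if the stack is empty. This is $\mO(n\log n)$ in total, and the number of colors used equals the maximum number of simultaneously open intervals, which is $\omega$.
\item Extracting the classes $\J_k$: one pass, $\mO(n)$.
\item \textsc{GreedyAlgo} on $\J_k$: the balanced tree has at most $|\J_k|$ nodes, so its depth is $\mO(\log|\J_k|)$. \textsc{Find} follows a single root-to-leaf path. \textsc{TreeUpdate} (implemented as delete and reinsert) and \textsc{TreeInsert} each cost $\mO(\log|\J_k|)$, and \textsc{ListInsert} costs $\mO(1)$. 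Summing over all $I_j$ and all $k$ gives $\sum_k \mO(|\J_k|\log|\J_k|)=\mO(n\log n)$.
\end{itemize}
Adding these up gives $\mO(n\log n+n_e)$.

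The main obstacle is the \textsc{Find} analysis. It only ever descends to right children, so I must keep it within $\mO(\log n)$ per call; the single-path argument above does that, given the tree is kept balanced. Beyond this, I must take care that the coloring step is implemented by the interval sweep rather than a generic graph algorithm, so that the $n_e$ term arises only from constructing $G$.
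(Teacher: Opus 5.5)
Your proof is correct and follows the paper's argument. Compatibility comes from the color classes, the budget condition from the key check before each update, and the running time from a step-by-step accounting with an $\mathcal{O}(n_k\log n_k)$ tree cost per class. The only difference is minor: you build and color via an interval sweep (sorting endpoints in $\mathcal{O}(n\log n)$), whereas the paper cites an $\mathcal{O}(n+n_e)$ chordal-graph routine; both fit the stated bound, and you correctly write the test as $node.key \ge cost(I_j)$ where the paper's text has the inequality reversed.
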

\begin{proof}
    An assignment returned by the Algorithm \ref{Alg:approx_ddp_color} consists of the intervals with the same color. So, they are compatible with each other. Now, to assign a delivery $I_j$ to a drone, the algorithm finds a $node$ in the balanced binary search tree with $node.key \leq cost(I_j)$. If such $node$ is found, the inclusion of $I_j$ to $node.list$, the list of intervals previously assigned to the $node$, keeps the budget constraint satisfied.
    Consequently, all the deliveries in the $node.list$ constitutes a feasible assignment, and can be delivered by a single drone. 

    \noindent \textit{Time Complexity:} We can construct an interval graph $G$ (line 2) from the given interval set $\cI$ in $\mathcal{O}(n + n_e)$ time. Then, finding the maximum clique size ($\omega$) of the graph (line 3) and coloring all the vertices of $G$ (line 4) using $\omega$ colors can be accomplished in $\mathcal{O}(n + n_e)$ time \cite{chordal}. Finding all the $\J_k$ $(1 \leq k \leq \omega)$  (line $5$) takes $\mathcal{O}(n)$ time.
    For each delivery time interval set $\J_k$ $(1 \leq k \leq \omega)$, Algorithm \ref{Alg:approx_ddp_color} uses Algorithm \ref{GreedyAlgo_mod} for finding the number of drones and corresponding assignments  (line $6$). 
To assign a delivery $I_j$ in $\J_k$, Algorithm \ref{GreedyAlgo_mod} calls the subroutine \textsc{Find}($root, cost(I_j)$. This returns a pointer to a node or $NULL$ in $\mathcal{O}(h_j^k)$ time, as the subroutine recurses only on one sub-tree, where $h_j^k$ is the height of the tree $\mathcal{T}_k$ before the assignment of the delivery $I_j$. All the other update operations (lines 6, 7, and 10) are standard operations on a balanced binary search tree, and so need $\mathcal{O}(h_j^k)$ time to assign the delivery $I_j$ in $\J_k$. Consequently, for the interval set $\J_k$ Algorithm \ref{GreedyAlgo_mod} runs in $\sum_{I_j \in J_k} \mathcal{O}(h_j^k) \leq \mathcal{O}(n_k \log n_k)$ time, where $n_k$ is the number of deliveries in $\J_k$. Therefore, the total running time of the algorithm is $\sum_{k=1}^{\omega} \mathcal{O}(n_k \log n_k) = \mathcal{O}(n \log n)$.
In summary, overall running time for Algorithm \ref{Alg:approx_ddp_color} is $\mathcal{O}(n \log n + n_e)$.  
\end{proof}
\begin{theorem}
\label{Theorem-3}
Algorithm \ref{Alg:approx_ddp_color} is an approximation algorithm for DDP-NS, which uses at most $\big(\frac{OPT_{NS}}{(1-\epsilon_{max})} + \omega (1 - \frac{\epsilon_{min}}{1 - \epsilon_{max}})\big)$ drones, where $\omega$ is the maximum clique size of $G$.
\end{theorem}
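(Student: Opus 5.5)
Feasibility of the output is already given by Lemma~\ref{lem:cor+compl-ddp-ns}, so the task is to bound the total number of drones $m=\sum_{k=1}^{\omega} m_k$ returned by Algorithm~\ref{Alg:approx_ddp_color}. The plan is a weight (volume) argument: bound $cost(\cI)$ from below in terms of $m$ using Corollary~\ref{cor:weight}, and from above in terms of $OPT_{NS}$ using the battery budget.

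\emph{Lower bound.} The colour classes $\J_1,\dots,\J_{\omega}$ partition $\cI$, so $cost(\cI)=\sum_{k=1}^{\omega} cost(\J_k)$. Applying Corollary~\ref{cor:weight} to each class and summing gives
\[
cost(\cI) \ge \sum_{k=1}^{\omega}\big((m_k-1)(1-\epsilon_{max})B + \epsilon_{min}B\big) = (m-\omega)(1-\epsilon_{max})B + \omega\,\epsilon_{min}B .
\]

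\emph{Upper bound.} In DDP-NS there are no battery stations, so each drone of an optimal solution never recharges. Hence the deliveries assigned to it have total cost at most $B$. Since every delivery is served by exactly one of the $OPT_{NS}$ drones, $cost(\cI)\le OPT_{NS}\cdot B$.

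Combining the two bounds and dividing by $B$ gives $(m-\omega)(1-\epsilon_{max}) + \omega\epsilon_{min} \le OPT_{NS}$. Because $\epsilon_{max}\le \frac12<1$, we may divide by $1-\epsilon_{max}>0$ and obtain
\[
m \le \frac{OPT_{NS}}{1-\epsilon_{max}} + \omega\Big(1-\frac{\epsilon_{min}}{1-\epsilon_{max}}\Big),
\]
which is the claimed bound.

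The algebra is routine; the step that needs care is the per-class lower bound. One must check that Corollary~\ref{cor:weight} applies to every $\J_k$, including classes with $m_k=1$. In that case the bound reduces to $cost(\J_k)\ge \epsilon_{min}B$, which holds because each $\J_k$ is nonempty in an $\omega$-colouring using exactly $\omega$ colours. It is also essential that the same global constants $\epsilon_{min},\epsilon_{max}$ are used uniformly across all classes, which is why the corollary, rather than Lemma~\ref{lemma-color} with its class-dependent $\epsilon'_k$, is invoked. As a remark, with (\ref{eq:rel-w-opt}) one may further substitute $\omega\le OPT_{NS}$ (when $1-\frac{\epsilon_{min}}{1-\epsilon_{max}}\ge 0$) to get the $(2+\psi)$ ratio, but that is not needed for the statement itself.
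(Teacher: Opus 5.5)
Your proof is correct and takes the same route as the paper. You sum Corollary~\ref{cor:weight} over the $\omega$ colour classes, compare the result with $cost(\cI)\le OPT_{NS}\cdot B$, and rearrange, with your check that every colour class is nonempty (so each $m_k\ge 1$) being a detail the paper leaves implicit.
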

\begin{proof}
The correctness and the polynomial running time proof of the algorithm follow from the Lemma \ref{lem:cor+compl-ddp-ns}.
Let $cost(\cI)$ be the total cost of all the deliveries in $\cI$ and $cost(\J_k)$ be the total cost of all the deliveries in $\J_k (1 \leq k \leq \omega)$. 
 Then, $\sum\limits_{k=1}^{\omega} cost(\J_k) = cost(\cI) \leq OPT_{NS} \cdot B$.\\
 If $m_k$ is the number of drones returned by the Algorithm \ref{GreedyAlgo_mod} for the interval set $\J_k$, then by the Corollary \ref{cor:weight}, $\sum_{k=1}^{\omega}((m_k - 1) (1 - \epsilon_{max})B + \epsilon_{min} B) \leq \sum\limits_{k=1}^\omega cost(\mathcal{J}_k) \leq OPT_{NS} \cdot B$. Hence
\begin{equation}
  \sum_{i=1}^{\omega} m_k \leq \frac{OPT_{NS}}{(1-\epsilon_{max})} + \omega (1 - \frac{\epsilon_{min}}{1 - \epsilon_{max}}). \label{eq-14}
  \end{equation}
%Hence the proof.  
\end{proof}
\begin{corollary}
\label{cor:genralized_factor}
    Algorithm \ref{Alg:approx_ddp_color} is a ($2 + \frac{\epsilon_{max}-\epsilon_{min}}{1-\epsilon_{max}}$)-approximation algorithm for DDP-NS.
    %, where $\psi = \frac{\epsilon_{max}-\epsilon_{min}}{1-\epsilon_{max}}$.
    %If ($\epsilon_{max} - \epsilon_{min}) \leq \frac{\theta}{2}$, then Algorithm \ref{Alg:approx_ddp_color} is a ($2 + \theta$) approximation algorithm for DDP-NS.
\end{corollary}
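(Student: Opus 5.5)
The corollary follows directly from Theorem~\ref{Theorem-3} combined with the lower bound (\ref{eq:rel-w-opt}), $\omega \le OPT_{NS}$. Theorem~\ref{Theorem-3} bounds the number of drones by
\[
\frac{OPT_{NS}}{1-\epsilon_{max}} + \omega\Bigl(1 - \frac{\epsilon_{min}}{1-\epsilon_{max}}\Bigr).
\]
The plan is to replace $\omega$ by $OPT_{NS}$ in the second term and simplify.

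First I will check that the coefficient of $\omega$ is nonnegative, so that the substitution preserves the inequality. This requires $\epsilon_{min} \le 1-\epsilon_{max}$. Since $\epsilon_{max} = \min\{\frac12, \frac1B \max_j cost(I_j)\}$, two cases arise.
\begin{itemize}
\item If $\epsilon_{max} = \frac12$, then $1-\epsilon_{max} = \frac12$. We need $\epsilon_{min} \le \frac12$. This can fail if every cost exceeds $B/2$.
\item Otherwise $\epsilon_{max} = \frac1B\max_j cost(I_j) < \frac12$. Then $\epsilon_{min} \le \epsilon_{max} < \frac12 < 1-\epsilon_{max}$, and the condition holds.
\end{itemize}

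The first case is the main obstacle, and I will handle it separately. If $\epsilon_{min} > \frac12$, no two deliveries fit on one drone. The algorithm then uses exactly $n$ drones, which equals $OPT_{NS}$ since the optimum also needs $n$. The ratio is therefore $1$, well within the claimed factor, and this case can be dismissed. The problematic subcase can also be avoided by observing that if the coefficient of $\omega$ is negative, the bound is at most $OPT_{NS}/(1-\epsilon_{max}) \le 2\,OPT_{NS}$ anyway.

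With the coefficient nonnegative, substituting $\omega \le OPT_{NS}$ gives
\[
OPT_{NS}\Bigl(\frac{1}{1-\epsilon_{max}} + 1 - \frac{\epsilon_{min}}{1-\epsilon_{max}}\Bigr) = OPT_{NS}\Bigl(1 + \frac{1-\epsilon_{min}}{1-\epsilon_{max}}\Bigr).
\]
Writing $\frac{1-\epsilon_{min}}{1-\epsilon_{max}} = 1 + \frac{\epsilon_{max}-\epsilon_{min}}{1-\epsilon_{max}}$, this equals $\bigl(2 + \frac{\epsilon_{max}-\epsilon_{min}}{1-\epsilon_{max}}\bigr)OPT_{NS}$, which is the claimed factor.

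Polynomial running time and feasibility are already given by Lemma~\ref{lem:cor+compl-ddp-ns}, which completes the plan.
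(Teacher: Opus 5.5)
Your proposal is correct and follows the same route as the paper, which simply substitutes $\omega \le OPT_{NS}$ from (\ref{eq:rel-w-opt}) into the bound (\ref{eq-14}) of Theorem~\ref{Theorem-3}. The paper's one-line proof misprints this as $\omega \geq OPT_{NS}$ and never checks the sign of the coefficient of $\omega$, which you rightly do.

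One caution: drop your fallback remark that a negative coefficient still yields at most $2\,OPT_{NS}$. A negative coefficient forces $\epsilon_{min} > \frac{1}{2} = \epsilon_{max}$, so $\frac{\epsilon_{max}-\epsilon_{min}}{1-\epsilon_{max}} < 0$, and then $2\,OPT_{NS}$ exceeds the claimed bound. What settles that case is your direct argument that the algorithm and the optimum both use exactly $n$ drones.
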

\begin{proof}
  The statement follows from the Eq. \ref{eq-14} and by using $\omega \geq OPT_{NS}$ from Eq. \ref{eq:rel-w-opt}.
\end{proof}
% \begin{corollary}
%     If ($\epsilon_{max} - \epsilon_{min}) \leq \frac{\theta}{2}$, then Algorithm \ref{Alg:approx_ddp_color} is a ($2 + \theta$) approximation algorithm for DDP-NS.
%     \label{cor:consolidate-factor-ddp-ns}
% \end{corollary}
% \begin{proof}
%     This holds from Corollary \ref{cor:genralized_factor} and by using $\epsilon_{max} \leq \frac{1}{2}$.
% \end{proof}

\begin{discussion}
    The above result suggests that if $\epsilon_{min}$ and $\epsilon_{max}$ are close to each other, then we can obtain an approximation algorithm with a factor closer to 2.
    %We have $\epsilon_{min} = \frac{1}{B}\min_{j \in \cN} cost(I_j)$ and $\epsilon_{max} = \min\{\frac{1}{2}, \frac{1}{B}\max_{j \in \cN}cost(I_j)\}$.
    $\epsilon_{min}$ and $\epsilon_{max}$ being close implies that the difference between the cost of any two deliveries is small. In other words, all the distances between the location of customers and the truck's route are nearly similar, or the customers are uniformly distributed around the truck's route. Moreover, the above approximation factor corresponds to the worst-case scenario where $\omega = OPT_{NC}$. In practice, this situation is unlikely to occur. The number of deliveries per day is usually much larger than the maximum number of deliveries active at any given time. Consequently, in most practical instances $OPT_{NC}$ is significantly larger than $\omega$. Thus, expressing our results in terms of $\omega$ is meaningful and may yield a smaller approximation factor in practical settings.
\end{discussion}

\begin{discussion}
    \label{dis:worst-case-discussion}
    %In the worst case, when $\epsilon_{min}$ is close to zero and $\epsilon_{max}$ is close to half, the algorithm may perform poorly. In other words, there may be a delivery very close to the truck and another delivery that consumes at least half of the drone's capacity.
    We can relate DDP-NC with the \emph{bin packing with conflicts} (BPPC) problem~\cite{epstein}. Epstein and Levin proposed a $\frac{7}{3}$-approximation algorithm for BPPC when the conflict graph is an interval graph. Therefore, their algorithm is also applicable to DDP-NC. Our proposed Algorithm~\ref{Alg:approx_ddp_color} performs better than the reult in \cite{epstein} only when $\psi < \frac{1}{3}$.
    However, Algorithm~\ref{Alg:approx_ddp_color} becomes useful in designing algorithms for DDP-SC (Sections~\ref{sec:ddp-sc} and~\ref{sec:mod-ddp-sc}). If the $\frac{7}{3}$-approximation algorithm \cite{epstein} is used directly as a subroutine for DDP-SC, it leads to a higher approximation ratio than that obtained by our proposed algorithm (Theorem~\ref{thm:final-ddp-sc-mod}). We have discussed this fact again in Discussion \ref{dis:motivation-for-modification}.
\end{discussion}
\begin{discussion}
\label{dis:approx-ddp-color-2}
    The analysis of the algorithm is independent of the order of the interval. Therefore, if we fix some of the ordering of the intervals, the analysis of the algorithm remains the same. We will exploit this property in designing an algorithm in Section \ref{sec:mod-ddp-sc}.
\end{discussion}

\section{Approximation Algorithm for DDP-NC}
\label{sec:ddp-nc}
Here we demonstrate \textsc{AlgoFor-DDP-NC}, an approximation algorithm for solving the \emph{Drone Delivery Packing Problem without Conflicting Intervals} (DDP-NC). An illustration of the algorithm, with an example, is shown in
%Section 2 of the supplemental file.
Section \ref{sec:ex-ddp-nc}, Appendix \ref{sec:appendix}.
We design the algorithm assuming battery service stations act as charging stations. The same approach extends to battery swapping.

We first partition the delivery interval set $\mathcal{I}$ into disjoint subsets 
$\{\mathcal{I}_1, \mathcal{I}_2, \ldots, \mathcal{I}_{r+1}\}$, 
where $r$ denotes the number of charging stations. 
For each subset $\mathcal{I}_{\ell}$ $(1 \le \ell \le r+1)$, 
we compute the required number of drones $m_{\ell}$ and construct the corresponding schedule using a standard bin-packing strategy. 
To minimize the total number of drones used, we recharge drones whenever possible and prioritize the reuse of previously used drones.
We now describe the algorithm step by step, providing key observations after each step to justify its correctness.%\newline

\textbf{Step-1:} \textit{~Divide the interval set $\cI$ into disjoint subset of intervals $\{\mathcal{I}_1, \mathcal{I}_2, \ldots, \mathcal{I}_{r+1}\}$, where
% $\cI_1 = \{I_j ~|~ t_j^L < t_1^A\}$, and for
% $2 \leq \ell \leq r+1,~ \cI_{\ell} = \{I_j ~|~  I_j\notin \bigcup\limits_{\ell'=1}^{\ell-1} \cI_{\ell'} \text{~and~} (t_j^L < t_{\ell}^A)\}$.
$\cI_1 = \{I_j ~|~ t_j^L < t_1^A\}$, and for 
$2 \leq \ell \leq r+1, ~ \cI_{\ell} = \{I_j ~|~ t_{\ell -1}^A \leq t_j^L < t_{\ell}^A\}$, and $\cI_{r+1} = \{I_j ~|~ t_j^L \geq t_{\ell}^A\}$.} %\newline

%We define $t_{r+1}^A > \max\limits_{j \in \cN} t_j^R$.
It is immediate to follow that each delivery interval $I_j$ belongs to one of the subsets $\cI_{\ell}$. However, some $\cI_{\ell}$ may contain no intervals. For $\ell \geq 2$, the interval of $\cI_{\ell}$ containing $t_{\ell-1}^D$ is denoted by $I_{\ell}^{first}$, and for $\ell \leq r$, the interval of $\cI_{\ell}$ containing $t_{\ell}^A$ is denoted by $I_{\ell}^{last}$. For some $\ell$, $I_{\ell}^{first}$ or $I_{\ell}^{last}$ or both may not exist.
Because the intervals are conflict-free and no interval is fully contained in a waiting interval, among all the intervals of $\cI_{\ell}$, only the interval $I_{\ell}^{first}$ (if exist) intersect with the waiting time interval $I_{\ell-1}^c$, and only the interval $I_{\ell}^{last}$ (if exist) intersect with the waiting time interval $I_{\ell}^c$. All other intervals of $\cI_{\ell}$ do not intersect with any of the waiting intervals.
In the next step, we further partition each set $\mathcal{I}_{\ell}$ 
$(1 \le \ell \le r+1)$ into blocks $\{S_{\ell}^i\}_i$. A \emph{block} is defined as a set of intervals whose total cost is at most $B$. We refer to this subproblem as \texttt{Partition}$(\mathcal{I}_{\ell})$. For each $\ell$ $(1 \le \ell \le r+1)$, we solve \texttt{Partition}$(\mathcal{I}_{\ell})$ by following FFD$(\mathcal{I}_{\ell})$, which is the classical \emph{First-Fit Decreasing} (FFD) bin-packing strategy. For completeness, we describe the strategy below using the terminology of blocks and intervals, rather than items and bins. 
 %\newline

\noindent \textbf{Step-2 (FFD$(\mathcal{I}_{\ell})$):} \textit{~We first sort the intervals of $\cI_{\ell}$ according to non-increasing order of their costs. Let $\cI_{\ell} =\{I_{\ell}^1, I_{\ell}^2, \cdots, I_{\ell}^p\}$ and without loss of generality let the intervals in $\cI_{\ell}$ are already sorted. We begin with $m_{\ell}=1$ and $S_{\ell}^1 = \{I_{\ell}^1\}$. To assign the remaining intervals, we select one by one in sorted order. To assign the interval $I_{\ell}^j$, we check whether there exists a feasible $i \le m_{\ell}$ such that $cost(S_{\ell}^{i}) + cost({I_\ell^j}) \le B$. If there exists multiple such feasible $i$, we choose the $i$ with the minimum index and add $I_{\ell}^j$ to $S_{\ell}^{i}$. If no such $i$ is available, we increase $m_{\ell}$ by one and set $S_{\ell}^{m_{\ell}} = \{I_{\ell}^j\}$.}%\newline

%We call each of the subsets $S_\ell^i$ a \emph{block}. Formally, a \emph{block} is the set of intervals with cost at most $B$.
The block assigned for the interval $I_{\ell}^{first}$ is named $S_{\ell}^{first}$. Whereas, the block assigned for the interval $I_{\ell}^{last}$ is named $S_{\ell}^{last}$. Since the intervals in a block satisfy both the compatibility and budget constraint, they can all be assigned to a drone having full battery capacity.
To reduce the number of used drones, a drone assigned for one of the blocks of $\mathcal{I}_{\ell}$ is also assigned for the subsequent blocks of $\mathcal{I}_{\ell'}$ with $\ell' > \ell$. %\newline%We denote $m_{max}$ to represent the $\max\limits_{1\le \ell \le r+1} m_{\ell}$.

\noindent \textbf{Step-3:} \textit{~Find $m_{max} = \max\limits_{1\le \ell \le r+1} m_{\ell}$. Open $(m_{max} + 2)$ drones. 
%We denote the drones by $\{D_1, D_2, \cdots, D_{m_{max}+2}\}$.
}%\newline

We now assign the blocks to these $m_{{max}+2}$ drones so that the assignment of each drone remains feasible at each time. Note that a drone assigned to the blocks of $\mathcal{I}_1$ may not be scheduled for $S_2^{first}$. This occurs when $cost(S_2^{first})$ and each $cost(S_1^i)$ are sufficiently large such that $cost(S_2^{first}) > Ch_2(rem_1^i, I)$, where $I \subseteq [t_1^A, t]$ and $t$ denotes the launch time of the interval $I_2^{first}$. In this case, no drone assigned to $\mathcal{I}_1$ has sufficient time to recharge so that it can be scheduled to $S_2^{first}$. Similarly, a drone assigned to blocks of $\mathcal{I}_{\ell}$ $(\ell \ge 2)$ may not be feasible for $S_{\ell+1}^{first}$. The same restriction applies to the drone $D$ assigned to $S_{\ell-1}^{last}$, since the interval $I_1^{last}$ may end close to $t_1^D$, while the interval $I_3^{first}$ may begin close to $t_2^A$. In view of such a worst-case scenario, we design our next step.
A drone assign to the block $S_{\ell}^{first}$ ($\ell \ge 2$) is referred to as $D_{\ell}^{first}$. Similarly, drone assign to the block $S_{\ell}^{last}$ ($\ell \le r$) is referred to as $D_{\ell}^{last}$. For completeness, we define $D_0^{last} = D_{-1}^{last} = \phi$, and $\cI_0 = \emptyset$.
%Additionally, $S_1^{first}$ is the first block returned 
%\newline

\noindent \textbf{Step-4:} \textit{~The blocks of $\cI_1$ are assigned uniquely to any of the $m_1$ drones.
For each $2 \le \ell \le r+1$, assign the block $S_{\ell}^{first}$ to one of the drones except $D_{\ell-2}^{last}$ and the drones assigned to the blocks of $\cI_{\ell-1}$. For the remaining blocks of $\cI_{\ell}$, we assign them uniquely to the drones except $D_{\ell-1}^{last}$ and $D_{\ell}^{first}$. If $\ell \le r$, after the assignment, charge all the used drones (having remaining batteries less than $B$) except $D_{\ell}^{last}$.}

We now analyze our algorithm in the following.
\begin{lemma}
\label{lem:cor-ddp-nc}
    \textsc{AlgoFor-DDP-NC} assigns each of the deliveries in $\cI$ to one of the $(m_{{max}}+2)$ drones, and all the assignments of those drones are feasible.
\end{lemma}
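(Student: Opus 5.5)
Two things must be shown: the assignment rule of Step-4 can always be carried out with only $m_{max}+2$ drones, and every drone's resulting assignment is feasible. I will handle both by induction on $\ell$ from $1$ to $r+1$. The inductive claim is: after the blocks of $\cI_{\ell}$ are assigned and the drones are charged at station $\ell$, every drone other than $D_{\ell}^{last}$ has full battery $B$ at time $t_{\ell}^D$, and no charging interval so far conflicts with any assigned delivery.

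\emph{Counting.} For $\ell = 1$ we need $m_1 \le m_{max}$ drones, which is fine. For $\ell \ge 2$, the block $S_{\ell}^{first}$ must avoid $D_{\ell-2}^{last}$ and all drones used for $\cI_{\ell-1}$. There are at most $m_{\ell-1}+1 \le m_{max}+1$ such drones, so at least one of the $m_{max}+2$ drones remains available. The other $m_{\ell}-1$ blocks of $\cI_{\ell}$ must avoid only $D_{\ell-1}^{last}$ and $D_{\ell}^{first}$, which leaves $m_{max} \ge m_{\ell}-1$ drones. Hence every block gets a distinct drone, and every interval of $\cI$ is assigned, because the blocks partition each $\cI_{\ell}$ and the $\cI_{\ell}$ partition $\cI$.

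\emph{Feasibility.} Compatibility is automatic because $\cI$ is non-conflicting. Inside $\cI_{\ell}$, each drone carries exactly one block, and a block costs at most $B$. It therefore suffices that each drone starts its block of $\cI_{\ell}$ with full battery and that its charging intervals meet no assigned delivery. I split into cases.
\begin{itemize}
\item \emph{Blocks other than $S_{\ell}^{first}$.} Their drones avoid $D_{\ell-1}^{last}$. By the inductive claim they were fully charged during the whole of $I_{\ell-1}^c$. Their new intervals have $t_j^L \ge t_{\ell-1}^A$ and, not being $I_{\ell}^{first}$, do not meet $I_{\ell-1}^c$; so they start after $t_{\ell-1}^D$, with battery $B$.
\item \emph{The block $S_{\ell}^{first}$.} Its drone was not used for $\cI_{\ell-1}$ and is not $D_{\ell-2}^{last}$. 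So it was either unused so far, or last used in some $\cI_{\ell'}$ with $\ell' \le \ell-2$. In the second case it was fully charged at station $\ell-2$ or earlier, since it is not $D_{\ell-2}^{last}$. It then stays idle until $I_{\ell}^{first}$, so it still has battery $B$. It also needs no charging at station $\ell-1$, so there is no conflict with $I_{\ell}^{first}$, which overlaps $I_{\ell-1}^c$.
\end{itemize}

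\emph{Charging step.} At station $\ell$, every used drone except $D_{\ell}^{last}$ charges over the whole of $I_{\ell}^c$, which restores battery $B$ by the assumption $Ch_{\ell}(I_{\ell}^c,\cdot)=B$. Among the intervals of $\cI_{\ell}$, only $I_{\ell}^{last}$ meets $I_{\ell}^c$. Among those of $\cI_{\ell+1}$, only $I_{\ell+1}^{first}$ meets it, and its drone was chosen from outside the drones of $\cI_{\ell}$; moreover no interval meets two waiting intervals. So a charging drone never holds a delivery overlapping $I_{\ell}^c$, and the inductive claim carries over to $\ell$.

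\emph{Expected obstacle.} The delicate point is $D_{\ell}^{last}$, which is not charged at station $\ell$ and therefore may be low on battery. The exclusions must guarantee that it is never reused before it has had a chance to charge. Step-4 forbids it for $S_{\ell+1}^{first}$ and for the other blocks of $\cI_{\ell+1}$. It may be chosen for $S_{\ell+2}^{first}$, but only when it was not used in $\cI_{\ell+1}$. In that case it was idle during $I_{\ell+1}^c$, so it is charged there, because it is a used drone other than $D_{\ell+1}^{last}$; the only exception would be if it were itself $D_{\ell+1}^{last}$, which would mean it was used in $\cI_{\ell+1}$. The inductive claim must be stated carefully enough to cover this two-step bookkeeping, including boundary cases where $I_{\ell}^{first}$ or $I_{\ell}^{last}$ does not exist, or where some $\cI_{\ell}$ is empty.
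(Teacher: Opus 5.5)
Your plan follows the paper's proof: induction on $\ell$, at most $m_{\ell-1}+1\le m_{max}+1$ forbidden drones for $S_{\ell}^{first}$ and two for the remaining blocks, and full recharging of all other drones. The paper only argues that suitable drones exist and can be fully recharged. You also try to check that no charging interval produced by Step-4 meets a delivery, which is worth doing, but the argument you give for it fails as written.

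In your \emph{Charging step} you let every used drone except $D_{\ell}^{last}$ charge over all of $I_{\ell}^c$. You then dismiss $I_{\ell+1}^{first}$ because its drone ``was chosen from outside the drones of $\cI_{\ell}$''. That is not enough. When $m_{\ell}=m_{max}$, exactly one drone is eligible for $S_{\ell+1}^{first}$, and it is in general one that has been used before. Under your rule that drone charges over $I_{\ell}^c$ and then collides with $I_{\ell+1}^{first}$, which overlaps $I_{\ell}^c$.

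Two ingredients are missing:
\begin{itemize}
\item the clause of Step-4 that only drones with remaining battery less than $B$ are charged;
\item the exclusion of $D_{\ell-1}^{last}$ from $S_{\ell+1}^{first}$.
\end{itemize}
A drone that is not used for any block of $\cI_{\ell}$ and differs from $D_{\ell-1}^{last}$ is full at $t_{\ell-1}^D$ by your inductive claim. It stays idle until station $\ell$, so it is still full there and is not charged. By contrast, $D_{\ell-1}^{last}$ is the one drone outside $\cI_{\ell}$ that does charge at station $\ell$, and that is exactly why it is barred from $S_{\ell+1}^{first}$.

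Your bullet on $S_{\ell}^{first}$ (``needs no charging at station $\ell-1$'') already relies on this tacitly, which contradicts your charging paragraph. Make the rule explicit, and add to the inductive claim that any drone not charged at station $\ell$ is either $D_{\ell}^{last}$ or already full.

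Your closing paragraph also misreads Step-4. $D_{\ell}^{last}$ may \emph{not} be chosen for $S_{\ell+2}^{first}$: that block must avoid $D_{(\ell+2)-2}^{last}=D_{\ell}^{last}$, which your own counting paragraph uses correctly. The scenario you describe would in fact be infeasible. $D_{\ell}^{last}$ is never used in $\cI_{\ell+1}$ and is below full capacity, since costs are positive and it skipped station $\ell$. So it is charged over all of $I_{\ell+1}^c$, which overlaps $I_{\ell+2}^{first}$. This is not a case your bookkeeping must accommodate; it is the conflict the $D_{\ell-2}^{last}$ exclusion exists to prevent. With these two corrections the induction goes through as in the paper.
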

\begin{proof}
   From Step-1, each delivery belongs to a unique interval set $\cI_{\ell}$ for some $1 \le \ell \le r+1$. Further, by the correctness of FFD (Step-2), each delivery in $\cI_{\ell}$ belongs to a unique block and is therefore assigned to a unique drone.

    We prove the feasibility of the drone assignments by induction on $\ell$, where $\ell$ denotes the index of the $\ell$-th interval set $\cI_{\ell}$.
    For $\ell = 1$, the $m_1$ $(\leq m_{\max})$ blocks of $\cI_1$ are uniquely assigned to $m_1$ drones, since $D_1^{last} = \phi$ and $\cI_0 = \emptyset$. As all drones initially have empty assignments, the feasibility condition is trivially satisfied after this assignment.

Now assume, as the induction hypothesis, that the assignments of all drones are feasible up to the processing of the interval set $\cI_{\ell-1}$. Consider the processing of $\cI_{\ell}$. The block $S_{\ell}^{first}$ is assigned to a drone other than $D_{\ell-2}^{last}$ and the drones assigned to the blocks of $\cI_{\ell-1}$. Note that the intervals of $\cI_{\ell-1}$ and the interval $I_{\ell-2}^{last}$ are the only intervals that intersect the waiting interval $I_{\ell-2}^c$. Therefore, the assignments of all drones except $D_{\ell-2}^{last}$ and the drones assigned to the blocks of $\cI_{\ell-1}$ do not contain any interval intersecting $I_{\ell-2}^c$. Consequently, these drones can fully recharge at the $(\ell-2)$-th station.

Since the number of blocks of $\cI_{\ell-1}$ returned by Step~2 of the algorithm is at most $m_{\max}$, the total number of drones assigned to $\cI_{\ell-1}$, including $D_{\ell-2}^{last}$, is at most $(m_{\max}+1)$. Hence, there exists a drone with full capacity before processing the intervals in $\cI_{\ell}$, and assigning this drone to the block $S_{\ell}^{first}$ does not violate feasibility.

Furthermore, $I_{\ell-1}^{last}$ and $I_{\ell}^{first}$ are the only intervals that intersect the waiting interval $I_{\ell-1}^c$. Therefore, except for $D_{\ell-1}^{last}$ and $D_{\ell}^{first}$, all the other $(m_{\max}-1)$ drones can fully recharge at the $(\ell-1)$-th station. Assigning these drones uniquely to the remaining at most $(m_{\max}-1)$ blocks of $\cI_{\ell}$ preserves the feasibility condition.
\end{proof}

\begin{lemma}
    \label{lem:rel-opt-l-opt}
    For any $\ell : 1 \le \ell \le r+1$, $OPT_{\ell} \leq OPT_{NC}$, where $OPT_{\ell}$ is the number of blocks in an optimum partition of $\cI_{\ell}$, and $OPT_{NC}$ is the optimum solution size for DDP-NC.
\end{lemma}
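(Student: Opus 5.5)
The plan is to turn an optimum DDP-NC schedule into a partition of $\cI_{\ell}$ into at most $OPT_{NC}$ blocks. Fix an optimal family of feasible assignments $\{\A_1^*, \ldots, \A_{OPT_{NC}}^*\}$. For each drone $i$, set $S_i = \A_i^* \cap \cI_{\ell}$, the deliveries of $\cI_{\ell}$ served by drone $i$. Every delivery is served by exactly one drone, so the nonempty sets among $S_1, \ldots, S_{OPT_{NC}}$ partition $\cI_{\ell}$. It therefore suffices to show that each $S_i$ is a block, i.e.\ $cost(S_i) \le B$. This gives a partition of $\cI_{\ell}$ into at most $OPT_{NC}$ blocks, hence $OPT_{\ell} \le OPT_{NC}$.

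The key step is showing that drone $i$ cannot recharge strictly between two deliveries of $\cI_{\ell}$. By Step-1, every $I_j \in \cI_{\ell}$ has $t_{\ell-1}^A \le t_j^L < t_{\ell}^A$. The only stations that could lie between launches of $\cI_{\ell}$ intervals are therefore stations $\ell-1$ and $\ell$ (for the boundary cases $\ell=1$ and $\ell=r+1$, only one of these exists).

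For station $\ell$: all launches precede $t_{\ell}^A$. So any charging sub-interval at station $\ell$ starts after every launch time of $\cI_{\ell}$, and no delivery of $\cI_{\ell}$ can follow such a recharge.

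For station $\ell-1$: every launch satisfies $t_j^L \ge t_{\ell-1}^A$. A recharge at station $\ell-1$ could come before some interval of $\cI_{\ell}$, but could it come after one? That would need an interval $I_j \in \cI_{\ell}$ lying entirely before the charging sub-interval, hence inside $I_{\ell-1}^c$. This is excluded by the assumption that no delivery interval is contained in a waiting interval. More precisely, since the charging sub-interval must be disjoint from $I_j$, we would need $t_j^R < t'_{\ell-1} \le t_{\ell-1}^D$ with $t_j^L \ge t_{\ell-1}^A$, so $I_j \subseteq I_{\ell-1}^c$, a contradiction.

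Consequently, every recharge of drone $i$ occurs either before all of $S_i$ or after all of $S_i$. Between its first and last deliveries in $S_i$ the battery only decreases. Feasibility of $\A_i^*$ (the budget constraint at each inclusion) therefore gives $cost(S_i) \le B$, since the battery level just before the first delivery of $S_i$ is at most $B$. Compatibility of $S_i$ is automatic, because $\cI$ is non-conflicting.

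The main obstacle is the station-$(\ell-1)$ case above, where one has to argue carefully with the endpoints of a partial charging sub-interval $I_{\ell-1}^{c'}$. I would handle it with the containment argument, using that $I_{\ell-1}^{c'} \subseteq I_{\ell-1}^c$ and that intervals are closed, so strict disjointness forces the inequality $t_j^R < t'_{\ell-1}$. The swapping case is identical, with the whole interval $I_{\ell-1}^c$ in place of the sub-interval.
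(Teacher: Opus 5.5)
Your proposal is correct and takes the same approach as the paper. You restrict each optimal assignment $\A_i^*$ to $\cI_{\ell}$, then use the assumption that no delivery interval lies inside a waiting interval (so every $I_j \in \cI_{\ell}$ ends after $t_{\ell-1}^D$) to rule out a recharge between two deliveries of $\cI_{\ell}$, and conclude that each restriction is a block; your separate treatment of station $\ell$ and of partial charging sub-intervals only spells out what the paper compresses into one line.
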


\begin{proof}
    %On the contrary, suppose that for some $\ell$, we have $OPT_{\ell} > OPT_{NC}$. 
    From the partitioning process, the set $\mathcal{I}_{\ell}$ contains intervals $I_j$ with $t_{\ell-1}^A \le t_j^L < t_{\ell}^A$.
    Since no delivery interval is fully contained within any waiting interval, every interval in $\mathcal{I}_{\ell}$ has rendezvous time greater than $t_{\ell-1}^D$. 
    Consequently, a drone cannot have an assignment that contains the intervals from $\mathcal{I}_{\ell}$, then recharge at a station, and then again assigned further intervals from $\mathcal{I}_{\ell}$.
    Therefore, the total cost of intervals from $\mathcal{I}_{\ell}$ assigned to any single drone is at most $B$. 
    This property also holds for every drone in an optimal solution.

    Let $\{d_1^*, d_2^*, \ldots, d_{OPT_{NC}}^*\}$ denote the drones in an optimal solution of DDP-NC, and let $\mathcal{A}_i^*$ denote the assignment of drone $d_i^*$, where $1 \le i \le OPT_{NC}$.
    We define $S_{\ell}^{i^*} = \mathcal{A}_i^* \cap \mathcal{I}_{\ell}$. Then each non-empty set $S_{\ell}^{i^*}$ forms a block, and the collection $\{ S_{\ell}^{i^*} \}_{i=1}^{OPT_{NC}}$ constitutes a partition of $\mathcal{I}_{\ell}$.
    $OPT_{\ell}$ being the size of the optimum partition of $\cI_{\ell}$, $OPT_{\ell} \le OPT_{NC}$.
\end{proof}

\begin{theorem}
\label{thm:alg-ddp-nc}
   \textsc{AlgoFor-DDP-NC} is an approximation algorithm for DDP-NC that uses at most ($\alpha \cdot OPT_{NC} +2$) drones, where $\alpha$ is the best known approximation factor for FFD.% and $OPT_{NC}$ is the optimum solution for DDP-NC.
\end{theorem}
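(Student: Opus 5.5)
The proof combines three ingredients: feasibility from Lemma~\ref{lem:cor-ddp-nc}, an FFD guarantee applied separately to each subset $\cI_{\ell}$, and Lemma~\ref{lem:rel-opt-l-opt} to replace each subset's optimum by the global optimum $OPT_{NC}$.

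\textbf{Feasibility.} Lemma~\ref{lem:cor-ddp-nc} shows that \textsc{AlgoFor-DDP-NC} assigns every delivery to one of $(m_{max}+2)$ drones. It also shows that every drone's assignment is feasible under both compatibility and the battery and recharging constraints. So the output is a valid solution of DDP-NC with exactly $m_{max}+2$ drones. It remains to bound $m_{max}$.

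\textbf{Bounding each $m_{\ell}$.} For each $\ell$ with $1 \le \ell \le r+1$, the number $m_{\ell}$ comes from running FFD on $\cI_{\ell}$. Here blocks are bins of capacity $B$ and interval costs are item sizes. Compatibility imposes no further restriction, because the intervals are non-conflicting by the DDP-NC assumption. Hence \texttt{Partition}$(\cI_{\ell})$ is exactly a classical bin packing instance, and $OPT_{\ell}$ is its optimum.

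By the definition of $\alpha$ as the best known approximation factor for FFD, we get $m_{\ell} \le \alpha \cdot OPT_{\ell}$. This must be read in the form in which the bound is stated. If it holds only with an additive constant, for example $\frac{11}{9}OPT_{\ell}+\frac{6}{9}$, that constant carries through into the final statement. In this theorem we treat $\alpha \cdot OPT_{\ell}$ as the FFD bound; the additive term is absorbed in the paper's later instantiation, which yields $\frac{11}{9}OPT_{NC}+\frac{24}{9}$.

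\textbf{Combining.} Lemma~\ref{lem:rel-opt-l-opt} gives $OPT_{\ell} \le OPT_{NC}$ for every $\ell$, hence $m_{\ell} \le \alpha \cdot OPT_{NC}$ for every $\ell$. Taking the maximum over $\ell$ gives $m_{max} \le \alpha \cdot OPT_{NC}$. Therefore the algorithm uses at most $\alpha \cdot OPT_{NC}+2$ drones.

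\textbf{Main obstacle.} The delicate point is the transfer from per-partition optima to the global optimum, namely that the optimal schedule restricted to $\cI_{\ell}$ forms blocks of cost at most $B$. This is already handled by Lemma~\ref{lem:rel-opt-l-opt}, using the fact that no interval of $\cI_{\ell}$ ends before $t_{\ell-1}^D$. The remaining care is to state the FFD guarantee consistently: a purely multiplicative $\alpha$, or $\alpha$ together with its additive term.
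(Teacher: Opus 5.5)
Your proposal is correct and matches the paper's proof essentially step for step: feasibility comes from Lemma~\ref{lem:cor-ddp-nc}, the FFD guarantee gives $m_{\ell} \le \alpha \cdot OPT_{\ell}$, and Lemma~\ref{lem:rel-opt-l-opt} gives $OPT_{\ell} \le OPT_{NC}$, so $m_{max}+2 \le \alpha \cdot OPT_{NC} + 2$. The one thing the paper's proof contains that yours omits is the polynomial running-time check ($O(n\log n + r)$ overall: linear-time splitting into the sets $\cI_{\ell}$, sorting and FFD on each subset, then $O(n)$ for assigning blocks to drones), which you should add since the claim is that this is an approximation algorithm.
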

\begin{proof}
    The correctness of the algorithm holds from Lemma \ref{lem:cor-ddp-nc}. The interval partitioning can be done in linear time. Then applying FFD to each of the subset takes total $O(n \log n + r)$ time, as the intervals need to be sorted to each of those partition which takes $O(n_{\ell} \log n_{\ell} + 1)$ time, where $n_{\ell}$ is the size of the subset $\cI_{\ell}$. Thereafter, assigning blocks to the drones take at most $O(n)$ time. Hence, the overall time complexity of the algorithm is $O(n \log n +r)$.

    Let $OPT_{\ell}$ be the number of blocks in the partition of $\cI_{\ell}$, where $1 \le \ell \le r+1$.
    Then, $m_{\ell} \leq \alpha \cdot OPT_{\ell}$ and $OPT_{\ell} \leq OPT_{NC}$, $\forall ~1 \le \ell \le r+1$, from Lemma \ref{lem:rel-opt-l-opt}. So, if $\ell^* = \arg \max\limits_{1\leq \ell \leq r+1} m_{\ell}$, then $m_{max} = m_{\ell^*} \leq \alpha 
    OPT_{\ell^*}$.
    % \begin{equation}
    % \label{eq:rel-m-opt-nc}
    %     m_{max} = m_{\ell^*} \leq \frac{3}{2}OPT_{NC}
    % \end{equation}
    Hence, the number of drones used by the algorithm is,
    $m_{max} + 2 = m_{\ell^*} + 2 \leq \alpha OPT_{\ell^*} + 2 \leq \alpha OPT_{NC} + 2$
\end{proof}

\begin{discussion}
\label{disc:ddp-nc-1}
The best-known asymptotic approximation ratio for FFD is
$\frac{11}{9}OPT_{NC} + \frac{6}{9}$ \cite{tightFactorFFD}. Therefore, based on the above theorem, our algorithm uses at most $\frac{11}{9}OPT_{NC} + \frac{24}{9}$ drones. This bound is tight, as there exists an instance of the bin-packing problem that matches the FFD bound \cite{tightFactorFFD}.
More generally, any bin-packing algorithm can be used in place of FFD in Step-2, and the corresponding approximation factor can be incorporated into the analysis.
It is also known that the absolute approximation ratio of FFD is $\frac{3}{2}$ \cite{simchi1994new}, and this bound is tight. Moreover, unless $P=NP$, no polynomial-time algorithm for bin packing can achieve an absolute approximation ratio better than $\frac{3}{2}$, which follows from the hardness of the \textit{partition problem} \cite{garey2002computers}. 
Thus, using FFD in Step-2 yields a solution of size at most $\frac{3}{2}OPT_{NC} + 2$. The additive constant can be reduced by $\frac{1}{2}$ through a modification of the algorithm, as described below. 
%The detailed modified algorithm can be found in \cite{}.
In the modified algorithm, we use the fact that if for any $\ell: 1 \le \ell \le r$, $m_{\ell} = \frac{3}{2}OPT_{\ell}$ holds then for any block $S_{\ell}^i$ of $\cI_{\ell}$ in our solution, $cost(S_{\ell}^i) \leq cost(T_{\ell}^j)$ must hold, where $T_{\ell}^j$ is one of the blocks of $\cI_{\ell}$ in the optimum solution.
Using this result, we modify our algorithm so that we can use a drone assigned to one of the blocks of $\cI_{\ell}$ to the block $S_{\ell+1}^{first}$ as well.
Consequently, the number of used drones becomes at most $\frac{3}{2}OPT_{NC} + \frac{3}{2}$.
It is worth noting that the bound $\frac{3}{2}OPT_{NC} + \frac{3}{2}$ is lesser than the asymptotic bound $\frac{11}{9}OPT_{NC} + \frac{24}{9}$ for small values of $OPT_{NC}$.
So, the modification does not yield a better solution for bigger optimum values.
Nevertheless, we believe that our proposed modification and the corresponding analysis may be useful for other results in the literature that use FFD as their subroutine.
%Nevertheless, we present the following modified algorithm to provide intuition that will be useful in designing and analyzing the algorithm for DDP-SC in the next section.
\end{discussion} 

\subsection{Modification to \textsc{AlgoFor-DDP-NC}}
In the previous algorithm, we have seen that, in the worst case, it may be possible that we can not assign any of the drones to the block $S_{\ell+1}^{first}$ that are already assigned to the blocks of $\cI_{\ell}$ and to $S_{\ell -1}^{last}$. Therefore, if $m_{\ell}= m_{max}$, then the algorithm needs $(m_{max}+2)$ drones for the blocks $\{S_{\ell-1}^{last}, S_{\ell}^1, S_{\ell}^2, \cdots, S_{\ell}^{m_{max}}, S_{\ell+1}^{first}\}$. Moreover if $m_{\ell} = \frac{3}{2}OPT_{\ell}$, we have the solution size at most $\frac{3}{2}OPT_{NC} + 2$. However, in the lemma below, we prove that, if $m_{\ell} = \frac{3}{2}OPT_{\ell}$, then for any block $S_{\ell}^i$ of $\cI_{\ell}$ in our solution, $cost(S_{\ell}^i) \leq cost(T_{\ell}^j)$ must hold, where $T_{\ell}^j$ is one of the blocks of $\cI_{\ell}$ in the optimum solution. By using this result, we modify our algorithm so that we can use a drone assigned to one of the blocks of $\cI_{\ell}$ to $S_{\ell+1}^{first}$ as well.

\begin{lemma}
    \label{lem:property-ffd}
    If there exists a block $S_{\ell}^i$ such that $cost(S_{\ell}^i) > cost(T_{\ell}^j)$, then $m_{\ell} \leq \frac{3}{2}OPT_{\ell} -\frac{1}{2}$, where $S_{\ell}^i$ is one of the blocks produced by the algorithm FFD$(\cI_{\ell})$ and $T_{\ell}^j$ is one of the blocks in the optimum solution of the problem \texttt{Partition}$(\cI_{\ell})$.
\end{lemma}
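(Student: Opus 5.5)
The plan is to combine the known absolute bound for FFD, $m_{\ell} \le \frac{3}{2}OPT_{\ell}$ \cite{simchi1994new}, with integrality, and then argue by contrapositive. Since $m_{\ell}$ is an integer, $m_{\ell} > \frac{3}{2}OPT_{\ell} - \frac{1}{2}$ together with $m_{\ell} \le \frac{3}{2}OPT_{\ell}$ forces $OPT_{\ell}$ to be even and $m_{\ell} = \frac{3}{2}OPT_{\ell}$ exactly. It therefore suffices to prove the following: if FFD$(\cI_{\ell})$ attains $m_{\ell} = \frac{3}{2}OPT_{\ell}$, then $cost(S_{\ell}^i) \le cost(T_{\ell}^j)$ for every block $S_{\ell}^i$ of FFD and every block $T_{\ell}^j$ of an optimum partition. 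So I would assume the bound is tight and extract structure from the chain of inequalities in the FFD analysis, each of which must then hold with equality.

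Concretely, I would replay the standard proof of the $\frac{3}{2}$ bound using the terminology of Step-2. Call an interval \emph{large} if its cost exceeds $B/2$, and let $q$ be the number of FFD blocks opened by large intervals.

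\textbf{Case 1: the last block is opened by a large interval.} Then every block is opened by a large interval. No two large intervals fit in one block, so $m_{\ell} = q \le OPT_{\ell}$. This is strictly below $\frac{3}{2}OPT_{\ell}$ whenever $OPT_{\ell} \ge 1$, so this case does not arise under tightness.

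\textbf{Case 2: some block is opened by an interval of cost at most $B/2$.} The blocks $S_{\ell}^{q+1},\dots,S_{\ell}^{m_{\ell}}$ contain only such small intervals. Because FFD processes intervals in non-increasing cost order, each of these blocks except possibly the last contains at least two intervals. The first-fit rule also gives that any two blocks have combined cost exceeding $B$. These facts give a lower bound on $cost(\cI_{\ell})$. Comparing it with the upper bound $cost(\cI_{\ell}) \le OPT_{\ell} \cdot B$ yields $m_{\ell} \le \frac{3}{2}OPT_{\ell}$.

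With tightness in hand, I would push every slack to zero. The large intervals must number exactly $OPT_{\ell}/2$ in the critical configuration. Each optimal block must then be essentially full, and the FFD blocks must pair up so that their costs sit at the extreme permitted by the ``two blocks exceed $B$'' constraint. The target is to show that every FFD block has cost at most the minimum cost of an optimal block. The intended route is to compare $\sum_i cost(S_{\ell}^i) = \sum_j cost(T_{\ell}^j)$: with $\frac{3}{2}OPT_{\ell}$ summands on the left and $OPT_{\ell}$ on the right, the average FFD block is only $\frac{2}{3}$ of the average optimal block. I would then combine this with the equality conditions to bound the maximum FFD block by the minimum optimal block.

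The main obstacle is this last step: turning the averaging identity into a pointwise comparison, max over FFD blocks versus min over optimal blocks. Averaging alone does not give it, so the equality conditions in Case 2 must pin the block costs down tightly. My first attempt would be to show that in the tight case every optimal block contains exactly one large interval and has cost above $\frac{2}{3}B$. I would then show that every FFD block has cost at most that value, using the fact that FFD places each small interval into the earliest block with room, and I expect an exchange argument on the smallest optimal block to be needed there. If the pointwise bound cannot be forced, I would fall back to showing directly that a block $S_{\ell}^i$ with $cost(S_{\ell}^i) > cost(T_{\ell}^j)$ creates enough slack in the Case 2 counting to lower $m_{\ell}$ by at least $\frac{1}{2}$.
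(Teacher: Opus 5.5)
Your opening reduction is correct and matches how the paper begins: by integrality and $m_\ell\le\frac{3}{2}OPT_\ell$, it suffices to rule out $m_\ell=3t$ with $OPT_\ell=2t$. The gap is the step you yourself flag as the main obstacle, and the route you sketch for it fails. Tightness forces neither $OPT_\ell/2$ intervals of cost $>B/2$ nor one such interval per optimal block. For example, take $B=10$ and costs $4,4,3,3,3,3$. FFD yields the blocks $\{4,4\},\{3,3,3\},\{3\}$, so $m_\ell=3=\frac{3}{2}\cdot 2$, while $OPT_\ell=2$ via $\{4,3,3\}$ twice, and no interval exceeds $B/2$. Your account of the $\frac{3}{2}$ bound also rests on the wrong inequality. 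The block-pair fact that any two FFD blocks together exceed $B$ only gives roughly a factor $2$, so its equality conditions tell you nothing. What drives the argument is the per-interval first-fit property. Writing $c_{i'}=cost(S_\ell^{i'})$ and $d_{j'}=cost(T_\ell^{j'})$, one has $c_{i'}+cost(I)>B$ for every interval $I$ lying in a block $S_\ell^{i''}$ with $i''>i'$.

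The missing idea is a leave-one-out comparison rather than an average. Since $\sum_{i'}c_{i'}=\sum_{j'}d_{j'}$ and the $2t-1$ optimal blocks other than $T_\ell^j$ cost at most $(2t-1)B$, the hypothesis $c_i>d_j$ gives $\sum_{i'\neq i}c_{i'}<(2t-1)B$. So what must be shown is that, when $m_\ell=3t$ and $OPT_\ell=2t$, deleting \emph{any} single FFD block leaves total cost $>(2t-1)B$.

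The paper obtains this as follows. It matches $2t-1$ FFD blocks other than $S_\ell^i$ with $2t-1$ distinct intervals that lie in strictly later blocks, also other than $S_\ell^i$, and sums the per-interval inequality. The supply of intervals comes from one fact: every block opened by an interval of cost at most $B/2$, except the last block, contains at least two intervals. The case split is on the position of $i$:
\begin{itemize}
\item If $i\le 2t$, match the blocks indexed by $\{1,\dots,2t\}\setminus\{i\}$ with intervals of $S_\ell^{2t+1},\dots,S_\ell^{3t}$.
\item If $i>2t$, match $S_\ell^1,\dots,S_\ell^{2t-1}$ with intervals of the blocks $S_\ell^{2t},\dots,S_\ell^{3t}$ other than $S_\ell^i$.
\end{itemize}
The supply can be too small only when the opener of $S_\ell^{2t+1}$ has cost $>B/2$, or when $S_\ell^{2t}$ is a singleton opened by an interval of cost $>B/2$. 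Each such configuration yields $2t+1$ intervals that pairwise cannot share a block, contradicting $OPT_\ell=2t$.

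Your fallback sentence points vaguely in this direction. But without excluding $S_\ell^i$ from both sides of the matching, and without this case analysis, the proposal does not prove the lemma.
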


\begin{proof}
   To prove the lemma, it suffices to show that $m_{\ell} \neq \frac{3}{2}OPT_{\ell}$, since it is already known that $m_{\ell} \le \frac{3}{2}OPT_{\ell}$ \cite{simchi1994new}.
    If $OPT_{\ell}$ is odd, then $m_{\ell}$ cannot equal $\frac{3}{2}OPT_{\ell}$. 
    Hence, it remains to consider the case where $OPT_{\ell}$ is even. On the contrary, we assume that $m_{\ell} = \frac{3}{2}OPT_{\ell}$.
    Let $OPT_{\ell} = 2t$ for some integer $t \ge 1$. Then $m_{\ell} = 3t$. Let $\{S_{\ell}^1, \ldots, S_{\ell}^{3t}\}$ denote the blocks produced by FFD$(\mathcal{I}_{\ell})$, and let $\{T_{\ell}^{1}, \ldots, T_{\ell}^{2t}\}$ denote the blocks in an optimal solution of \texttt{Partition}$(\mathcal{I}_{\ell})$. 
    For simplicity, let $c_{i'} = cost(S_{\ell}^{i'})$ and $d_{j'} = cost(T_{\ell}^{j'})$, where $1 \le i' \le 3t$ and $1 \le j' \le 2t$. Since each block has total cost at most $B$, we have $c_{i'}, d_{j'} \le B$. Moreover,
    \[
    \sum_{i'=1}^{3t} c_{i'} = \sum_{j'=1}^{2t} d_{j'}.
    \]
   From the lemma condition, if there exists $i$ and $j$ such that $c_i > d_j$. Then
    \begin{equation}
    \label{eq:upper-bound-c-i}
    \sum_{i'=1}^{3t} c_{i'} - c_i < \sum_{j'=1}^{2t} d_{j'} - d_j \le (2t-1)B.
    \end{equation}
By the defining property of FFD, for any block $S_{\ell}^{i'}$ and any $I \in S_{\ell}^{i''}$ with $i'' > i'$, we have
    \begin{equation}
    \label{eq:ffd-prop}
    c_{i'} + cost(I) > B.
    \end{equation}
Let $I_{\ell}^{i'}$ denote the first interval assigned to block $S_{\ell}^{i'}$. (Case-1) $i > 2t$, and (Case-2) $i \le 2t$. For each of the case, we establish some contradiction. We further divide Case-1.

\textbf{Case-1.1:} ($i > 2t$ and $cost(I_{\ell}^{2t}) \le \frac{B}{2}$): Because FFD orders intervals in non-increasing cost, every interval in blocks $S_{\ell}^{i'}$ with $i' > 2t$ also has cost at most $\frac{B}{2}$. Hence, each such block (except possibly $S_{\ell}^{3t}$) must contain at least two intervals.
Let $\mathcal{S} = \{S_{\ell}^{2t}, S_{\ell}^{2t+1}, \cdots, S_{\ell}^{i-1}, S_{\ell}^{i+1}, \cdots, S_{\ell}^{3t}\}$
Then $\mathcal{S}$ contains at least $(2t+1-2) = (2t-1)$ intervals. Now consider an arbitrary set of $(2t-1)$ intervals of $\mathcal{S}$, say $\{I_1', I_2', \cdots, I_{2t-1}'\}$. Then, by using Eq. \ref{eq:ffd-prop}, 
\begin{align*}
    & \sum_{i'=1}^{2t-1} \big(c_{i'} + cost(I_{i'}')\big) > (2t-1)B \\
    &  \implies  \sum_{i'=1}^{3t} c_{i'} - c_i > (2t-1)B,
    \text{which is a contradiction with  Eq. \ref{eq:upper-bound-c-i}}.
\end{align*}
% \[\sum_{i'=1}^{2t-1} \big(c_{i'} + cost(I_{i'}')\big) > (2t-1)B \implies  \sum_{i'=1}^{3t} c_{i'} - c_i > (2t-1)B, \text{~contradicting  Eq. \ref{eq:upper-bound-c-i}.}\]
\textbf{Case-1.2}: ($i > 2t$ and $cost(I_{\ell}^{2t}) > \frac{B}{2}$): In this case, if $cost(I_{\ell}^{2t+1}) > \frac{B}{2}$, then the first $(2t+1)$ intervals each exceed $\frac{B}{2}$.
Thus any feasible partition requires at least $(2t+1)$ blocks, a contradiction. Whereas, if $cost(I_{\ell}^{2t+1}) \leq \frac{B}{2}$, then the set $\{S_{\ell}^{2t+1}, S_{\ell}^{2t+2}, \cdots, S_{\ell}^{i-1}, S_{\ell}^{i+1}, \cdots, S_{\ell}^{3t}\}$ contains at least $(2t-1-2) = (2t-3)$ intervals. In addition if the block $S_{\ell}^{2t}$ contains two intervals, then the set $\mathcal{S}$ have at least $(2t-1)$ intervals. Hence, an argument identical to Case-1.1 yields a contradiction.
On the other hand, if $S_{\ell}^{2t}$ contains only one interval, then
\begin{align*}
 &~~~~~~~ cost(I_{\ell}^{2t}) + cost(I_{\ell}^{2t+1}) > B  \\
 & \implies cost(I_{\ell}^{i'}) + cost(I_{\ell}^{2t+1}) > B, \forall~ 1\leq i' \leq 2t.
\end{align*}
% \[cost(I_{\ell}^{2t}) + cost(I_{\ell}^{2t+1}) > B \implies cost(I_{\ell}^{i'}) + cost(I_{\ell}^{2t+1}) > B, \forall~ 1\leq i' \leq 2t.\]
Thus the first $(2t+1)$ intervals must occupy distinct blocks in any feasible solution,
requiring at least $(2t+1)$ blocks, which is a contradiction.

 \textbf{Case-2:} ($i \leq 2t$:) In this case, if $cost(I_{\ell}^{2t+1}) \le \frac{B}{2}$, the set $\{S_{\ell}^{2t+1}, S_{\ell}^{2t+2}, \cdots, S_{\ell}^{3t}\}$ contains at least $(2t-1)$ intervals, say $\{I_1'', I_2'', \cdots, I_{2t-1}''\}$. Then, $\forall 1\le i' (\neq i) \leq 2t, ~ c_{i'} + cost(I''_{i'}) > B$, implies \[\sum_{\substack{i'=1\\i'\neq i}}^{2t} c_{i'} + \sum_{\substack{i'=2t+1}}^{3t} c_{i'} > (2t-1)B,  \text{~contradicts  Eq. \ref{eq:upper-bound-c-i}.}\]
If instead $cost(I_{\ell}^{2t+1}) > \frac{B}{2}$,
then at least $(2t+1)$ blocks are required in any solution, which is again a contradiction. \\
Hence, in all cases we must have, $m_{\ell} \leq \frac{3}{2}OPT_{\ell} -\frac{1}{2}$.
\end{proof}

We now describe the step of the modified algorithm, which we name as \textsc{Mod-Algo-DDP-NC}. Here, we assume that $OPT_{NC} \geq 2$. This is because, if $OPT_{NC} = 1$, then the total cost of all the intervals in any  $\cI_{\ell} ~(1\leq \ell \leq r+1)$ is at most $B$, and so $m_{\ell} =1$ as well. Thus, $m_{max} + 2 = 3 \leq \frac{3}{2}OPT + \frac{3}{2}$ holds. Also $m_{max}$ must be at least $3$, as $OPT_{NC} \geq 2$ and $m_{max}=2$, implies $m_{max} + 2 \leq \frac{3}{2}OPT + \frac{3}{2}$.
We further assume that for any $\ell$ with $m_{\ell} = m_{max}$, $m_{\ell} = \frac{3}{2}OPT_{\ell}$ must hold, otherwise we again have $m_{max}+2 \le \frac{3}{2}OPT_{NC} + \frac{3}{2}$.
The $i$-th step of the modified algorithm is described in Step-$i^+$.%\newline

\textbf{Step-$1^+$:} \textit{Similar to Step-1, we first divide the set $\cI$ into subset of intervals $\{\cI_1, \cI_2, \cdots, \cI_{r+1}\}$. Then, for each $1 \le \ell \le r+1$, we partition $\mathcal{I}_{\ell}$ into collection of blocks $\{S_{\ell}^i\}$ by applying FFD$(\mathcal{I}_{\ell})$. Then, find $m_{max}  = \max\limits_{1 \le \ell \le r+1} m_{\ell}$, where $m_{\ell}$ is the number of blocks returned by FFD$(\mathcal{I}_{\ell})$.}%\newline

In the next step, we modify the cost of some intervals. This modification may increase the size of the partition. 
We denote by $m_{\ell}^+$ the size of the modified partition of $\cI_{\ell}$. We initialize $m_1^+ = m_1$. Then, for each $2 \le \ell \le r+1$, we apply the modification procedure sequentially, starting with $\ell = 2$. Let $t'_{\ell}$ be the time very close to the launch time of $I_{\ell}^{first}$, but less than it.%\newline

 \textbf{Step-$2^+$:} \textit{If $m_{\ell -1} \leq (m_{max}-1)$, we keep the partition for $\cI_{\ell}$ unchanged and set $m_{\ell}^+ = m_{\ell}$. Otherwise ($m_{\ell -1} \geq m_{max}$), select a block $S_{\ell -1}^i$ from the previous partition that does not contain the intervals $I_{\ell-1}^{first}$ and $I_{\ell-1}^{last}$. 
Modify the cost of $I_{\ell}^{first}$ by $\big(cost(I_{\ell}^{first}) + B -  Ch_{\ell-1}(B-cost(S_{\ell-1}^i), I')\big)$. We denote the modified interval $I_{\ell}^{first}$ by $\widehat{I}_{\ell}^{first}$.
We replace the earlier partition FFD$(\cI_{\ell})$ by FFD$(\cI_{\ell} \setminus \{I_{\ell}^{first}\} \cup \widehat{I}_{\ell}^{first})$; set $m_{\ell}^+$ as the size of this modified partition; and the block $S_{\ell-1}^i$ is referred to as $S_{\ell-1}^{spl}$.}%\newline

%We refer $S_{\ell}^{first}$ to represent the block of $\cI_{\ell}$ that contain either the dummy interval $I$ or $I_{\ell}^{first}$.
%We aim to use the same drone for the blocks $S_{\ell}^{first}$ and $S_{\ell-1}^{spl}$.\newline 

 \textbf{Step-$3^+$:} \textit{Find $m_{max}^+ = \max\limits_{1 \le \ell \le r+1} m_{\ell}^+$. used $(m_{max}^+ +1)$ drones.}

 \textbf{Step-$4^+$:} \textit{For each $\ell$, if $m_{\ell -1}^+ \leq (m_{max}^+-1)$, then consider the blocks of $\cI_{\ell}$ from the FFD$(\cI_{\ell})$. In case of $m_{\ell -1}^+ = (m_{max}^+-1)$, then consider the blocks of $\cI_{\ell}$ from FFD$(\cI_{\ell} \setminus \{I_{\ell}^{first}\} \cup \widehat{I}_{\ell}^{first})$. If $m_{\ell -1}^+ \leq (m_{max}^+-1)$, assign $S_{\ell}^{first}$ to one of the drones other than $D_{\ell-2}^{last}$ and the drones assigned to the blocks of $\cI_{\ell-1}$.
Whereas if $m_{\ell-1}^+ = m_{max}^+$, we assign the block $S_{\ell}^{first}$ to the drone that was assigned for $S_{\ell-1}^{spl}$. In either case, the remaining blocks of $\cI_{\ell}$ are assigned to those drones except the drones $D_{\ell}^{first}$ and $D_{\ell-1}^{last}$. After assignment, if $\ell \leq r$, we charge all the used drones (having remaining capacity less than $B$) for the entire waiting interval $I_{\ell}^c$, except the drone $D_{\ell}^{last}$ and the drones assigned to $S_{\ell}^{spl}$ (if exists). If $S_{\ell}^{spl}$ exists, the drone assigned to it is charged during the interval $[t_{\ell-1}^A, t']$, where $t' = \sup \limits_{t < t''} t$ and $t''$ the launch time of the interval $I_{\ell+1}^{first}$.}%\newline

We now analyze this modified algorithm.
\begin{lemma}
    \label{lem:rel-m-m+}
    $m_{max} \leq m_{max}^+ \leq m_{max}+1$.
    %, where $m_{max}$ and $m_{max}^+$ are the maximum size of the partition in Step-2 and Step-2+, respectively.
\end{lemma}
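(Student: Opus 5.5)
We prove the two inequalities separately, treating each $\ell$ on its own and then taking maxima.

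\emph{Lower bound.} Step-$2^+$ can only increase the cost of a single interval, namely $I_{\ell}^{first}$ becomes $\widehat{I}_{\ell}^{first}$. The added amount is $B - Ch_{\ell-1}(B-cost(S_{\ell-1}^i), I') \ge 0$, because $Ch_{\ell-1}(\cdot)\le B$. We first need the modified cost to stay at most $B$, so that the instance is still a valid bin-packing instance. For this we use the choice of $I'$, the sub-interval of $I_{\ell-1}^c$ ending just before the launch of $I_{\ell}^{first}$. The drone for $S_{\ell-1}^{spl}$ has residual battery at least $0$ and regains charge monotonically, so $Ch_{\ell-1}(\cdot, I') \ge cost(I_{\ell}^{first})$ should hold. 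If it does not, we argue that no drone serving a block of $\cI_{\ell-1}$ could take $I_{\ell}^{first}$, and we treat that case separately.

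We then need to show that raising one item's cost in FFD cannot decrease the number of blocks. FFD is not monotone in general, so we avoid comparing FFD runs directly. Instead we use $m_{max}$ as the reference. For every $\ell$ with $m_{\ell-1}<m_{max}$ we have $m_\ell^+=m_\ell$, and $m_1^+=m_1$. Let $\ell^*$ be an index with $m_{\ell^*}=m_{max}$. If $\ell^*=1$ or $m_{\ell^*-1}<m_{max}$, then $m^+_{\ell^*}=m_{max}$ directly. Otherwise take the smallest index $\ell^*$ attaining $m_{max}$; then $\ell^*=1$ or $m_{\ell^*-1}<m_{max}$, so again $m^+_{\ell^*}=m_{\ell^*}=m_{max}$. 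This gives $m^+_{max}\ge m_{max}$.

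\emph{Upper bound.} Fix $\ell$ with a modified partition. The unmodified FFD uses $m_\ell\le m_{max}$ blocks. We compare FFD on $\cI_\ell\setminus\{I_\ell^{first}\}\cup\{\widehat I_\ell^{first}\}$ with a specific feasible partition: take the FFD$(\cI_\ell)$ blocks, remove $I_\ell^{first}$ from its block, and put $\widehat I_\ell^{first}$ alone in a new block. This partition has at most $m_\ell+1$ blocks. The difficulty is that FFD on the modified instance need not be at most this. So we argue through FFD's sorted processing instead.

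Let $c=cost(\widehat I_\ell^{first})$. Intervals with cost strictly greater than $c$ are processed identically in both runs, giving identical partial block structures. From that point the runs can diverge, and a potential argument would be needed. We expect this to be the main obstacle.

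As a fallback, we plan to exploit the standing assumption of the modified algorithm that $m_{\ell-1}=m_{max}=\frac32 OPT_{\ell-1}$, together with Lemma~\ref{lem:property-ffd}, to control the extra cost. We would then apply the $\frac32$ bound of \cite{simchi1994new} to the modified instance. Its optimum is at most $OPT_\ell+1$, since $\widehat I_\ell^{first}$ can be put in its own block. This yields $m_\ell^+\le \frac32(OPT_\ell+1)$, which we combine with $m_{max}=\frac32 OPT$ to conclude $m_\ell^+\le m_{max}+1$. Taking the maximum over $\ell$ completes the proof.
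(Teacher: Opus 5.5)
\textbf{Lower bound.} The paper's proof is one sentence: Step-$2^+$ only raises the cost of $I_{\ell}^{first}$, so the FFD block count of each $\cI_\ell$ ``remains the same or increases by one'', and one then takes maxima. Your lower bound takes a different and better route. The smallest index $\ell^*$ with $m_{\ell^*}=m_{max}$ is never modified: either $\ell^*=1$, where $m_1^+=m_1$, or $m_{\ell^*-1}\le m_{max}-1$. Hence $m_{max}^+\ge m^+_{\ell^*}=m_{max}$. This needs no monotonicity of FFD under cost increases. The paper's sentence tacitly assumes that monotonicity, and FFD does not have it in general. This half of your proposal is complete. Your remarks about $cost(\widehat{I}_{\ell}^{first})\le B$ are not needed for it.

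\textbf{Upper bound.} Here there is a genuine gap. You leave the direct FFD comparison open, and the paper's sentence gives no argument for the ``$+1$'' either. Your fallback has two concrete holes.

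First, the inequalities $m_\ell^+\le\frac32(OPT_\ell+1)$ and $m_{max}=m_{\ell-1}=\frac32 OPT_{\ell-1}$ (the standing assumption you cite) do not combine. $OPT_\ell$ may exceed $OPT_{\ell-1}$; with $OPT_\ell=OPT_{\ell-1}+1$ you only get $m_{max}+3$. You need the stronger clause $OPT_{\ell-1}=OPT_{NC}$ of Assumption~\ref{asum:mod-ddp-nc}, together with $OPT_\ell\le OPT_{NC}$ from Lemma~\ref{lem:rel-opt-l-opt}. Then $m_\ell^+\le m_{max}+\frac32$, and integrality gives $m_\ell^+\le m_{max}+1$. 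So what you would obtain is the lemma under Assumption~\ref{asum:mod-ddp-nc}, not the statement as posed.

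Second, your fallback needs $cost(\widehat{I}_{\ell}^{first})\le B$, i.e.\ $Ch_{\ell-1}(B-cost(S_{\ell-1}^{spl}),I')\ge cost(I_{\ell}^{first})$. Without it the singleton $\{\widehat{I}_{\ell}^{first}\}$ is not a feasible block, so the bound ``optimum $\le OPT_\ell+1$'' fails. The $\frac32$ guarantee of FFD also does not apply. Your justification (nonnegative residual plus monotone recharging) does not give this. $I'$ ends before the launch of $I_{\ell}^{first}$, hence before $t_{\ell-1}^D$, so the model guarantees no particular amount of recharge. Step-$2^+$ also has no ``separate case'' to fall back on, since it modifies whenever $m_{\ell-1}\ge m_{max}$.

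The missing ingredient is the comparison used in Lemma~\ref{lem:feasibility-mod-ddp-nc}. You allude to it when you mention Lemma~\ref{lem:property-ffd}, but you never carry it out. Under Assumption~\ref{asum:mod-ddp-nc}, Lemma~\ref{lem:property-ffd} gives $cost(S_{\ell-1}^{spl})\le cost(S^*)$, where $D^*$ is the optimal drone serving $I_{\ell}^{first}$. The charging window of $D^*$ at station $\ell-1$ also ends before that launch. So, assuming $Ch_{\ell-1}$ is monotone (the model does not state this), $rem\ge rem^*\ge cost(I_{\ell}^{first})$. Once you have this, replace $I_{\ell}^{first}$ by $\widehat{I}_{\ell}^{first}$ inside $D^*$'s block. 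This shows the modified instance still has optimum at most $OPT_{NC}$, not merely $OPT_\ell+1$. That would even give $m_\ell^+\le m_{max}$, so your singleton-block relaxation is cruder than necessary.
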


\begin{proof}
    In the modification step for $\mathcal{I}_{\ell}$, we alter the input to the partitioning procedure by increasing the cost of the interval $I_{\ell}^{first}$ only. Therefore, the size of the partition either remains the same or increases by one. Consequently, $m_{max}^+$ is either $m_{max}$ or $m_{max}+1$.
\end{proof}

\begin{lemma}
\label{lem:no-change-opt-after-mod}
    If an optimum solution to \texttt{Partition}$(\cI_{\ell})$ contains a block $S$ with cost at most $B' \leq B$, then the size of the optimum solution of \texttt{Partition}$(\cI_{\ell} \setminus \{I\} \cup \{\widehat{I}\})$ equals $OPT_{\ell}$, where $I$ is an interval in the block $S$ and $\widehat{I}$ is a copy of the  interval with $cost(\widehat{I}) = cost(I) + B - B'$.
\end{lemma}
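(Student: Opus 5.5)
The statement asserts equality, so I will prove the two inequalities separately. In both directions I will compare the two instances block by block, using that $\texttt{Partition}$ only asks for blocks of total cost at most $B$. Compatibility inside $\cI_{\ell}$ is automatic in DDP-NC, since the intervals are non-conflicting. Write $\widehat{OPT}_{\ell}$ for the optimum of $\texttt{Partition}(\cI_{\ell} \setminus \{I\} \cup \{\widehat{I}\})$.

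For the upper bound $\widehat{OPT}_{\ell} \le OPT_{\ell}$, start from the optimum partition of $\cI_{\ell}$ named in the hypothesis. It has $OPT_{\ell}$ blocks, one of which is $S$ with $cost(S) \le B'$ and $I \in S$. Replace $I$ by $\widehat{I}$ inside $S$ and leave every other block unchanged. The new block $\widehat{S} = S \setminus \{I\} \cup \{\widehat{I}\}$ has cost
\[
cost(\widehat{S}) = cost(S) + B - B' \le B' + B - B' = B,
\]
so it is still a valid block. The other blocks do not contain $I$, so they are untouched and remain valid. This gives a feasible partition of the modified instance with $OPT_{\ell}$ blocks, hence $\widehat{OPT}_{\ell} \le OPT_{\ell}$.

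For the lower bound $\widehat{OPT}_{\ell} \ge OPT_{\ell}$, take any feasible partition of the modified instance and replace $\widehat{I}$ by $I$ in whichever block contains it. Since $B' \le B$, we have $cost(I) \le cost(\widehat{I})$, so that block's cost does not increase and stays at most $B$. This turns any feasible partition of the modified instance into a feasible partition of $\cI_{\ell}$ with the same number of blocks, so $OPT_{\ell} \le \widehat{OPT}_{\ell}$. Alternatively, one can note that raising an item's cost never reduces the number of bins needed. Combining the two inequalities gives equality.

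There is no real obstacle here. The only point needing care is reading the hypothesis as ``the block $S$ containing $I$ has $cost(S) \le B'$'', so that the extra $B - B'$ is absorbed by exactly the block that receives $\widehat{I}$. If one were tempted to pick $I$ from a different block, the argument would fail, so I will state explicitly that $I \in S$. In Step-$2^+$, $S$ plays the role of the block containing $I_{\ell}^{first}$, and $B' = Ch_{\ell-1}(B - cost(S_{\ell-1}^{spl}), I')$ is at most $B$ by the assumption on $Ch_{\ell}(\cdot)$. This confirms the hypothesis $B' \le B$ holds where the lemma will be applied.
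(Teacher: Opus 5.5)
Your proof is correct and follows the paper's argument. Replacing $I$ by $\widehat{I}$ inside $S$ gives a block of cost at most $B$, hence $\widehat{OPT}_{\ell} \le OPT_{\ell}$, and your reverse swap, justified by $cost(I) \le cost(\widehat{I})$ (which uses $B' \le B$), makes explicit what the paper compresses into ``by reversing the same argument.''
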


\begin{proof}
We construct a solution for \texttt{Partition}$( (\mathcal{I}_{\ell} \setminus \{I\}) \cup \{\widehat{I}\} )$ from an optimal solution of \texttt{Partition}$(\mathcal{I}_{\ell})$ by modifying the block $S$. 
Specifically, we replace the interval $I$ in $S$ with $\widehat{I}$. The cost of the modified block is at most
$B' - cost(I) + cost(\widehat{I}) = B' -  cost(I) + cost(I)+ B - B' = B.$

Hence, the modified block remains feasible.
Therefore, the optimal value of \texttt{Partition}$(\mathcal{I}_{\ell} \setminus \{I\}) \cup \{\widehat{I}\})$ is at most $OPT_{\ell}$.
Conversely, by reversing the same argument, we can construct a feasible solution for \texttt{Partition}$(\mathcal{I}_{\ell})$ from an optimal solution of \texttt{Partition}$(\mathcal{I}_{\ell} \setminus \{I\}) \cup \{\widehat{I}\})$. Thus, both instances have the same optimal value, which completes the proof.
\end{proof}

To analyze the algorithm we now consider some assumptions.
\begin{assumption}
    For any $1 \le \ell \le r+1$ with $m_{\ell} = m_{max}$, we must have $m_{\ell} = \frac{3}{2}OPT_{\ell}$ and $OPT_{\ell} = OPT_{NC}$. Otherwise, the previous algorithm produces at most $\frac{3}{2}OPT_{NC} + \frac{3}{2}$ drones, which is sufficient to achieve our desired objective (Theorem~\ref{thm:main-ddp-nc}). Additionally, we assume that for any $\ell$ with $m_{\ell}^+ = m_{max}^+$, we must have $m_{\ell}^+ = \frac{3}{2}OPT_{NC}$. Otherwise, we can open $(m_{max}^++2)$ drones instead of $(m_{max}^++1)$ drones, which keeps the feasibility condition satisfied and we will achieve our desired result again.
    \label{asum:mod-ddp-nc}
\end{assumption}

\begin{lemma}
    \label{lem:feasibility-mod-ddp-nc}
    Under Assumption \ref{asum:mod-ddp-nc}, \textsc{Mod-Alg-DDP-NC} assigns each of the deliveries in $\cI$ to one of the $(m_{{max}}^+ +1)$ drones, and all the assignments of those drones are feasible. Moreover, for any $1\le \ell \le r+1$, $OPT_{\ell}^+ \leq OPT_{NC}$, where $OPT_{\ell}^+$ is the size of the optimum partition of $\cI_{\ell}$ after the modification.
\end{lemma}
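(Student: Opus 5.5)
The plan has three parts: coverage of deliveries, feasibility by induction on $\ell$ (as in Lemma \ref{lem:cor-ddp-nc}), and the bound $OPT_{\ell}^+ \le OPT_{NC}$.

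\textbf{Coverage.} As in Lemma \ref{lem:cor-ddp-nc}, each delivery lies in a unique $\cI_{\ell}$ by Step-$1^+$. Within $\cI_{\ell}$ it lies in a unique block of either FFD$(\cI_{\ell})$ or the modified FFD$(\cI_{\ell}\setminus\{I_{\ell}^{first}\}\cup\{\widehat I_{\ell}^{first}\})$. The modified interval $\widehat I_{\ell}^{first}$ is the same delivery with an inflated cost, so it still represents $I_{\ell}^{first}$. Step-$4^+$ then gives each block to exactly one drone, so every delivery has a unique drone.

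\textbf{Feasibility.} I would prove by induction on $\ell$ that, after $\cI_{\ell}$ is processed and recharging is done, every drone except $D_{\ell}^{last}$ and the drone of $S_{\ell}^{spl}$ (if it exists) has a feasible assignment and is fully charged. I would also show that the drone of $S_{\ell}^{spl}$ reaches the launch time of $I_{\ell+1}^{first}$ with battery $Ch_{\ell}(B-cost(S_{\ell}^{spl}),I')$.

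There are two cases for $S_{\ell}^{first}$.
\begin{enumerate}
\item If $m_{\ell-1}^+\le m_{max}^+-1$, then $\cI_{\ell-1}$ together with $D_{\ell-2}^{last}$ occupy at most $m_{max}^+$ drones. Among the $m_{max}^++1$ drones, one is therefore untouched by any interval meeting $I_{\ell-2}^c$. This is the same counting as in Lemma \ref{lem:cor-ddp-nc}, and that drone is fully charged.
\item If $m_{\ell-1}^+=m_{max}^+$, the block $S_{\ell}^{first}$ goes to the drone of $S_{\ell-1}^{spl}$. This block was chosen to avoid $I_{\ell-1}^{first}$ and $I_{\ell-1}^{last}$, so it meets no waiting interval. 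Its drone therefore arrives at station $\ell-1$ with $B-cost(S_{\ell-1}^{spl})$ left and charges up to just before $I_{\ell}^{first}$ launches.
\end{enumerate}

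The key computation for the second case is the following. Since $\widehat I_{\ell}^{first}$ sits in a block of cost at most $B$,
\[cost(S_{\ell}^{first}) - cost(I_{\ell}^{first}) + cost(I_{\ell}^{first}) + B - Ch_{\ell-1}(\cdot) \le B,\]
which means the real cost of $S_{\ell}^{first}$ is at most the regained charge $Ch_{\ell-1}(\cdot)$. This gives budget feasibility. Compatibility holds because the drone stops charging before $I_{\ell}^{first}$ starts.

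The remaining at most $m_{\ell}^+-1\le m_{max}^+-1$ blocks go to drones other than $D_{\ell}^{first}$ and $D_{\ell-1}^{last}$. There are $(m_{max}^++1)-2$ such drones, which is enough. Each of them was fully charged at station $\ell-1$, since only $I_{\ell-1}^{last}$ and $I_{\ell}^{first}$ meet $I_{\ell-1}^c$.

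One point needs checking here. The drone of $S_{\ell-1}^{spl}$ is now $D_{\ell}^{first}$ and is excluded from the remaining blocks. So I must confirm that the full-charge invariant is not needed for it, and that $D_{\ell-1}^{last}$ is a different drone.

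\textbf{The bound $OPT_{\ell}^+\le OPT_{NC}$.} If $\cI_{\ell}$ is unchanged, this is Lemma \ref{lem:rel-opt-l-opt}. Otherwise I would apply Lemma \ref{lem:no-change-opt-after-mod} with $B'=Ch_{\ell-1}(B-cost(S_{\ell-1}^{spl}),I')$, which gives $OPT_{\ell}^+=OPT_{\ell}\le OPT_{NC}$. For this I need an optimum partition of $\cI_{\ell}$ in which the block containing $I_{\ell}^{first}$ has cost at most $B'$.

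This is the main obstacle. Modification happens only when $m_{\ell-1}=m_{max}$, so Assumption \ref{asum:mod-ddp-nc} gives $m_{\ell-1}=\frac32 OPT_{\ell-1}$. The contrapositive of Lemma \ref{lem:property-ffd} then says every FFD block of $\cI_{\ell-1}$, and in particular $S_{\ell-1}^{spl}$, has cost at most every optimal block cost of $\cI_{\ell-1}$.

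Next I would take an optimal DDP-NC solution and follow the drone that serves $I_{\ell}^{first}$. Its intervals from $\cI_{\ell-1}$ form an optimal-type block $T$ with $cost(T)\ge cost(S_{\ell-1}^{spl})$. That drone can charge at station $\ell-1$ only up to the launch of $I_{\ell}^{first}$, so its $\cI_{\ell}$-load is at most $Ch_{\ell-1}(B-cost(T),\cdot)\le Ch_{\ell-1}(B-cost(S_{\ell-1}^{spl}),I')=B'$. This step uses monotonicity of $Ch$, which I would assume.

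Projecting the optimal solution onto $\cI_{\ell}$ as in Lemma \ref{lem:rel-opt-l-opt} then yields the required partition. It has at most $OPT_{NC}$ blocks, and the block containing $I_{\ell}^{first}$ costs at most $B'$. So the modified instance has a partition of size at most $OPT_{NC}$, which is what is needed. Equality with $OPT_{\ell}$ is not required.
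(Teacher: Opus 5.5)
Your skeleton is the paper's: induction on $\ell$, reusing the drone of $S_{\ell-1}^{spl}$ for $S_{\ell}^{first}$, Lemma~\ref{lem:property-ffd} to compare $S_{\ell-1}^{spl}$ with the $\cI_{\ell-1}$-part $T$ of the optimal drone $D^*$ serving $I_{\ell}^{first}$, and monotonicity of $Ch_{\ell-1}$. Two of your steps are cleaner than the paper's. You check the budget of $S_{\ell}^{first}$ directly from the inflated cost of $\widehat{I}_{\ell}^{first}$. You also project the optimal solution straight onto the modified instance, instead of invoking Lemma~\ref{lem:no-change-opt-after-mod}, whose hypothesis needs an \emph{optimum} partition of $\cI_{\ell}$.

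The gap is in the step giving $cost(S_{\ell-1}^{spl})\le cost(T)$. You apply the contrapositive of Lemma~\ref{lem:property-ffd} to the original FFD$(\cI_{\ell-1})$, via $m_{\ell-1}=\frac{3}{2}OPT_{\ell-1}$ from the first clause of Assumption~\ref{asum:mod-ddp-nc}. But modifications cascade. If $m_{\ell-2}=m_{\ell-1}=m_{max}$, then $\cI_{\ell-1}$ was itself modified. The blocks actually carried by drones for $\cI_{\ell-1}$ then come from FFD on $(\cI_{\ell-1}\setminus\{I_{\ell-1}^{first}\})\cup\{\widehat{I}_{\ell-1}^{first}\}$. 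This includes $S_{\ell-1}^{spl}$, which must be one of them because its drone is reused. A property of the blocks of FFD$(\cI_{\ell-1})$ says nothing about them.

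This is why the paper puts $OPT_{\ell}^+\le OPT_{NC}$ inside the lemma and carries it in the induction hypothesis. From $OPT_{\ell-1}^+\le OPT_{NC}$, the second clause of the Assumption ($m_{\ell-1}^+=\frac{3}{2}OPT_{NC}$), and the FFD bound $m_{\ell-1}^+\le\frac{3}{2}OPT_{\ell-1}^+$, it derives $m_{\ell-1}^+=\frac{3}{2}OPT_{\ell-1}^+$ and $OPT_{\ell-1}^+=OPT_{NC}$. Only then does it apply Lemma~\ref{lem:property-ffd}, to the partition actually in use. You treat the bound as a free-standing consequence and never use that clause.

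To repair this, carry $OPT_{\ell-1}^+=OPT_{NC}$ inductively and compare in modified costs.
\begin{itemize}
\item Let $\widehat{T}$ be $T$ with $I_{\ell-1}^{first}$ (if present) replaced by $\widehat{I}_{\ell-1}^{first}$. Your own step-$(\ell-1)$ projection, together with $OPT_{\ell-1}^+=OPT_{NC}$, makes $\widehat{T}$ a block of an optimum partition of the modified instance. Lemma~\ref{lem:property-ffd} then gives $cost(S_{\ell-1}^{spl})\le cost(\widehat{T})$.
\item The real-cost inequality $cost(S_{\ell-1}^{spl})\le cost(T)$ no longer follows. You therefore also need the inductive fact, established at step $\ell-1$, that $D^*$ holds at most $Ch_{\ell-2}(B-cost(S_{\ell-2}^{spl}),\cdot)$ just before $I_{\ell-1}^{first}$; this is the quantity defining $\widehat{I}_{\ell-1}^{first}$.
\item Then $D^*$ reaches station $\ell-1$ with at most $B-cost(\widehat{T})\le B-cost(S_{\ell-1}^{spl})$, and your monotonicity argument goes through.
\end{itemize}

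A smaller point: your feasibility case split is on $m_{\ell-1}^+=m_{max}^+$. The existence of $\widehat{I}_{\ell}^{first}$ and your appeal to the Assumption are instead keyed to $m_{\ell-1}=m_{max}$. These triggers need not coincide, so either fix one throughout or show they agree. The paper's proof conditions everything on $m_{\ell-1}^+=m_{max}^+$.
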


\begin{proof}
    By a similar argument as of Lemma \ref{lem:cor-ddp-nc}, it is evident that the modified algorithm also assign each of the delivery to one of the drones. The rest of the lemma claim's by induction on $\ell$, where $\ell$ denotes the index of the $\ell$-th interval set $\cI_{\ell}$.
    For $\ell = 1$, there is no modification on $\cI_1$, and so all the assignment is feasible after the processing of $\cI_1$, by Lemma \ref{lem:cor-ddp-nc}. Moreover, $OPT_1^+ = OPT_1 \leq OPT_{NC}$, by Lemma \ref{lem:rel-opt-l-opt}. 

    We assume that, for any $1\le \ell' \le \ell-1$, $OPT_{\ell'} \le OPT_{NC}$, and the assignments of the drones are feasible up to the processing of the intervals of $\cI_{\ell-1}$. Now consider the interval set $\cI_{\ell}$. We prove the feasibility condition for the drone that was assigned to the block $S_{\ell}^{first}$. For all other blocks, we always have sufficient number of drones with full capacity after the time $t_{\ell-1}^D$, and hence assigning those blocks uniquely to the drones (as described in Step-$4^+$) keep the feasibility condition satisfied.
    If $m_{\ell-1}^+ \le m_{max}^+$, then the algorithm consider the partition of $\cI_{\ell}$ without modification. So, $OPT_{\ell}^+ = OPT_{\ell} \le OPT_{NC}$. In addition, when $m_{\ell-1}^+ \le m_{max}^+$, there is always a drone $D$ that is not assigned either to $\cI_{\ell-1}$ or to $S_{\ell-2}^{last}$. The drone must be of full capacity before processing of $\cI_{\ell}$, and so assigning $S_{\ell}^{first}$ to $D$ keep the feasibility condition satisfied.
    
    Now consider the case, $m_{\ell-1}^+ = m_{max}^+$. Then, from the induction hypothesis, $OPT_{\ell-1}^+ \le OPT_{NC}$. From Assumption \ref{asum:mod-ddp-nc}, we have $m_{\ell-1}^+ = \frac{3}{2}OPT_{NC}$. From the bound of FFD, we have $m_{\ell-1}^+ \leq \frac{3}{2}OPT_{\ell-1}^+$. All these three relations implies that,
    \[m_{\ell-1}^+ = \frac{3}{2}OPT_{\ell-1}^+ \text{~and~} OPT_{\ell-1}^+ = OPT_{NC}\]

    Let $S_{\ell-1}^{spl}$ be the block of $\mathcal{I}_{\ell-1}$ that does not contain $I_{\ell-1}^{first}$ or $I_{\ell-1}^{last}$. Let $D^*$ be the drone in the optimal solution assigned to $I_{\ell}^{first}$, and let $S^*$ denote the intervals assigned to $D^*$ in $\mathcal{I}_{\ell-1}$. Since $m_{\ell-1}^+ = \frac{3}{2}OPT_{\ell-1}^+$ and $OPT_{\ell-1}^+ = OPT_{NC}$, Lemma~\ref{lem:property-ffd} implies $cost(S_{\ell-1}^{spl}) < cost(S^*)$.

    Let $D$ be the drone assigned to $S_{\ell-1}^{spl}$ in our algorithm. Because $S_{\ell-1}^{spl}$ does not contain $I_{\ell-1}^{first}$, drone $D$ has full capacity before processing this block. 
    Hence, after the processing of $\cI_{\ell-1}$, drone $D$ has remaining capacity at least the capacity of $D^*$. Let $I'$ and $I''$ are the charging intervals of the drones $D$ and $D^*$ at the station $(\ell-1)$, respectively. From our algorithm, $I'=[t_{\ell-1}^A, t']$. The end-time for $I''$ is also at most $t'$. Then, before the processing of $\cI_{\ell}$, the remaining capacity of $D$ (say $rem$) is at least the remaining capacity of the drone $D^*$ (say $rem^*$). Therefore, assigning $D$ to $I_{\ell}^{first}$ (and so the block $S_{\ell}^{first}$) does not violate the feasibility condition.
    
    In our case, $rem = Ch_{\ell-1}(B-cost(S_{\ell-1}^i), I').$
    Since, the total cost of intervals in $\mathcal{I}_{\ell}$ assigned to $D^*$ is at most $rem$, by Lemma~\ref{lem:no-change-opt-after-mod}, the optimal value of \texttt{Partition}$(\mathcal{I}_{\ell})$ equals that of \texttt{Partition}$((\mathcal{I}_{\ell} \setminus \{I_{\ell}^{first}\}) \cup \{\widehat{I}_{\ell}^{first}\})$, where $cost(\widehat{I}_{\ell}^{first}) = cost(I_{\ell}^{first}) + B - rem$. In other words, $OPT_{\ell}^+ = OPT_{\ell}$. Hence, by Lemma \ref{lem:rel-opt-l-opt}, $OPT_{\ell}^+ \le OPT_{NC}$ holds. 
\end{proof}

\begin{lemma}
    Under Assumption \ref{asum:mod-ddp-nc}, \textsc{Mod-Alg-DDP-SC} uses at most $(\frac{3}{2}OPT_{NC} + 1)$ drones.
    \label{lem:bound-mod-ddp-nc}
\end{lemma}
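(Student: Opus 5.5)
The bound follows by combining Lemma~\ref{lem:feasibility-mod-ddp-nc} with the absolute $\frac{3}{2}$ guarantee of FFD and Assumption~\ref{asum:mod-ddp-nc}. By Lemma~\ref{lem:feasibility-mod-ddp-nc}, the modified algorithm assigns every delivery feasibly to one of $(m_{max}^+ + 1)$ drones. So it suffices to show $m_{max}^+ \le \frac{3}{2}OPT_{NC}$.

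Let $\ell^+ = \arg\max_{\ell} m_{\ell}^+$. The partition of $\cI_{\ell^+}$ used in Step-$4^+$ is produced by FFD on either $\cI_{\ell^+}$ itself or the modified set $(\cI_{\ell^+}\setminus\{I_{\ell^+}^{first}\})\cup\{\widehat{I}_{\ell^+}^{first}\}$. In both cases we argue as follows.

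First, the absolute ratio of FFD~\cite{simchi1994new} gives $m_{\ell^+}^+ \le \frac{3}{2}OPT_{\ell^+}^+$, where $OPT_{\ell^+}^+$ is the optimum partition size of the (possibly modified) instance.

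Second, the "moreover" part of Lemma~\ref{lem:feasibility-mod-ddp-nc} gives $OPT_{\ell^+}^+ \le OPT_{NC}$. Hence
\[
m_{max}^+ = m_{\ell^+}^+ \le \tfrac{3}{2}OPT_{\ell^+}^+ \le \tfrac{3}{2}OPT_{NC}.
\]
This is also consistent with Assumption~\ref{asum:mod-ddp-nc}, which states $m_{\ell^+}^+ = \frac{3}{2}OPT_{NC}$ in the only case not already handled. Adding the one extra drone opened in Step-$3^+$ gives at most $\frac{3}{2}OPT_{NC}+1$ drones.

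The main obstacle is making sure the FFD bound is applied to the correct instance. After modification, FFD runs on an instance with an inflated cost for $\widehat{I}_{\ell}^{first}$, and we need $cost(\widehat{I}_{\ell}^{first}) \le B$ for it to be a valid bin-packing instance. I would check this through Lemma~\ref{lem:no-change-opt-after-mod}: in the optimum, the block containing $I_{\ell}^{first}$ has cost at most $rem \le B$. Therefore $cost(\widehat{I}_{\ell}^{first}) \le B - (rem - cost(I_{\ell}^{first})) \le B$.

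Once this is established, Lemma~\ref{lem:no-change-opt-after-mod} gives $OPT_{\ell}^+ = OPT_{\ell}$, and Lemma~\ref{lem:rel-opt-l-opt} gives $OPT_{\ell} \le OPT_{NC}$, so the chain of inequalities above goes through.

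Finally, if $m_{\ell^+}^+ < \frac{3}{2}OPT_{NC}$, then integrality gives $m_{max}^+ \le \frac{3}{2}OPT_{NC} - \frac{1}{2}$. In that case even the fallback of opening $m_{max}^+ + 2$ drones mentioned in the assumption stays within the bound.
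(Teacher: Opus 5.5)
Your proof is correct and follows the paper's own argument: $m_{max}^+ + 1 = m_{\ell^*}^+ + 1 \le \frac{3}{2}OPT_{\ell^*}^+ + 1 \le \frac{3}{2}OPT_{NC} + 1$, using the absolute $\frac{3}{2}$ ratio of FFD and the ``moreover'' part of Lemma~\ref{lem:feasibility-mod-ddp-nc}; the paper writes $m_{max}$ where $m_{max}^+$ is meant, and you correctly use $m_{max}^+$, while your check that $cost(\widehat{I}_{\ell}^{first}) \le B$ is a sensible detail the paper leaves implicit. Your closing remark about the fallback lies outside the lemma's scope. Note also that the fallback only gives $\frac{3}{2}OPT_{NC}+\frac{3}{2}$ (the theorem's bound), not $\frac{3}{2}OPT_{NC}+1$.
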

\begin{proof}
    Let $\ell^* = \arg \limits_{1 \le \ell \le r+1} m_{\ell}^+$. Then, the number of used drones by the modified algorithm is 
    \begin{align*}
       m_{max}+1 & = m_{\ell^*}+1
       \le \frac{3}{2}OPT_{\ell^*} + 1,\\ %\text{~~from the bound on FFD}\\
       & \le \frac{3}{2}OPT_{NC} + 1, \text{~~from Lemma \ref{lem:feasibility-mod-ddp-nc}}
   \end{align*}
\end{proof}
By combining Theorem~\ref{thm:alg-ddp-nc} and Lemma~\ref{lem:bound-mod-ddp-nc}, together with a relaxation of Assumption~\ref{asum:mod-ddp-nc}, we obtain the following result.
\begin{theorem}
        \label{thm:main-ddp-nc}
        There exists an approximation algorithm for DDP-NC that uses at most\\ $\min\{\frac{11}{9}OPT_{NC} + 2, \frac{3}{2}OPT_{NC} + \frac{3}{2}\}$ drones.
    \end{theorem}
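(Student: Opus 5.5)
The plan is to run both algorithms on the instance and return the one that uses fewer drones. Each returns a feasible assignment, so the result is feasible, and its size is the minimum of the two counts. It then suffices to bound each algorithm separately.

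\emph{First bound.} Take \textsc{AlgoFor-DDP-NC} with FFD in Step-2. By Theorem~\ref{thm:alg-ddp-nc}, it uses at most $m_{max}+2$ drones, with $m_{max}=m_{\ell^*}\le \alpha\, OPT_{\ell^*}$ and $OPT_{\ell^*}\le OPT_{NC}$ by Lemma~\ref{lem:rel-opt-l-opt}. I use the tight asymptotic FFD bound from \cite{tightFactorFFD}, $m_{\ell}\le \frac{11}{9}OPT_{\ell}+\frac{6}{9}$. Taking floors of integers gives $m_{max}\le\lfloor \frac{11}{9}OPT_{NC}+\frac{2}{3}\rfloor$. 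The stated additive constant $2$ (rather than the $\frac{24}{9}$ of Discussion~\ref{disc:ddp-nc-1}) needs the extra $\frac{6}{9}$ absorbed by integrality. If that absorption fails, I would state this branch with $\frac{24}{9}$, which is what the preceding argument directly supports, and note that the theorem's constant for this branch should be read accordingly.

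\emph{Second bound.} I would prove $\frac{3}{2}OPT_{NC}+\frac{3}{2}$ by relaxing Assumption~\ref{asum:mod-ddp-nc} through a case split on its clauses.
\begin{itemize}
\item If $OPT_{NC}=1$, then $m_{max}=1$ and $3\le 3$.
\item If some $\ell$ with $m_\ell=m_{max}$ has $m_\ell<\frac{3}{2}OPT_\ell$ or $OPT_\ell<OPT_{NC}$, then integrality and $m_\ell\le\frac32 OPT_\ell$ give either $m_{max}\le\frac32 OPT_{NC}-\frac12$, or $m_{max}\le \frac32(OPT_{NC}-1)$. In both cases the unmodified algorithm already uses at most $\frac32 OPT_{NC}+\frac32$ drones.
\item Otherwise the first clause of Assumption~\ref{asum:mod-ddp-nc} holds and I run \textsc{Mod-Algo-DDP-NC}. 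If $m^+_{max}=\frac32 OPT_{NC}$, Lemmas~\ref{lem:feasibility-mod-ddp-nc} and~\ref{lem:bound-mod-ddp-nc} give at most $\frac32 OPT_{NC}+1$ drones.
\item If $m^+_{max}<\frac32 OPT_{NC}$, then $m^+_{max}\le \frac32 OPT_{NC}-\frac12$, and opening $m^+_{max}+2$ drones yields at most $\frac32 OPT_{NC}+\frac32$. This requires arguing that the feasibility proof of Lemma~\ref{lem:feasibility-mod-ddp-nc} goes through without the second clause when one extra drone is available. With $m^+_{max}+2$ drones we are back in the counting regime of Lemma~\ref{lem:cor-ddp-nc}: after $\cI_{\ell-1}$ and $D^{last}_{\ell-2}$, a fully charged drone always remains for $S^{first}_\ell$.
\end{itemize}
Note also Lemma~\ref{lem:rel-m-m+}: if $m^+_{max}=m_{max}+1$, then $m^+_{max}+1=m_{max}+2$, which is consistent with the bounds above.

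The main obstacle is this last relaxation. The invariant $OPT^+_{\ell}\le OPT_{NC}$ from Lemma~\ref{lem:feasibility-mod-ddp-nc} must still hold so that FFD's $\frac32$ bound applies to the modified partitions, while feasibility is justified by the surplus drone instead of Lemma~\ref{lem:property-ffd}. I would handle it by induction on $\ell$, mirroring Lemma~\ref{lem:feasibility-mod-ddp-nc}. At indices where $m^+_{\ell-1}=m^+_{max}=\frac32 OPT_{NC}$, the original special-block argument is reused. At all other indices the cost modification is not applied, so $OPT^+_\ell=OPT_\ell$. Taking the minimum of the two branches gives the theorem.
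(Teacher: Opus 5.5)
\textbf{A gap remains in the first branch.} Your overall route is the paper's. You return the better of \textsc{AlgoFor-DDP-NC} (with FFD in Step-2) and \textsc{Mod-Algo-DDP-NC}. You bound the first via Theorem~\ref{thm:alg-ddp-nc} and the second via Lemma~\ref{lem:bound-mod-ddp-nc} plus a relaxation of Assumption~\ref{asum:mod-ddp-nc}.

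The gap is the branch you left conditional, and it cannot be closed along this route, because the integrality absorption fails. D\'osa's bound is attained. Take bin size $1$ and four items of size $\frac12+\varepsilon$, four of $\frac14+2\varepsilon$, four of $\frac14+\varepsilon$, and eight of $\frac14-2\varepsilon$. These fit into $6$ bins: four copies of $\{\frac12+\varepsilon,\frac14+\varepsilon,\frac14-2\varepsilon\}$ and two of $\{\frac14+2\varepsilon,\frac14+2\varepsilon,\frac14-2\varepsilon,\frac14-2\varepsilon\}$. FFD, however, opens $8$ bins.

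Scale these by $B$ and place them as non-conflicting intervals in $\cI_1$, with all stations after them. Then $OPT_{NC}=6$ and $m_{max}=8$, so $m_{max}+2=10>\frac{11}{9}\cdot 6+2$. The $\frac32$-branch cannot compensate, since $\frac{11}{9}x+2<\frac32x+\frac32$ for every $x\ge 2$. Here $m_{max}\ne\frac32 OPT_{NC}$, so the first clause of the assumption fails and the fallback is again $m_{max}+2$.

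What this argument actually proves is $\frac{11}{9}OPT_{NC}+\frac{24}{9}$. That is exactly what the paper derives in Discussion~\ref{disc:ddp-nc-1} and states in the abstract. The ``$+2$'' in the theorem comes from plugging the asymptotic ratio $\frac{11}{9}$ into Theorem~\ref{thm:alg-ddp-nc} as if it were an absolute ratio. So commit to $\frac{24}{9}$ and say explicitly that the stated constant is not supported, rather than leaving it as an open ``if''.

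\textbf{The $\frac32$-branch: right outline, misplaced case split.} Under the first clause of the assumption you have $m_{max}=\frac32 OPT_{NC}$. By Lemma~\ref{lem:rel-m-m+}, your last bullet ($m^+_{max}<\frac32 OPT_{NC}$) can therefore never occur. Meanwhile the case $m^+_{max}=m_{max}+1$ is not ``consistent with the bounds above'': it would give $\frac32 OPT_{NC}+2$ drones.

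That case is excluded only by the invariant $OPT^+_\ell\le OPT_{NC}$. The invariant gives $m^+_{max}\le\frac32 OPT_{NC}$, hence $m^+_{max}=m_{max}$. The second clause of the assumption then holds automatically, and no surplus drone is ever needed.

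The induction for the invariant works if you trigger the modification at level $\ell$ by $m^+_{\ell-1}\ge m_{max}$, which is computable online, unlike $m^+_{max}$.
\begin{itemize}
\item At such an index, the inductive bound $m^+_{\ell-1}\le\frac32 OPT^+_{\ell-1}\le\frac32 OPT_{NC}=m_{max}$ forces $m^+_{\ell-1}=\frac32 OPT^+_{\ell-1}$ and $OPT^+_{\ell-1}=OPT_{NC}$. This is precisely the hypothesis Lemma~\ref{lem:property-ffd} needs.
\item Otherwise $m^+_{\ell-1}\le m_{max}-1$, which leaves a fully charged drone among the $m_{max}+1$ drones.
\end{itemize}

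\textbf{The combination step.} The algorithm cannot test which case it is in, because the clauses involve $OPT_\ell$ and $OPT_{NC}$. So ``each returns a feasible assignment'' needs an argument. Either prove \textsc{Mod-Algo-DDP-NC} feasible unconditionally, or check its output for feasibility and fall back to \textsc{AlgoFor-DDP-NC} when the check fails.
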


    \begin{discussion}
        A similar analysis applies when charging stations are replaced by swapping stations. A drone $D$ charges only partially at station $\ell$ when it is assigned to the block $S_{\ell+1}^{first}$. In case of the station being the swapping station, we modify the algorithm so that $D$ does not swap its battery at station $\ell$. In this case, a drone $D^*$ in the optimal solution assigned to $I_{\ell+1}^{first}$ also cannot swap its battery at station $\ell$. Consequently, $OPT_{\ell+1}^+ \le OPT_{NC}$, and Theorem~\ref{thm:main-ddp-nc} remains true for swapping stations.
    \end{discussion}

\section{Approximation Algorithm for DDP-SC}
\label{sec:ddp-sc}
Here we demonstrate \textsc{AlgoFor-DDP-SC}, an approximation algorithm for the \emph{Drone Delivery Packing Problem with Battery Stations and Conflicting Intervals} (DDP-SC). The algorithm works for both swapping and charging stations.
The algorithm uses both the previous algorithm as its subroutine. Later in this section, we present a modified algorithm for the case where the battery service stations operate as swapping stations.
The algorithm is illustrated step by step.%\newline

\textbf{Step-1:} \textit{~Construct the interval graph $G$ from the given set of delivery time intervals. Find $\omega$, the maximum clique size of the interval graph.}%\newline

\textbf{Step-2:} \textit{~Divide the interval set $\cI$ into disjoint subset of intervals $\{\mathcal{I}_1, \mathcal{I}_2, \ldots, \mathcal{I}_{r+1}\}$, where
% $\cI_1 = \{I_j ~|~ t_j^L < t_1^A\}$, and for
% $2 \leq \ell \leq r+1,~ \cI_{\ell} = \{I_j ~|~  I_j\notin \bigcup\limits_{\ell'=1}^{\ell-1} \cI_{\ell'} \text{~and~} (t_j^L < t_{\ell}^A)\}$.
$\cI_1 = \{I_j ~|~ t_j^L < t_1^A\}$, and for 
$2 \leq \ell \leq r, ~ \cI_{\ell} = \{I_j ~|~ t_{\ell -1}^A \leq t_j^L < t_{\ell}^A\}$, and $\cI_{r+1} = \{I_j ~|~ t_j^L \geq t_{\ell}^A\}$.} %\newline

Analogous to the Section \ref{sec:ddp-nc}, we introduce some notions. Unlike DDP-NC, there can be multiple intervals of $\cI_{\ell}$ containing $t_{\ell-1}^D$ and $t_{\ell}^A$. %Furthermore, an interval can be so long that it intersects multiple waiting intervals.
For $\ell \geq 2$, the set of all the intervals in $\cI_{\ell}$ containing $t_{\ell-1}^D$ is denoted by $\cI_{\ell}^{first}$. For $\ell \leq r$, the set of all the intervals in $\cI_{\ell}$ containing $t_{\ell}^A$ is denoted by $\cI_{\ell}^{last}$. We also extend the definition of a \emph{block}, where a block is a compatible set of intervals with total cost at most $B$.
%Now, for each $\ell$, we divide the interval set $\cI_{\ell}$ into a disjoint set of blocks using Algorithm \ref{Alg:approx_ddp_color}, so that we can assign a drone to each of those blocks. Since any assignment returned by Algorithm \ref{Alg:approx_ddp_color} is feasible (Lemma \ref{lem:cor+compl-ddp-ns}) and does not contain any waiting interval, they can be referred to as a block as well.%\newline

\textbf{Step-3:} \textit{~For each $1 \leq \ell \leq r+1$, we divide the interval set $\cI_{\ell}$ into a disjoint set of blocks using Algorithm \ref{Alg:approx_ddp_color}. Let $m_{\ell}$ be the number of blocks returned by this step for the set $\cI_{\ell}$.}
%\newline
Since any assignment returned by Algorithm \ref{Alg:approx_ddp_color} is feasible (Lemma \ref{lem:cor+compl-ddp-ns}) and does not contain any waiting interval, they can be referred to as a block as well.

\textbf{Step-4:} \textit{~Find $m_{max} = \max\limits_{1\le \ell \le r+1}m_{\ell}$. used $m_{max}+2\omega$ drones.}%\newline

We now aim to assign a drone to each of the blocks, starting from the blocks of $\cI_1$.
For $\ell \geq 2$, the set of blocks containing the intervals of $\cI_{\ell}^{first}$
%$\cI_{\ell}^{first} \bigcap \cI_{\ell}$
is denoted by $\mathcal{S}_{\ell}^{first}$. For $\ell \leq r$, the set of blocks containing the intervals of $\cI_{\ell}^{last}$
%$\cI_{\ell}^{last} \bigcap \cI_{\ell}$
is denoted by $\mathcal{S}_{\ell}^{last}$. Moreover, the drones assigned to the intervals of $\cI_{\ell}^{first}$ and $\cI_{\ell}^{last}$ is denoted by $\mathcal{D}_{\ell}^{first}$ and $\mathcal{D}_{\ell}^{last}$, respectively.
%Note that the drones in $\mathcal{D}_{\ell}^{first}$ and $\mathcal{D}_{\ell}^{last}$ may contain the intervals not in $\cI_{\ell}$.
For completeness, we set $\cI_0 = \cD_{-1}^{last} = \cD_0^{last} = \emptyset.$ %\newline

\textbf{Step-5:} \textit{~For $\cI_{\ell}$, we assign the blocks of $\mathcal{S}_{\ell}^{first}$ uniquely to those drones that are not assigned to any of the blocks of $\cI_{\ell-1}$ and not in $\cD_{\ell-2}^{last}$. The remaining blocks are assigned uniquely to the drones except $\cD_{\ell-1}^{last}$ and $\cD_{\ell}^{first}$. After the assignments, if $\ell \le r$, we charge all the used drones (having remaining batteries less than $B$) except those in $\cD_{\ell}^{last}$ for the entire waiting interval $\cI_{\ell}^c$.}%\newline

\noindent We now analyze the algorithm. Lemma \ref{lem:cor-ddp-sc} and \ref{lem:rel-opt-l-opt-sc} are analogous to Lemma \ref{lem:cor-ddp-nc} and \ref{lem:rel-opt-l-opt}, resp.
\begin{lemma}
\label{lem:cor-ddp-sc}
    \textsc{AlgoFor-DDP-SC} assigns each of the deliveries in $\cI$ to one of the $(m_{max}+2\omega)$ drones, and the assignments of all those drones are feasible.
\end{lemma}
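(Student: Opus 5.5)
The plan mirrors the proof of Lemma~\ref{lem:cor-ddp-nc}, with the single boundary interval $I_\ell^{first}$ (resp.\ $I_\ell^{last}$) replaced by the sets $\cI_\ell^{first}$ (resp.\ $\cI_\ell^{last}$). The extra $2\omega$ drones (instead of $2$) will absorb these sets.

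\emph{Coverage.} By Step-2 every delivery lies in exactly one $\cI_\ell$. By Lemma~\ref{lem:cor+compl-ddp-ns}, Algorithm~\ref{Alg:approx_ddp_color} partitions $\cI_\ell$ into $m_\ell$ blocks. Each block is compatible and has cost at most $B$, so any drone with a full battery can execute it. Since Step-5 assigns each block to exactly one drone, every delivery is served by exactly one drone.

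\emph{Key counting fact.} All intervals of $\cI_\ell^{first}$ contain the common point $t_{\ell-1}^D$, so they are pairwise conflicting and $|\cI_\ell^{first}|\le\omega$. Hence $|\mathcal{S}_\ell^{first}|\le\omega$ and $|\cD_\ell^{first}|\le\omega$. The same argument with the point $t_\ell^A$ gives $|\cI_\ell^{last}|\le\omega$, so $|\mathcal{S}_\ell^{last}|\le\omega$ and $|\cD_\ell^{last}|\le\omega$.

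\emph{Induction on $\ell$.} The hypothesis is that all assignments are feasible after processing $\cI_{\ell-1}$, and that every drone outside $\cD_{\ell-1}^{last}$ has been fully recharged during $I_{\ell-1}^c$. For $\ell=1$ we have at most $m_{max}$ blocks and fresh drones, so the claim is trivial.

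\emph{Blocks of $\mathcal{S}_\ell^{first}$.} These are assigned to drones outside $\cD_{\ell-2}^{last}$ and outside the drones used for $\cI_{\ell-1}$. There are at most $m_{\ell-1}+\omega\le m_{max}+\omega$ excluded drones, so at least $\omega\ge|\mathcal{S}_\ell^{first}|$ candidates remain.

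\emph{Why the candidates are full.} Each such drone carries no interval of $\cI_{\ell-1}$ and none of $\cI_{\ell-2}^{last}$. By the no-containment and single-station-intersection assumptions, these are the only intervals that can meet $I_{\ell-2}^c$. So the drone was fully recharged (or swapped) at station $\ell-2$. It has done nothing since, hence it is full, and it is free during $I_\ell$-time.

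\emph{Compatibility with the station.} A drone in $\cD_\ell^{first}$ is not charged at station $\ell-1$. Its new block begins by overlapping $I_{\ell-1}^c$, which is exactly why it had to be full beforehand.

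\emph{Remaining blocks of $\cI_\ell$.} There are at most $m_\ell-|\mathcal{S}_\ell^{first}|\le m_{max}$ of them. They go to drones outside $\cD_{\ell-1}^{last}\cup\cD_\ell^{first}$, a set of at most $2\omega$ drones, so at least $m_{max}$ drones are available. Each available drone carries no interval meeting $I_{\ell-1}^c$. The only intervals meeting $I_{\ell-1}^c$ lie in $\cI_{\ell-1}^{last}\cup\cI_\ell^{first}$, and the blocks containing them are by definition held by $\cD_{\ell-1}^{last}\cup\cD_\ell^{first}$. So by Step-5 at $\ell-1$ these drones were fully charged over the whole of $I_{\ell-1}^c$. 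Their new blocks start after $t_{\ell-1}^D$, so the budget and compatibility constraints hold.

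\emph{Closing the induction.} After the assignment, Step-5 charges every drone not in $\cD_\ell^{last}$ during $I_\ell^c$. This is legal because only blocks in $\mathcal{S}_\ell^{last}$ (from $\cI_\ell$) or $\mathcal{S}_{\ell+1}^{first}$ (assigned later, to drones just charged) can meet $I_\ell^c$. This re-establishes the hypothesis.

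\emph{Main obstacle.} The delicate point is the bookkeeping that the drones of $\cD_\ell^{first}$ chosen in the first phase are not double-booked against the others. One must also check that a drone in $\cD_{\ell-1}^{last}$ or $\cD_{\ell}^{first}$ never needs charging at a station its intervals cross. This relies on the assumption that no delivery interval meets two waiting intervals, so that $\cI_\ell^{first}\cap\cI_\ell^{last}$ causes no conflict. The counting itself needs only $|\cD^{first}|,|\cD^{last}|\le\omega$, which is the clique bound.
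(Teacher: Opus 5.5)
Your overall route is the paper's own: coverage via Step-2 and Lemma~\ref{lem:cor+compl-ddp-ns}, the clique bound $|\cD_\ell^{first}|,|\cD_\ell^{last}|\le\omega$, and the same two-phase count. At most $m_{max}+\omega$ drones are excluded for $\mathcal{S}_\ell^{first}$, and at most $2\omega$ are excluded for the remaining blocks.

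The problem is the charging invariant you carry through the induction. You assume that every drone outside $\cD_{\ell-1}^{last}$ was recharged during $I_{\ell-1}^c$. In the closing step you also say the blocks of $\mathcal{S}_{\ell+1}^{first}$ go ``to drones just charged''. Taken literally, this breaks the first phase. Every interval of $\cI_\ell^{first}$ contains $t_{\ell-1}^D\in I_{\ell-1}^c$. A drone that charged or swapped over $I_{\ell-1}^c$ cannot be scheduled for any delivery meeting that interval. Your sentence ``a drone in $\cD_\ell^{first}$ is not charged at station $\ell-1$'' is exactly the fact you need. However, it contradicts your own hypothesis, and you give no reason for it.

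The repair is the qualifier in Step-5 that you dropped: only used drones with remaining battery strictly below $B$ are charged. State the invariant as follows: after processing $\cI_{\ell-1}$, every drone outside $\cD_{\ell-1}^{last}$ is full at time $t_{\ell-1}^D$. Now take a candidate for $\mathcal{S}_\ell^{first}$, i.e.\ a drone with no block of $\cI_{\ell-1}$ and not in $\cD_{\ell-2}^{last}$.
\begin{itemize}
\item By the invariant one step earlier, it was full at $t_{\ell-2}^D$.
\item It received nothing from $\cI_{\ell-1}$, so it was still full when Step-5 ran for $\ell-1$.
\item Hence it was \emph{not} charged at station $\ell-1$, and it stays idle throughout $I_{\ell-1}^c$.
\end{itemize}
So it can legally take an interval containing $t_{\ell-1}^D$. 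This also corrects your remark that such a drone ``was fully recharged at station $\ell-2$''; it may simply have been full, and being full is all that matters.

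The same reasoning one index later is what actually makes your closing paragraph correct. The future hosts of $\mathcal{S}_{\ell+1}^{first}$ are drones that are full when Step-5 runs for $\ell$, and therefore skip station $\ell$; they are not drones that were just charged.

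The paper's proof also leaves this point implicit, but it never asserts the opposite. With the corrected invariant, used with a two-step look-back (strong induction), the rest of your argument goes through as written.
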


\begin{proof}
    From Step-2, an interval $I$ of $\cI$ is part of exactly one $\cI_{\ell}$. Algorithm \ref{Alg:approx_ddp_color} assigns $I$ to a unique assignment, by Lemma \ref{lem:cor+compl-ddp-ns}, and therefore to a unique drone.

    We now prove the feasibility of the assignments by induction on $\ell$, the index of the interval set $(\cI_{\ell})$, where $1 \leq \ell \le r+1$. 
    Initially, the assignments of all the drones are empty.
    For $\cI_1$, algorithm divides the intervals into blocks by using Algorithm \ref{Alg:approx_ddp_color}. By Lemma \ref{lem:cor+compl-ddp-ns}, all those blocks are feasible,
    %, i.e, intervals in each of the blocks are compatible, and the total cost of those intervals is at most $B$.
    and so assigning all intervals of a block to a drone holds the feasibility condition.

    Let us assume that the assignments of all the drones are feasible up to the processing of the intervals of $\cI_{\ell-1}$. 
    Now consider the assignment for the intervals of $\cI_{\ell}$. The algorithm first assigns a block in $\mathcal{S}_{\ell}^{first}$ to those drones $\cD$ that are neither assigned to the blocks of $\cI_{\ell-1}$ nor in $\cD_{\ell-2}^{last}$.
    %Since any arbitrary point may contain in at most $\omega$ many intervals of $\cI$, there are at most $\omega$ many intervals both in $\cI_{j}^{first}$ and $\cI_{j}^{last}$, and so at most $\omega$ many drones both in $\cD_{j}^{first}$ and $\cD_{j}^{last}$, for $j< \ell$.
    Let $I$ be an interval in $\cI_{\ell}^{first}$. Among all the intervals in $\cup_{j=1}^{\ell-1} \cI_j$, the intervals in $\cI_{\ell-1} \bigcup \cI_{\ell-2}^{last}$ are the only intervals that can intersect with $I$. The number of drones assigned to $\cI_{\ell-1}$ is $m_{\ell-1} \leq m_{max}$ and there are at most $\omega$ many drones in the set $\cD_{\ell-2}^{last}$. Therefore, we have at least $\omega$ many drones in $\mathcal{D}$. The assignments of those drones do not have an interval that intersects with the waiting interval $I_{\ell-2}^c$, and so they are of full capacity before the processing of the intervals of $\cI_{\ell}$. Consequently, assigning the blocks of $\mathcal{S}_{\ell}^{first}$ uniquely to the drones in $\cD$, keeping the feasibility condition satisfied.
    All the remaining blocks of $\cI_{\ell}$ contain the intervals with the launch time greater than $t_{\ell-1}^D$.
   All the drones except that in $\cD_{\ell-1}^{last} \bigcup \cD_{\ell}^{first}$ do not intersect with the waiting interval $I_{\ell-1}^c$, ans so they are of full capacity at $t_{\ell-1}^D$, after recharging at the station $(\ell-1)$.
   Since, number of drones in $\cD_{\ell-1}^{last} \bigcup \cD_{\ell}^{first}$ is at most $2\omega$, there are at least $m_{max}$ drones, which is of full capacity at $t_{\ell-1}^D$, and thus can be uniquely assigned to the remaining at most $m_{max}$ blocks of $\cI_{\ell}$ without violating the feasibility of the assignments. Thus, the assignments of all the drones remain feasible after the processing of the intervals in $\cI_{\ell}$.
\end{proof}

\begin{lemma}
    \label{lem:rel-opt-l-opt-sc}
    For any $\ell : 1 \le \ell \le r+1$, $OPT_{\ell} \leq OPT_{SC}$, where $OPT_{\ell}$ is the number of blocks in an optimum partition of $\cI_{\ell}$, and $OPT_{NC}$ is the optimum solution for DDP-SC.
\end{lemma}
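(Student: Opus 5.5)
The plan is to mirror the proof of Lemma \ref{lem:rel-opt-l-opt}: restrict an optimal DDP-SC schedule to $\cI_{\ell}$ and show that each restriction is a block in the extended sense used in this section, namely a compatible set of intervals with total cost at most $B$.

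Fix an optimal solution with drones $d_1^*, \ldots, d_{OPT_{SC}}^*$ and feasible assignments $\A_1^*, \ldots, \A_{OPT_{SC}}^*$. Define $S_{\ell}^{i^*} = \A_i^* \cap \cI_{\ell}$ for each $1 \le i \le OPT_{SC}$.

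\emph{Compatibility.} Each $S_{\ell}^{i^*}$ is compatible automatically. It is a subset of a feasible assignment, and a feasible assignment is pairwise conflict-free.

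\emph{Budget.} This is the key point, and I expect it to be the only real obstacle. I must show that drone $d_i^*$ cannot recharge or swap between two of its deliveries from $\cI_{\ell}$.

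Every interval $I_j \in \cI_{\ell}$ has launch time $t_j^L$ in $[t_{\ell-1}^A, t_{\ell}^A)$. Since no delivery interval is contained in a waiting interval, $t_j^R > t_{\ell-1}^D$. Suppose $d_i^*$ used a charging (or swapping) interval at some station $\ell'$ strictly between two of its intervals $I_j, I_k \in \cI_{\ell}$, with $I_j$ earlier. Compatibility forces $t_j^R < t_{\ell'}'$ and $t_{\ell'}'' < t_k^L$.

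\begin{itemize}
\item Since $t_k^L < t_{\ell}^A$, the charging subinterval must start before $t_{\ell}^A$, so $\ell' \le \ell-1$.
\item Since $t_j^R > t_{\ell-1}^D \ge t_{\ell'}^D$, the interval $I_j$ ends after station $\ell'$ closes. So the charging cannot occur after $I_j$.
\end{itemize}

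These two facts are contradictory. Hence all of $d_i^*$'s intervals in $\cI_{\ell}$ are served on a single battery charge, and $cost(S_{\ell}^{i^*}) \le B$ because the remaining capacity never exceeds $B$.

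The one point needing care is that, unlike DDP-NC, intervals of $\cI_{\ell}$ may conflict. That does not matter here, since the argument only uses the endpoints of each individual interval relative to the waiting intervals.

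\emph{Conclusion.} The nonempty sets among $\{S_{\ell}^{i^*}\}_{i=1}^{OPT_{SC}}$ are blocks. They partition $\cI_{\ell}$, because each delivery is served by exactly one drone. So this is a feasible partition with at most $OPT_{SC}$ blocks, and since $OPT_{\ell}$ is the minimum number of blocks in any partition of $\cI_{\ell}$, we get $OPT_{\ell} \le OPT_{SC}$.
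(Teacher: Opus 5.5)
Your proof is correct and follows essentially the same route as the paper: you restrict an optimal DDP-SC schedule to $\cI_{\ell}$, show that no drone can recharge or swap between two of its deliveries from $\cI_{\ell}$, and conclude that the restrictions are blocks forming a partition with at most $OPT_{SC}$ parts. Your version is more careful than the paper's, which only considers a recharge at station $\ell-1$: you rule out a recharge at every station $\ell'$ and explicitly check the compatibility required by the extended block definition.
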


\begin{proof}
    This can be proved via similar arguments as in the proof of Lemma \ref{lem:rel-opt-l-opt}. A drone $D$ in any solution of DDP-SC can not have an assignment that contains some intervals of $\cI_{\ell}$, then recharge at the station $(\ell-1)$, and then again to some of the intervals of $\cI_{\ell}$. Hence, the cost of the intervals of $\cI_{\ell}$ assigned to $D$ can be at most $B$, and so those intervals constitute a block for $\cI_{\ell}$. 

    Therefore, we can always find a partition of $\cI_{\ell}$ from the optimum solution of DDP-SC, which is of size at most $OPT_{SC}$. Since $OPT_{\ell}$ is the size of the optimum partition of $\cI_{\ell}$, we have $OPT_{\ell} \le OPT_{SC}$. 
\end{proof}

\begin{theorem}
\label{thm:alg-ddp-sc}
   \textsc{AlgoFor-DDP-SC} is an approximation algorithm for DDP-SC that uses at most ($\beta \cdot OPT_{NC} +2 \omega$) drones, where $\beta$ is the approximation factor for Algorithm \ref{Alg:approx_ddp_color}.
\end{theorem}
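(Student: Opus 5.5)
The plan follows the proof of Theorem~\ref{thm:alg-ddp-nc}, with FFD replaced by Algorithm~\ref{Alg:approx_ddp_color} and the additive $2$ replaced by $2\omega$. (The statement writes $OPT_{NC}$; the quantity actually bounded is $OPT_{SC}$, and I work with that.)

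\textbf{Correctness.} Feasibility and the drone count $m_{max}+2\omega$ are exactly Lemma~\ref{lem:cor-ddp-sc}, so only two things remain: the running time and the bound $m_{max}\le \beta\, OPT_{SC}$.

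\textbf{Running time.} Step-1 builds the interval graph and finds $\omega$ in $\mathcal{O}(n+n_e)$ time. Step-2 partitions $\cI$ into $\cI_1,\dots,\cI_{r+1}$ in $\mathcal{O}(n+r)$ time after sorting. Step-3 runs Algorithm~\ref{Alg:approx_ddp_color} on each $\cI_\ell$; by Lemma~\ref{lem:cor+compl-ddp-ns} this costs $\mathcal{O}(n_\ell\log n_\ell + n_e^\ell)$ for each $\ell$. Summing over $\ell$, with $\sum n_\ell = n$ and $\sum n_e^\ell\le n_e$, gives $\mathcal{O}(n\log n+n_e+r)$. Step-5 assigns blocks to drones, which is linear.

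\textbf{Approximation bound.} Fix $\ell^*=\arg\max_\ell m_\ell$. Applied to the instance $\cI_{\ell^*}$, Algorithm~\ref{Alg:approx_ddp_color} is a $\beta$-approximation for the problem of partitioning $\cI_{\ell^*}$ into blocks (compatible sets of cost at most $B$). That block-partition problem is precisely DDP-NS restricted to $\cI_{\ell^*}$, so Theorem~\ref{Theorem-3} and Corollary~\ref{cor:genralized_factor} apply with the optimum $OPT_{\ell^*}$. This gives $m_{\ell^*}\le \beta\, OPT_{\ell^*}$. Lemma~\ref{lem:rel-opt-l-opt-sc} gives $OPT_{\ell^*}\le OPT_{SC}$. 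Chaining these, the number of drones used is
\[
m_{max}+2\omega \le \beta\, OPT_{SC}+2\omega .
\]

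\textbf{Main obstacle.} The delicate step is identifying $OPT_{\ell}$, the optimum block partition of $\cI_\ell$, with $OPT_{NS}$ on $\cI_\ell$. A DDP-NS drone's feasible assignment without stations is exactly a block: it is compatible and has total cost at most $B$. The parameters $\epsilon_{min},\epsilon_{max}$ for the subinstance $\cI_\ell$ are no worse than the global ones, since $\epsilon_{min}$ can only increase and $\epsilon_{max}$ can only decrease. Hence $\beta=2+\psi$ applies uniformly to every $\ell$.

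As a remark, with $\omega\le OPT_{SC}$ from Eq.~\ref{eq:rel-w-opt}, the bound becomes $(\beta+2)OPT_{SC}=(4+\psi)OPT_{SC}$.
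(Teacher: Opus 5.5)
Your proposal is correct and matches the paper's proof: feasibility and the drone count come from Lemma~\ref{lem:cor-ddp-sc}, the running time is dominated by Step-3, and the bound is the chain $m_{max}+2\omega = m_{\ell^*}+2\omega \le \beta\, OPT_{\ell^*}+2\omega \le \beta\, OPT_{SC}+2\omega$ via Lemma~\ref{lem:rel-opt-l-opt-sc}. Reading $OPT_{NC}$ as a typo for $OPT_{SC}$ is consistent with what the paper's proof actually bounds, and your check that $\beta=2+\psi$ holds uniformly on every subinstance $\cI_\ell$ (since $\psi$ can only decrease when passing to a subset of intervals) makes explicit a point the paper leaves implicit.
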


\begin{proof}
    The feasibility of the algorithm follows from Lemma \ref{lem:cor-ddp-sc}. The running time of the algorithm dominated by the subroutine used in Step-3 of the algorithm, which takes $O(n \log n + n_e + r)$ time, by Lemma \ref{lem:cor+compl-ddp-ns} and Theorem \ref{thm:alg-ddp-nc}. 

    Let $\ell^* = \arg \max\limits_{1 \le \ell \le r+1} m_{\ell}$. Then the number of used drones by the algorithm is $m_{max} + 2\omega = m_{\ell^*}+2\omega$, which is at most $\beta \cdot OPT_{\ell^*} + 2\omega \le  \beta \cdot OPT_{SC} + 2\omega, \text{~by Lemma \ref{lem:rel-opt-l-opt-sc}.}$
    %Hence the proof.
\end{proof}

\begin{discussion}
    From the above analysis and using Corollary \ref{cor:genralized_factor}, we can infer that our algorithm is an $(4+\psi)$-approximation algorithm for DDP-SC, where $\psi = \frac{\epsilon_{max} - \epsilon_{min}}{1 - \epsilon_{max}}$ and $\epsilon_{min} = \frac{1}{B}\min_{j \in \cN} cost(I_j)$ and $\epsilon_{max} = \min\{\frac{1}{2}, \frac{1}{B}\max_{j \in \cN}cost(I_j)\}$. In the worst case, the algorithm uses at most $\big((2+\frac{\epsilon_{max} - \epsilon_{min}}{1 - \epsilon_{max}})OPT_{SC} + 2\omega \big)$ drones. In the next section, we modify our algorithms that open at most $(3+\psi)OPT_{SC}$ drones, when the battery stations are the swapping stations.
\end{discussion}

\begin{discussion}
\label{dis:motivation-for-modification}
    If we use any algorithm for DDP-NS instead of Algorithm~\ref{Alg:approx_ddp_color} in Step~3, then $\beta$ is replaced by the corresponding approximation factor. Therefore, if we use the result of Epstein and Levin for BPPC on interval graphs~\cite{epstein} instead of Algorithm~\ref{Alg:approx_ddp_color}, \textsc{AlgoFor-DDP-SC} uses at most $(\frac{7}{3}OPT_{SC} + 2\omega)$ drones. In other words, in this case, \textsc{AlgoFor-DDP-SC} becomes a $(4+\frac{1}{3})$-approximation algorithm for DDP-SC (using $\omega \leq OPT_{SC}$). Even in this case, our subsequent modified algorithm for DDP-SC achieves a better approximation ratio of $(3+\psi)$, which is at most $4$ in the worst case.
\end{discussion}

\subsection{Modification to \textsc{AlgoFor-DDP-SC}}
\label{sec:mod-ddp-sc}
Here, we aim to modify the algorithm when the battery station is the swapping station. We illustrate our algorithm with an example in %Section 3 of the supplemental file.
Section \ref{sec:app-mod-ddp-sc}, Appendix \ref{sec:appendix}. 
The modified algorithm is referred to as \textsc{Mod-Alg-DDP-SC}. In this algorithm, we use one of the properties of the swapping station: if a drone decides to swap its battery at the $\ell$-th station, then it can not be assigned to any delivery interval that intersects the corresponding swapping interval $I_{\ell}^c$. We also use Algorithm \ref{Alg:approx_ddp_color} as the subroutines and while using this we exploit one of its properties as mentioned in Discussion \ref{dis:approx-ddp-color-2}. We now present the modified algorithm, where the $i$-th step is denoted Step$-i^+$.

 \textbf{Step-1$^+$:} \textit{~Construct the interval graph $G$ from the given set of delivery time intervals. Find $\omega$, the maximum clique size of the interval graph.}%\newline

 \textbf{Step-2$^+$:} \textit{~Divide the interval set $\cI$ into disjoint subset of intervals $\{\widetilde{\mathcal{I}}_1, \widetilde{\mathcal{I}}_2, \ldots, \widetilde{\mathcal{I}}_{r+1}\}$, where
% $\cI_1 = \{I_j ~|~ t_j^L < t_1^A\}$, and for
% $2 \leq \ell \leq r+1,~ \cI_{\ell} = \{I_j ~|~  I_j\notin \bigcup\limits_{\ell'=1}^{\ell-1} \cI_{\ell'} \text{~and~} (t_j^L < t_{\ell}^A)\}$.
$\widetilde{\cI}_1 = \{I_j ~|~ t_j^L \le t_1^D\}$, and for 
$2 \leq \ell \leq r, ~ \widetilde{\cI}_{\ell} = \{I_j ~|~ t_{\ell -1}^D < t_j^L \le t_{\ell}^D\}$, and $\widetilde{\cI}_{r+1} = \{I_j ~|~ t_j^L > t_{\ell}^D\}$.} %\newline

For each $\ell$ $(1 \le \ell \le r+1)$, we denote $G_{\ell}$ to refer the subgraph of $G$ induced by the corresponding vertices of $\widetilde{\cI}_{\ell}$.
For each $\ell$ $(1 \le \ell \le r)$, the set of intervals of $\widetilde{\cI}_{\ell}$ containing the time $t_{\ell}^A$ is denoted by $\cI_{\ell}^{left}$, and the set of intervals of $\widetilde{\cI}_{\ell}$ containing the time $t_{\ell}^D$ but not in $\cI_{\ell}^{left}$ is denoted by $\cI_{\ell}^{right}$.%\newline

 \textbf{Step-3$^+$:} \textit{~For each $\ell$  $(1 \le \ell \le r)$, we construct the bipartite graph $G_{\ell}^b (V_{\ell}^{left} \bigcup V_{\ell}^{right}, ~ E_{\ell}^b$), where $V_{\ell}^{left}$ and $V_{\ell}^{right}$ are the set of vertices corresponding to $\cI_{\ell}^{left}$ and intervals of $\cI_{\ell}^{right}$, respectively. Two vertices $u \in V_{\ell}^{left}$ and $v \in V_{\ell}^{right}$ are adjacent in $E_{\ell}^b$ if the corresponding intervals are not in conflict and the sum of their costs do not exceed the budget $B$.}%\newline

Note that $G_{\ell}^b$ is a subgraph of the complement graph of $G_{\ell}$. We now aim to find a maximum matching in each bipartite graph. A \emph{matching} is a set of independent edges, and \emph{maximum matching} is the matching with maximum cardinality.%\newline

\noindent \textbf{Step-4$^+$:} \textit{~For each $\ell$  $(1 \le \ell \le r)$, find the maximum matching $\mathcal{M}_{\ell}$ of $G_{\ell}^b$. Let $x_{\ell}$ be the size of the matching and let $z_{\ell}$ be the sum of $x_{\ell}$ and all unmatched vertices of $G_{\ell}^b$. In other words, $z_{\ell} = |V_{\ell}^{left} \bigcup V_{\ell}^{right}| - x_{\ell}$.} %\newline

 \textbf{Step-5$^+$:} \textit{~For each $\ell$  $(1 \le \ell \le r)$, color the vertices of $G_{\ell}$ using $\max\{\omega, z_{\ell}\}$ colors such that no two adjacent vertices receive the same color. First, we assign the color to the vertices of $G_{\ell}^b$. If $(u,v)$ be a matched vertex in $\mathcal{M}_{\ell}$, we assign the same color for both the vertices. After assigning the color to all the matched vertices, the unmatched vertices of $G_{\ell}^b$ are colored by using $(z_{\ell}-x_{\ell})$ additional distinct colors. 
Thereafter, the remaining vertices of $G_{\ell}$ are colored sequentially in non-increasing order of the rendezvous times of the corresponding intervals. 
Specifically, when coloring a vertex $u$ according to this order, assign to $u$ any color that is not used by its already colored neighbors.
Furthermore, the vertices of $G_{r+1}$ are colored by $\omega$ colors using standard interval coloring procedure.}% without any restriction.}%\newline

\textbf{Step-6$^+$:} \textit{After the coloring, for each $\ell$  $(1 \le \ell \le r+1)$, we apply Algorithm \ref{Alg:approx_ddp_color} to partition the intervals corresponding to the vertices of $G_{\ell}$ into distinct blocks. The only restriction is that, while processing the intervals with color $k (1 \leq k \leq z_{\ell})$ in Step-6, Algorithm \ref{Alg:approx_ddp_color}, we first assign a drone to the intervals (if exists) corresponding to the matched vertices of $M_{\ell}$ so that they can be part of the same block. Let $\widetilde{m}_{\ell}$ be the number of blocks returned by the algorithm.}%\newline

The set of intervals corresponding to the vertices of $V_{\ell}^{left}$ and $V_{\ell}^{right}$ is denoted by $\cI_{\ell}^{left}$ and $\cI_{\ell}^{right}$, respectively.
%Moreover, note that an interval can be so long that it intersects multiple waiting intervals. For $1 < \ell < r$, the set of intervals in $\cI$ that does not in $\cI_{\ell}$ but intersect with the waiting interval $I_{\ell}^c$ is named as $\cI_{\ell}^{out}$, and the size of the set is denoted by $y_{\ell}$.
%\newline

 \textbf{Step-7$^+$:} \textit{~Find $z_{max} = \max\limits_{1 \le \ell \le r+1} z_{\ell}$ and $\widetilde{m}_{max} = \max\limits_{1 \le \ell \le r+1} \widetilde{m}_{\ell}$. used $(\widetilde{m}_{max}+z_{max})$ drones.}%\newline

For $\ell \le r$, we denote the set of intervals in $\widetilde{\cI}_{\ell}^{left} \bigcup \cI_{\ell}^{right}$ as $\cI_{\ell}^{ext}$ and the set of drones assigned for these intervals is named as $\cD_{\ell}^{ext}$. For the completeness, we assume $\cD_{\ell}^{ext} = \emptyset$. %\newline

\textbf{Step-8$^+$:} \textit{~For each $\ell$  $(1 \le \ell \le r+1)$, we assign the blocks of $\widetilde{\cI}_{\ell}$ uniquely to those drones that are not in $\cD_{\ell-1}^{ext}$. After the assignments, if $\ell \le r$, we swap the battery of all the used drones except those that are in $\cD_{\ell}^{ext}$ at the $\ell$-th station.}

In the following, we analyze our algorithm:
\begin{lemma}
    \label{lem:proper-col-ddp-sc-mod}
    For each $\ell$  $(1 \le \ell \le r+1)$, the Step$-3^+$ of \textsc{Mod-Alg-DDP-SC} color the vertices of $G_{\ell}$ properly, i.e., no two vertices of $G_{\ell}$ get the same color.
\end{lemma}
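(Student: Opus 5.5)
We prove that the coloring produced in Step-$5^+$ (the statement's reference to Step-$3^+$ means this coloring procedure) is a proper coloring of $G_{\ell}$ using at most $\max\{\omega, z_{\ell}\}$ colors. The argument treats the coloring in three phases and checks that each phase never gives adjacent vertices the same color and never needs a color outside the palette.

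\textbf{Phase 1: matched pairs.} Each edge $(u,v)\in\mathcal{M}_{\ell}$ receives one color, and distinct matching edges receive distinct colors, using $x_{\ell}$ colors. This is proper because $G_{\ell}^b$ is a subgraph of the complement of $G_{\ell}$, so $u$ and $v$ are non-adjacent in $G_{\ell}$. No other vertex shares their color.

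\textbf{Phase 2: unmatched vertices of $G_{\ell}^b$.} These receive $z_{\ell}-x_{\ell}$ fresh distinct colors. They are trivially proper, and so far exactly $z_{\ell}\le \max\{\omega,z_{\ell}\}$ colors have been used.

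\textbf{Phase 3: the remaining vertices.} These are vertices not in $\cI_{\ell}^{left}\cup\cI_{\ell}^{right}$, and they are colored greedily in non-increasing order of rendezvous time. This is the main obstacle: we must show that when a vertex $u$ with interval $I_u=[a,b]$ is processed, its already-colored neighbours use at most $\max\{\omega,z_{\ell}\}-1$ colors.

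\emph{Case (i): $b<t_{\ell}^A$.} Every interval of $\widetilde{\cI}_{\ell}$ that contains $t_{\ell}^A$ or $t_{\ell}^D$ and meets $I_u$ must contain the point $b$. Such an interval has left endpoint below $b$, since it meets $I_u$, and right endpoint at least $t_{\ell}^A>b$.

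Any already-colored vertex from Phase 3 has rendezvous time at least $b$. If it meets $I_u$, it also contains $b$, because its left endpoint is at most $b$.

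So all already-colored neighbours of $u$, together with $u$ itself, contain the common point $b$. By the clique bound they number at most $\omega-1$, and a free color exists.

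\emph{Case (ii): $I_u$ does not satisfy case (i).} Then $I_u$ would have to meet $[t_{\ell}^A,t_{\ell}^D]$. The assumption that no delivery interval lies inside a waiting interval, together with the definition of $\widetilde{\cI}_{\ell}$ (launch time at most $t_{\ell}^D$), then forces $I_u$ to contain $t_{\ell}^A$ or $t_{\ell}^D$. That would place $u$ in $V_{\ell}^{left}\cup V_{\ell}^{right}$, a contradiction. So every Phase 3 vertex has $b<t_{\ell}^A$, and case (i) always applies.

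\textbf{The endpoint and $\ell=r+1$ cases.} I expect the main care to be spent on these. The key point is that any left endpoint $\le b$ combined with a right endpoint $\ge b$ gives containment of $b$. For $\ell = r+1$ there is no matching, and standard interval coloring with $\omega$ colors is proper since $G_{r+1}$ is an interval graph with clique number at most $\omega$.
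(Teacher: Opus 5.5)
Your proof is correct and follows the paper's route. Matched pairs share a colour because $G_{\ell}^b$ lies in the complement of $G_{\ell}$, unmatched vertices get fresh colours, and the remaining vertices are coloured greedily in non-increasing order of rendezvous time; every already-coloured neighbour of $u=[a,b]$ then contains $b$, so the clique bound blocks at most $\omega-1$ colours. Your case (ii) also proves something the paper only asserts: every remaining interval ends strictly before $t_{\ell}^A$, which follows from the assumption that no delivery interval lies inside a waiting interval.
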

\begin{proof}
     Since $\omega$ is the maximum clique size of the interval graph $G$, the maximum clique size of any of the subgraph $G_{\ell}$ is at most $\omega$. If $\ell = r+1$, then the vertices of $G_{r+1}$ are colored without any restriction, and hence are properly colored using $\omega$ colors.

    Now consider $\ell$ with $1 \le \ell \le r$. In this case, some vertices are colored separately as described in Step~$3^+$. 
    %First, the algorithm colors the vertices corresponding to the matched edges.
    If $(u,v)$ is an edge in the matching $\mathcal{M}_{\ell}$, then by the construction of $G_{\ell}^b$, the corresponding intervals do not conflict, and hence assigning the same color to both vertices preserves the coloring property.
    Thereafter, assigning distinct colors to the remaining $(z_{\ell}-x_{\ell})$ unmatched vertices in $G_{\ell}^b$ trivially satisfies the coloring property.

    After coloring the vertices of $G_{\ell}^b$, the remaining vertices of $G_{\ell}$ are colored using the standard greedy interval-coloring procedure. The vertices are processed in non-increasing order of the rendezvous times of their corresponding intervals. 
    All intervals in this step have rendezvous time at most $t_{\ell}^A$, while the intervals corresponding to vertices in $G_{\ell}^b$ have rendezvous time strictly greater than $t_{\ell}^A$. Let $v_1$ be the uncolored vertex with the largest rendezvous time. All of its already colored neighbors correspond to intervals whose rendezvous times are at least that of $v_1$. Since the clique number of the graph is $\omega$, there are at most $(\omega-1)$ such neighbors. Hence there exists a color not used by these neighbors, which can be assigned to $v_1$.
    More generally, when processing a vertex $v_j$ in this order, all previously colored neighbors of $v_j$ correspond to intervals whose rendezvous times are at least that of the interval corresponding to $v_j$. 
    There are at most $(\omega-1)$ such neighbors, and so at least one color remains available for $v_j$.
    So all vertices of $G_{\ell}$ can be properly colored using $\max\{z_{\ell},\omega\}$ colors.
\end{proof}

\begin{lemma}
    \label{lem:num-drones-in-D-l-ext}
    The number of drones in $\mathcal{D}_{\ell}^{ext}$ is $z_{\ell}$.
\end{lemma}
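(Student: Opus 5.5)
We need to show that the number of drones assigned to the intervals of $\cI_{\ell}^{ext} = \cI_{\ell}^{left}\cup\cI_{\ell}^{right}$ is exactly $z_{\ell} = |V_{\ell}^{left}\cup V_{\ell}^{right}| - x_{\ell}$. The plan is to track how these intervals are distributed into blocks in Step-$5^+$ and Step-$6^+$. Since blocks of $\widetilde{\cI}_{\ell}$ are assigned uniquely to drones in Step-$8^+$, it suffices to show that the intervals of $\cI_{\ell}^{ext}$ lie in exactly $z_{\ell}$ distinct blocks.

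\textbf{Upper bound.} First we show that $\cI_{\ell}^{ext}$ occupies at most $z_{\ell}$ blocks.
\begin{itemize}
\item By Step-$5^+$, the matched pairs of $\mathcal{M}_{\ell}$ receive $x_{\ell}$ colors, one per pair. The unmatched vertices receive $z_{\ell}-x_{\ell}$ further, pairwise distinct colors. So the vertices of $G_{\ell}^b$ use exactly $z_{\ell}$ colors, each appearing on either one vertex or one matched pair.
\item Algorithm~\ref{Alg:approx_ddp_color} processes each color class $\J_k$ separately, so intervals of different colors never share a block.
\item By the restriction in Step-$6^+$, for a color $k$ carrying a matched pair $(u,v)$, the greedy procedure first places both intervals together. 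It opens a drone for the first interval; the second fits by the edge condition of $G_{\ell}^b$, since the two are compatible and their costs sum to at most $B$. We must check that \textsc{Find} actually puts the second interval into this same block. At that moment the tree of color $k$ contains only this one node, so \textsc{Find} returns it.
\item Hence each of the $z_{\ell}$ colors contributes exactly one block containing intervals of $\cI_{\ell}^{ext}$.
\end{itemize}

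\textbf{Lower bound.} Next we show that $\cI_{\ell}^{ext}$ occupies at least $z_{\ell}$ blocks, so no two of these blocks coincide.
\begin{itemize}
\item Blocks from different colors are distinct, because each block is created within a single color class.
\item It remains to rule out two intervals of $\cI_{\ell}^{ext}$ with the same color landing in different blocks. The earlier bullet already guarantees that a matched pair shares one block.
\end{itemize}

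Therefore the intervals of $\cI_{\ell}^{ext}$ lie in exactly $z_{\ell}$ distinct blocks. These blocks are assigned to distinct drones in Step-$8^+$, giving $|\cD_{\ell}^{ext}| = z_{\ell}$.

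\textbf{Main obstacle.} The delicate point is a subtlety about what $\cD_{\ell}^{ext}$ counts. Other intervals of $\widetilde{\cI}_{\ell}$ with the same color may later join the blocks containing $\cI_{\ell}^{ext}$ intervals. This does not change the count, because the number of drones is determined by the number of distinct blocks that contain $\cI_{\ell}^{ext}$ intervals.

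A second issue, which I would state explicitly, is that the greedy algorithm might split a matched pair. Step-$6^+$ forces the pair to be handled first within its color class, which resolves this. If a fully equal count turned out to be problematic, the inequality $|\cD_{\ell}^{ext}|\le z_{\ell}$ alone suffices for the later feasibility argument.
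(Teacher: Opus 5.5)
Your proposal is correct and follows essentially the same argument as the paper: each matched pair of $\mathcal{M}_{\ell}$ is placed into a single block, and each of the $z_{\ell}-x_{\ell}$ unmatched vertices carries its own color and hence its own block, giving $x_{\ell}+(z_{\ell}-x_{\ell})=z_{\ell}$ drones. You also justify a point the paper glosses over, namely that \textsc{Find} returns the lone node in that color's tree. The upper/lower-bound split is a little redundant, since your second lower-bound bullet actually belongs to the upper bound, but nothing is missing.
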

\begin{proof}
    Let $(u,v)$ be a matching edge of $\mathcal{M}_{\ell}$ in $G_{\ell}^b$. The corresponding intervals get the same color. According to Step $6^+$, the algorithm first processes the two intervals associated with the vertices $u$ and $v$. By doing so, they become part of the same block, as from the construction of $G_{\ell}^b$, the sum of their costs is less than $B$. Thus, both intervals are assigned to the same drone. Since, $x_{\ell}$ is the number of matching edges in $\mathcal{M}_{\ell}$, we need $x_{\ell}$ drones for all the matched vertices (equiv. intervals). There are $(z_{\ell} - x_{\ell})$ many unmatched vertices and they get the distinct color, and so assign to $(z_{\ell}-x_{\ell})$ distinct drones.
\end{proof}

\begin{lemma}
    \label{lem:feasible-ddp-sc-mod}
    \textsc{Mod-Alg-DDP-SC} assigns each of the deliveries in $\cI$ to one of the $(\widetilde{m}_{max}+ z_{max})$ drones, and the assignments of all those drones are feasible.
\end{lemma}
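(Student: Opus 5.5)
We follow the same template as Lemma~\ref{lem:cor-ddp-sc}: first show every delivery lands in exactly one block and hence on exactly one drone, then prove feasibility by induction on $\ell$.

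\emph{Uniqueness.} Step-2$^+$ partitions $\cI$ into the sets $\widetilde{\cI}_{\ell}$, since every launch time falls in exactly one half-open range. By Lemma~\ref{lem:proper-col-ddp-sc-mod}, Step-5$^+$ gives a proper coloring of each $G_{\ell}$. Lemma~\ref{lem:cor+compl-ddp-ns} then shows that Step-6$^+$ produces compatible blocks of total cost at most $B$. By Discussion~\ref{dis:approx-ddp-color-2}, forcing matched pairs to be processed first does not change this. Each block is given to a distinct drone in Step-8$^+$, so each interval receives a unique drone.

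\emph{Induction hypothesis.} After processing $\widetilde{\cI}_{\ell-1}$, every drone's assignment is feasible, and every drone not in $\cD_{\ell-1}^{ext}$ has swapped at station $\ell-1$.

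\emph{Base case.} For $\ell=1$ all drones are empty. There are $\widetilde m_1\le \widetilde m_{max}$ blocks and at least $\widetilde m_{max}$ drones, so each block goes to a fresh full drone.

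\emph{Inductive step.} Fix $\ell\ge 2$. We need two facts.
\begin{enumerate}
\item Any drone $D\notin\cD_{\ell-1}^{ext}$ can legally swap at station $\ell-1$. The only intervals meeting $I_{\ell-1}^c$ are those of $\widetilde{\cI}_{\ell-1}$ containing $t_{\ell-1}^A$ or $t_{\ell-1}^D$, i.e.\ $\cI_{\ell-1}^{left}\cup\cI_{\ell-1}^{right}=\cI_{\ell-1}^{ext}$.
 \begin{itemize}
 \item Intervals of $\widetilde{\cI}_{\ell}$ launch after $t_{\ell-1}^D$.
 \item No interval lies inside $I_{\ell-1}^c$.
 \item No interval meets two waiting intervals, so earlier sets cannot reach $I_{\ell-1}^c$.
 \end{itemize}
 So $D$'s assignment is disjoint from $I_{\ell-1}^c$, and after the swap $D$ has budget $B$ at time $t_{\ell-1}^D$.
\item Every interval of $\widetilde{\cI}_{\ell}$ launches after $t_{\ell-1}^D$, so it is compatible with everything $D$ did before.
\end{enumerate}

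\emph{Counting.} By Lemma~\ref{lem:num-drones-in-D-l-ext}, $|\cD_{\ell-1}^{ext}|=z_{\ell-1}\le z_{max}$. Hence at least $\widetilde m_{max}+z_{max}-z_{\ell-1}\ge\widetilde m_{max}\ge\widetilde m_{\ell}$ drones are full and available. Assigning the $\widetilde m_{\ell}$ blocks of $\widetilde{\cI}_{\ell}$ uniquely to them keeps every assignment feasible, because each block is compatible and has cost at most $B$. The drones in $\cD_{\ell}^{ext}$ simply skip swapping at station $\ell$, which is always allowed. This completes the induction.

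\emph{Main obstacle.} The delicate step is the claim that exactly the drones in $\cD_{\ell-1}^{ext}$ are blocked from swapping. This requires two things:
\begin{itemize}
\item ruling out that an interval of $\widetilde{\cI}_{\ell-2}$ or earlier reaches $I_{\ell-1}^c$, which uses the assumption that no delivery interval meets two waiting intervals;
\item checking that $\cI_{\ell-1}^{left}\cup\cI_{\ell-1}^{right}$ covers every interval of $\widetilde{\cI}_{\ell-1}$ meeting $I_{\ell-1}^c$. An interval launching inside $I_{\ell-1}^c$ must contain $t_{\ell-1}^D$, again because no interval is contained in a waiting interval.
\end{itemize}
With those two checks in place, Lemma~\ref{lem:num-drones-in-D-l-ext} turns the count into the bound $z_{\ell-1}$ and the argument goes through.
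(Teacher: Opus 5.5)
Your proof is correct and follows the same route as the paper. Both argue by induction on $\ell$, note that only the drones in $\cD_{\ell-1}^{ext}$ are barred from swapping at station $\ell-1$, and use $|\cD_{\ell-1}^{ext}| = z_{\ell-1} \le z_{max}$ (Lemma~\ref{lem:num-drones-in-D-l-ext}) to leave at least $\widetilde{m}_{max} \ge \widetilde{m}_{\ell}$ full drones for the blocks of $\widetilde{\cI}_{\ell}$. Your version is more careful than the paper's in three ways: it explicitly uses the two structural assumptions (no delivery interval inside a waiting interval, none meeting two waiting intervals) that the paper leaves implicit, it checks that $\cI_{\ell-1}^{left}\cup\cI_{\ell-1}^{right}$ captures every interval meeting $I_{\ell-1}^c$, and it fixes the paper's index slips ($\cD_{\ell}^{ext}$ where $\cD_{\ell-1}^{ext}$ is meant, and $m_1$ where $\widetilde{m}_{max}$ is meant).
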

\begin{proof}
    We prove the lemma by induction on $\ell$, where $\ell$ is the index of the $\ell$-th interval set ($\widetilde{\cI}_{\ell}$).
    For $\ell=1$, the $\widetilde{m}_1 (\le \widetilde{m}_{max})$ blocks of $\widetilde{\cI}_1$ are uniquely assigned to any of the $m_1$ drones. All of these drones initially have empty assignments, so after the assignment, the feasibility property is trivially satisfied.

   From the induction hypothesis, we assume that the lemma holds up to the processing of the intervals of $\widetilde{\cI}_{\ell-1}$. For $\widetilde{\cI}_{\ell}$, the intervals of $\cI_{\ell-1}^{ext}$ are the only intervals among all the currently processed intervals that can be in conflict with the intervals of $\cI_{\ell}$. All the assignments of the used drones except those that are in $\cD_{\ell-1}^{ext}$ do not contain an interval that intersects with the waiting interval $\widetilde{\cI}_{\ell-1}^c$. All these drones can swap their battery at the swapping station $(\ell-1)$, and their battery capacity becomes full before the processing of the intervals of $\widetilde{\cI}_{\ell}$. Since there are at most $z_{max}$ many drones in $\cD_{\ell}^{ext}$, there are at least $\widetilde{m}_{max}$ drones which are of full capacity before processing $\widetilde{\cI}_{\ell}$. Hence, assigning all the $\widetilde{m}_{\ell} (\le \widetilde{m}_{max})$ blocks uniquely to those of $\widetilde{m}_{max}$ drones keeps the feasibility condition satisfied.
\end{proof}

\begin{lemma}
    \label{lem:rel-opt-l-opt-ddp-sc-mod}
    For each $\ell$  $(1 \le \ell \le r+1)$, $\widetilde{OPT}_{\ell} \leq OPT_{SC}$, where $\widetilde{OPT}_{\ell}$ is the optimum number of blocks in the optimum partition of the interval set $\widetilde{\cI}_{\ell}$.
\end{lemma}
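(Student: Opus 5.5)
The plan is to mirror the proof of Lemma~\ref{lem:rel-opt-l-opt-sc}: take an optimal solution of DDP-SC, restrict each drone's assignment to $\widetilde{\cI}_{\ell}$, and show that each nonempty restriction is a block. Here a block is a compatible set of intervals with total cost at most $B$.

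Fix an optimal solution with drones $d_1^*,\dots,d_{OPT_{SC}}^*$ and assignments $\A_i^*$. Define $S_{\ell}^{i*} = \A_i^* \cap \widetilde{\cI}_{\ell}$. These sets are pairwise disjoint and cover $\widetilde{\cI}_{\ell}$, since each delivery is served by exactly one drone. Compatibility of each $S_{\ell}^{i*}$ is inherited from the feasibility of $\A_i^*$. What remains is the budget bound $cost(S_{\ell}^{i*}) \le B$.

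For the budget bound, the key step is that no drone can regain battery between two intervals of $\widetilde{\cI}_{\ell}$. Every interval of $\widetilde{\cI}_{\ell}$ has launch time in $(t_{\ell-1}^D, t_{\ell}^D]$.
\begin{itemize}
\item Station $\ell-1$, and every earlier station, lies entirely before the launch of any interval of $\widetilde{\cI}_{\ell}$. A swap there therefore happens before the drone starts any delivery of $\widetilde{\cI}_{\ell}$.
\item For station $\ell$, let $I$ be an interval of $\widetilde{\cI}_{\ell}$, so $t^L \le t_{\ell}^D$. Either $I$ ends before $t_{\ell}^A$, or $I$ intersects $I_{\ell}^c$. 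In the second case, a drone serving $I$ cannot swap at station $\ell$ at all. In the first case, station $\ell$ comes after $I$.
\end{itemize}
Suppose a drone served $I, I' \in \widetilde{\cI}_{\ell}$ with a swap at station $\ell$ strictly between them. Then $I'$ must launch after $t_{\ell}^D$, which contradicts $I' \in \widetilde{\cI}_{\ell}$. Stations after $\ell$ lie entirely after $t_{\ell}^D$, and a similar argument applies to them. Here we use that waiting intervals are disjoint and sorted, and that a swap must occupy the entire $I_{\ell'}^c$.

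Consequently, along any drone's timeline, all of its intervals from $\widetilde{\cI}_{\ell}$ are served within one battery cycle. Their total cost is therefore at most $B$ by the budget constraint.

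This shows that $\{S_{\ell}^{i*}\}$ is a partition of $\widetilde{\cI}_{\ell}$ into at most $OPT_{SC}$ blocks, which gives $\widetilde{OPT}_{\ell} \le OPT_{SC}$.

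The main obstacle is the timing argument for station $\ell$ itself. An interval launched in $(t_{\ell}^A, t_{\ell}^D]$ intersects $I_{\ell}^c$, so a drone serving it cannot swap there. I will handle this by the case split on whether the interval intersects $I_{\ell}^c$, as above. The swapping policy is essential here: under partial recharging, a drone could recharge on a sub-interval and then launch before $t_{\ell}^D$. This is why the lemma is stated for the swapping variant.
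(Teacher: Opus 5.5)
Your proposal is correct and takes the same approach as the paper. You restrict each optimal assignment to $\widetilde{\cI}_{\ell}$ and use the fact that no swap can fall between two intervals of $\widetilde{\cI}_{\ell}$, so the restrictions form at most $OPT_{SC}$ blocks. Your case analysis over stations before, at, and after $\ell$ spells out the timing claim that the paper states in one sentence, and your remark that the swapping policy (not partial recharging) is what makes it work is accurate.
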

\begin{proof}
   Any feasible assignment in DDP-SC cannot contain a non-empty set of intervals from $\widetilde{\mathcal{I}}_{\ell}$, then swap its battery at station $\ell$, and subsequently contain another non-empty set of intervals from $\widetilde{\mathcal{I}}_{\ell}$. Therefore, the total cost of all intervals from $\widetilde{\mathcal{I}}_{\ell}$ assigned to a drone is at most $B$.
    Consequently, if we extract the intervals of $\widetilde{\mathcal{I}}_{\ell}$ from each assignment in an optimal solution of DDP-SC, each such set forms a block of $\widetilde{\mathcal{I}}_{\ell}$. In other word $OPT_{SC}$ serves as the upper bound on the optimum partition of $\widetilde{\cI}_{\ell}$. Hence, $\widetilde{OPT}_{\ell} \le OPT_{SC}$.
\end{proof}

\begin{lemma}
    \label{lem:rel-z-l-opt-ddp-sc-mod}
    For each $\ell$ with $1 \le \ell \le r$, $z_{\ell} \leq OPT_{SC}$.
\end{lemma}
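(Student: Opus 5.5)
The plan is to read $z_{\ell}$ as the minimum number of ``groups'' needed to cover $V_{\ell}^{left}\cup V_{\ell}^{right}$, where each group is either a single vertex or an edge of $G_{\ell}^b$. Since $\mathcal{M}_{\ell}$ is a maximum matching, $z_{\ell}=|V_{\ell}^{left}\cup V_{\ell}^{right}|-x_{\ell}$ is the smallest value of $|V_{\ell}^{left}\cup V_{\ell}^{right}|-|M|$ over all matchings $M$ of $G_{\ell}^b$. It therefore suffices to extract from an optimal solution of DDP-SC a matching $M^*$ of $G_{\ell}^b$ such that the number of optimal drones serving intervals of $\cI_{\ell}^{left}\cup\cI_{\ell}^{right}$ equals $|V_{\ell}^{left}\cup V_{\ell}^{right}|-|M^*|$.

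\textbf{Step 1: each optimal drone holds at most one interval from each side.} Every interval of $\cI_{\ell}^{left}$ contains $t_{\ell}^A$, so these intervals are pairwise conflicting. Hence an optimal drone holds at most one of them. Similarly, every interval of $\cI_{\ell}^{right}$ contains $t_{\ell}^D$, so an optimal drone holds at most one of these as well. Consequently, the intervals of $\cI_{\ell}^{left}\cup\cI_{\ell}^{right}$ served by a single optimal drone form either a singleton or a pair $\{u,v\}$ with $u\in\cI_{\ell}^{left}$ and $v\in\cI_{\ell}^{right}$.

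\textbf{Step 2: such a pair is an edge of $G_{\ell}^b$.} For a pair $\{u,v\}$ held by one optimal drone, compatibility is immediate because the assignment is feasible. The main point is the budget: I must show $cost(u)+cost(v)\le B$, i.e., that the drone cannot refresh its battery between serving $u$ and $v$.
\begin{itemize}
\item Both $u$ and $v$ lie in $\widetilde{\cI}_{\ell}$, so their launch times are in $(t_{\ell-1}^D,t_{\ell}^D]$. Hence no swap at station $\ell-1$ or at any earlier station can occur between them.
\item A swap at station $\ell$ is also impossible. The interval $u$ contains $t_{\ell}^A$, so it intersects $I_{\ell}^c$. The interval $v$ contains $t_{\ell}^D$ but not $t_{\ell}^A$, and its launch time is at most $t_{\ell}^D$, so it also intersects $I_{\ell}^c$. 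By the swapping policy, a drone that swaps at $\ell$ cannot hold any interval intersecting $I_{\ell}^c$.
\end{itemize}
Thus the drone serves both $u$ and $v$ within a single battery charge, so $cost(u)+cost(v)\le B$ and $(u,v)\in E_{\ell}^b$. The paired intervals coming from distinct drones are disjoint, so these pairs form a matching $M^*$ of $G_{\ell}^b$.

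\textbf{Step 3: counting.} The optimal drones that touch $V_{\ell}^{left}\cup V_{\ell}^{right}$ number exactly $|V_{\ell}^{left}\cup V_{\ell}^{right}|-|M^*|$. Since $|M^*|\le x_{\ell}$, this count is at least $z_{\ell}$. All these drones are among the $OPT_{SC}$ drones of the optimal solution, which gives $z_{\ell}\le OPT_{SC}$.

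The step needing the most care is Step 2, namely ruling out a battery swap between $u$ and $v$. It relies on the swapping-station property that a swap excludes every interval meeting $I_{\ell}^c$, together with the definition of $\widetilde{\cI}_{\ell}$ through departure times.
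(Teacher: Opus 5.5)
Your proof is correct and follows the paper's argument. Both extract from an optimal solution a matching of $G_{\ell}^b$, using that each drone holds at most one interval from each of $\cI_{\ell}^{left}$ and $\cI_{\ell}^{right}$, and compare it with the maximum matching $\mathcal{M}_{\ell}$; the paper phrases the count $2x_{\ell}+y_{\ell}=2x_{\ell}^*+y_{\ell}^*$ as a contradiction, you argue directly, and your Step 2 (no swap can separate the paired intervals, so their total cost is at most $B$) justifies a point the paper only asserts.
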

\begin{proof}
    On the contrary, let $z_{\ell} > OPT_{SC}$. Let $z_{\ell} = x_{\ell} + y_{\ell}$, where $x_{\ell}$ is size of the maximum matching $\mathcal{M}_{\ell}$ of the bipartite graph $G_{\ell}^b$ and $y_{\ell}$ is the number of unmatched vertices of $G_{\ell}^b$. 

    Since the intervals in $\cI_{\ell}^{first}$ contain the time $t_{\ell}^A$, a feasible assignment can not have more than one interval of $\cI_{\ell}^{left}$. Similarly, a feasible assignment can not have more than one interval of $\cI_{\ell}^{right}$. Let $OPT_{SC}'$ be the number of assignments in the optimum solutions of DDP-SC that contain intervals from $\cI_{\ell}^{ext}$. Among them, let $x_{\ell}^*$ be the number of assignments that contain intervals one from each $\cI_{\ell}^{left}$ and $\cI_{\ell}^{right}$. Then, corresponding vertices create an edge of $G_{\ell}^b$, and all such edges form a matching of $G_{\ell}^b$. Now $x_{\ell}$ being the size of the maximum matching of $G_{\ell}^b$, $x_{\ell} \geq x_{\ell}^*$ must holds.
    % the following must holds
    % \begin{equation}
    %     \label{eq:rel-x-l-x*}
    %     x_{\ell} \geq x_{\ell}^*
    % \end{equation}
    
    Let $y_{\ell}^*$ be the number of assignments in the optimum solution containing intervals only from $\cI_{\ell}^{left}$ or $\cI_{\ell}^{right}$. Then, $OPT_{SC}' = x_{\ell}^* + y_{\ell}^*$. Moreover, 
    $|\cI_{\ell}^{left} \bigcup \cI_{\ell}^{right}| = 2x_{\ell}+y_{\ell} = 2x^*_{\ell} + y_{\ell}^*$.

    From our assumption, $z_{\ell} > OPT_{SC}$, implies $z_{\ell} > OPT_{SC}'$ as well. Therefore,
    \vspace{-1em}
    \begin{align*}
        x_{\ell} + y_{\ell} > x_{\ell}^* + y_{\ell}^*
        \implies& 2x_{\ell} + y_{\ell} > x_{\ell} + x_{\ell}^* + y_{\ell}^*
        \implies & 2x_{\ell}^* + y_{\ell}^* > x_{\ell} + x_{\ell}^* + y_{\ell}^*
        \implies & x_{\ell}^* > x_{\ell}
    \end{align*}
    \vspace{-1em}
    which contradicts the inequality $x_{\ell} \geq x_{\ell}^*$
    %\ref{eq:rel-x-l-x*}. 
    Hence, $z_{\ell} \leq OPT_{SC}$.
\end{proof}

\begin{theorem}
    \label{thm:final-ddp-sc-mod}
    \textsc{Mod-Alg-DDP-SC} is an $(3+ \frac{\epsilon_{max} - \epsilon_{min}}{1 - \epsilon_{max}})$-approximation algorithm for DDP-SC.
    %, where $\psi = \frac{\epsilon_{max} - \epsilon_{min}}{1 - \epsilon_{max}}$.
\end{theorem}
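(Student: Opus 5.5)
The plan is to combine the feasibility lemma with two separate upper bounds: one on $\widetilde{m}_{max}$ and one on $z_{max}$, each in terms of $OPT_{SC}$. By Lemma \ref{lem:feasible-ddp-sc-mod}, \textsc{Mod-Alg-DDP-SC} produces a feasible solution using $\widetilde{m}_{max}+z_{max}$ drones, so it suffices to show
\[
\widetilde{m}_{max}\le \Big(2+\tfrac{\epsilon_{max}-\epsilon_{min}}{1-\epsilon_{max}}\Big)OPT_{SC}
\qquad\text{and}\qquad
z_{max}\le OPT_{SC}.
\]
The second bound is immediate from Lemma \ref{lem:rel-z-l-opt-ddp-sc-mod} for $\ell\le r$. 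For $\ell=r+1$ there is no bipartite graph, so I would set $z_{r+1}=0$, or treat it as vacuous. The polynomial running time comes from the interval-graph construction, the coloring, Algorithm \ref{Alg:approx_ddp_color}, and $r$ bipartite maximum matchings. The matchings dominate and give the stated $\mathcal{O}(n^{2.273}+r)$.

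For the first bound, fix $\ell^*=\arg\max_\ell \widetilde{m}_\ell$. Step-$6^+$ runs Algorithm \ref{Alg:approx_ddp_color} on $\widetilde{\cI}_{\ell^*}$, but with a prescribed coloring using $k_{\ell^*}=\max\{\omega,z_{\ell^*}\}$ color classes, and with a prescribed processing order inside each class (matched pairs first). By Discussion \ref{dis:approx-ddp-color-2}, the order does not affect the analysis. Lemma \ref{lemma-color} and Corollary \ref{cor:weight} hold for any compatible color class, so they apply to each class $\J_k$ of this coloring. Summing over the $k_{\ell^*}$ classes, as in Theorem \ref{Theorem-3}, gives
\[
\widetilde{m}_{\ell^*}\le \frac{cost(\widetilde{\cI}_{\ell^*})}{(1-\epsilon_{max})B}+k_{\ell^*}\Big(1-\frac{\epsilon_{min}}{1-\epsilon_{max}}\Big).
\]
I then bound $cost(\widetilde{\cI}_{\ell^*})\le \widetilde{OPT}_{\ell^*}B\le OPT_{SC}\cdot B$ using Lemma \ref{lem:rel-opt-l-opt-ddp-sc-mod}. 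I also bound $k_{\ell^*}\le OPT_{SC}$, using $\omega\le OPT_{SC}$ from Eq. \ref{eq:rel-w-opt} and $z_{\ell^*}\le OPT_{SC}$ from Lemma \ref{lem:rel-z-l-opt-ddp-sc-mod}. Since $\frac{1}{1-\epsilon_{max}}+1-\frac{\epsilon_{min}}{1-\epsilon_{max}}=2+\frac{\epsilon_{max}-\epsilon_{min}}{1-\epsilon_{max}}$, this yields the first bound. Adding $z_{max}\le OPT_{SC}$ then gives the factor $3+\psi$.

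The main obstacle is justifying that Corollary \ref{cor:weight} really applies to the classes of the modified coloring. Two points need checking:
\begin{itemize}
\item The lower bound $cost(S)\ge B/2$ for all but one drone in a class relies on the greedy never leaving two assignments that could be merged. Placing a matched pair into one drone first does not break this: it only fixes the first assignment, and the later greedy steps still satisfy the ``every earlier drone lacked room'' argument.
\item The coefficient on the additive term must be the number of classes, $\max\{\omega,z_{\ell^*}\}$, rather than $\omega$.
\end{itemize}
I would verify both carefully. This is exactly where $z_{\ell}\le OPT_{SC}$ is used a second time.
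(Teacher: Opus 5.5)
Your proposal is correct and follows essentially the paper's own proof. The ingredients are the same: Eq.~(\ref{eq-14}) with $\max\{\omega,z_{\ell}\}$ color classes, $\widetilde{OPT}_{\ell}\le OPT_{SC}$, $z_{\ell}\le OPT_{SC}$, and $\omega\le OPT_{SC}$. Bounding $\widetilde{m}_{max}$ and $z_{max}$ separately, each at its own maximizing index, is actually cleaner than the paper's single $\ell^*=\arg\max_{\ell}(\widetilde{m}_{\ell}+z_{\ell})$, which on its own does not control $\widetilde{m}_{max}+z_{max}$.

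There is one edge case that both you and the paper skip. Substituting $\max\{\omega,z_{\ell^*}\}\le OPT_{SC}$ requires $1-\frac{\epsilon_{min}}{1-\epsilon_{max}}\ge 0$. When this fails, every cost exceeds $B/2$, so every block is a singleton and $\widetilde{m}_{\ell}=\widetilde{OPT}_{\ell}$. Then directly $\widetilde{m}_{max}+z_{max}\le 2\,OPT_{SC}\le(3+\psi)\,OPT_{SC}$, since $\psi\ge -1$.
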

\begin{proof}
    The feasibility of the algorithm follows from Lemma \ref{lem:feasible-ddp-sc-mod}.
    The running time of the algorithm is dominated by Step-$3^+ - 6^+$. Constructing the bipartite graph and then finding the maximum matching takes $O(n^{2.273})$ time using a faster matrix multiplication method \cite{mucha2004maximum, alman2024refined}. After that, finding the block for each interval set takes $O(n\log n + n_e + r)$ time in total, from Lemma \ref{lem:cor+compl-ddp-ns} and Theorem \ref{thm:alg-ddp-nc}. All the other steps of the algorithm require linear time, and so the algorithm runs in $O(n^{2.273}+r)$ time.

    The number of used drones by the algorithm is ($\widetilde{m}_{max}+z_{max}$). Algorithm finds $\widetilde{m}_{\ell}$ using Algorithm \ref{Alg:approx_ddp_color}. However it uses $\max\{z_{\ell}, \omega\}$-coloring instead of $\omega$-coloring. Therefore, from Eq. \ref{eq-14}, $\widetilde{m}_{\ell} \leq \frac{\widetilde{OPT}_{\ell}}{1-\epsilon_{max}} + \max\{z_{\ell}, \omega\} (1 - \frac{\epsilon_{min}}{1-\epsilon_{max}})$.
    %Let $\ell^* = \arg \limits_{1\le \ell \le r+1} (\widetilde{m}_{\ell}+z_{\ell})$.
    So, by Lemma \ref{lem:rel-opt-l-opt-ddp-sc-mod} and \ref{lem:rel-z-l-opt-ddp-sc-mod}, 
    \vspace{-1em}
    %\footnotesize{
    \begin{align*}
        \widetilde{m}_{max}+z_{max} \leq & ~\frac{\widetilde{OPT}_{\ell^*}}{1-\epsilon_{max}} + \max\{z_{\ell^*}, \omega\}(1 - \frac{\epsilon_{min}}{1-\epsilon_{max}}) + z_{\ell}^*, \text{~where }\ell^* = \arg \limits_{1\le \ell \le r+1} (\widetilde{m}_{\ell}+z_{\ell}) \\
        \leq & ~\frac{OPT_{SC}}{1-\epsilon_{max}} + OPT_{SC}(1 - \frac{\epsilon_{min}}{1-\epsilon_{max}}) + OPT_{SC}, %\text{~by Lemma \ref{lem:rel-opt-l-opt-ddp-sc-mod}, \ref{lem:rel-z-l-opt-ddp-sc-mod}}
        \\
        \leq & ~(3+\frac{\epsilon_{max} - \epsilon_{min}}{1 - \epsilon_{max}})~OPT_{SC}
    \end{align*}%}
    \vspace{-1em}
    Hence the proof.
\end{proof}
\section{Performance Evaluation}
\label{sec:experiment}
In this section, we validate our proposed algorithms and compare them with the optimum solution in terms of both the optimization value and time. We assume that the battery stations are the swapping stations.

\subsection{The Settings}
In our experiment, we generate several random intervals from a fixed distribution. 
%We consider the endpoints of the intervals are integers.
We assume that the duration of the truck in a single day is $0$ to $T = 300$ units of time, which we have scaled from the practical duration trip of $30000 s$ \cite{campbell2017strategic, Betti}.
The endpoints of the intervals are integers, where the starting times lie in $[0, 300)$.
We consider the number of delivery intervals $n = 50, 70, 100, 150, 200$. For the instance of DDP-SC, we generate intervals $\{20, 30, \cdots, 80\}$. The smaller instance is taken due to the higher number of constraints of DDP-SC.

% \begin{figure}[htbp]
% \centering
% \includegraphics[width=1\columnwidth]{ns-uniform-20_cropped.pdf}
% \includegraphics[width=\columnwidth]{ns-uniform-50_cropped.pdf}
% \includegraphics[width=\columnwidth]{ns-exp-20_cropped.pdf}
% \includegraphics[width=\columnwidth]{ns-exp-50_cropped.pdf}
% \caption{Performance Evaluation for DDP-NS.}
% \label{fig:ddp-ns}
% \end{figure}

Since the arrival times of requests from any customer are independent and random, we generate random arrival times uniformly in $[0, 300)$ \cite{Betti}.
To do so, we first generate inter-arrival times from an exponential distribution with mean $T/n$. After that, we take the cumulative sum to obtain the actual arrival times of the intervals, which essentially follow a standard Poisson process with rate $n/T$. 
Finally, we scale it by dividing the intervals by the last arrival time, and then multiplying by $(T-1)$, which gives us the final set of arrival times.
Since the delivery cost is assumed to be a linear function of the interval length, we take the cost of each interval to be equal to its length. 
We assume the drone battery budget $(B)$ to be either $5000$ or $2000$~kJ, following \cite{stolaroff2018energy, Betti}; for our experiments, these values are scaled down to $50$ and $20$, respectively.
We consider the distribution length to follow either an exponential or a uniform distribution. The $exponential(B/2)$ distribution yields many small values and a few large ones, so we cover the possible extreme cases. Whereas, by taking the lengths from the uniform distribution $uniform(1, 10)$, we restrict the case to a single delivery interval that cannot be too long, and the positions of the customers are nearly uniformly distributed along the truck's path.
For DDP-NC and DDP-SC, we consider $r = 3$ and $5$ battery stations. The positions of the stations are placed nearly uniformly along the truck route, with random noise in the range $(-2, 2)$. Each battery swapping interval has a length of 5.

For each combination of fixed parameter values $n$, $B$, $r$, and the distribution type of the lengths, we generate five data sets and report the average results.
We compare our algorithm to the optimal solution, which is obtained via ILP as described in Section 1 of the supplemental file.
We use $OPT\text{-}NS$, $OPT\text{-}NC$, and $OPT\text{-}SC$ to denote the average optimal values for DDP-NS, DDP-NC, and DDP-SC, respectively. Similarly, $ALGO\text{-}NS$, $ALGO\text{-}NC$, and $ALGO\text{-}SC$ represent the average number of drones returned by our proposed algorithm for these problems.
The modified algorithm introduced in Section \ref{sec:mod-ddp-sc} is denoted by $M\text{-}ALGO$.
Moreover, $OPT\text{-}Time$ and $ALG\text{-}Time$ represent the average time (in milliseconds) required to solve the problem using the optimal method and our proposed algorithm, respectively.
Additionally, $Clic$ denotes the average clique size of the interval graph constructed from the generated delivery intervals. The notation $c\,OPT\text{-}SC$ refers to the graph obtained from $OPT\text{-}SC$ by scaling all associated values by a constant factor $c$.

In all our figures, the $x$-axis indicates the number of deliveries. The left $y$-axis shows the number of drones, and the right $y$-axis shows the time (in milliseconds).

\begin{figure}[htbp]
\centering
\begin{minipage}{0.48\columnwidth}
    \centering
    \includegraphics[width=\linewidth]{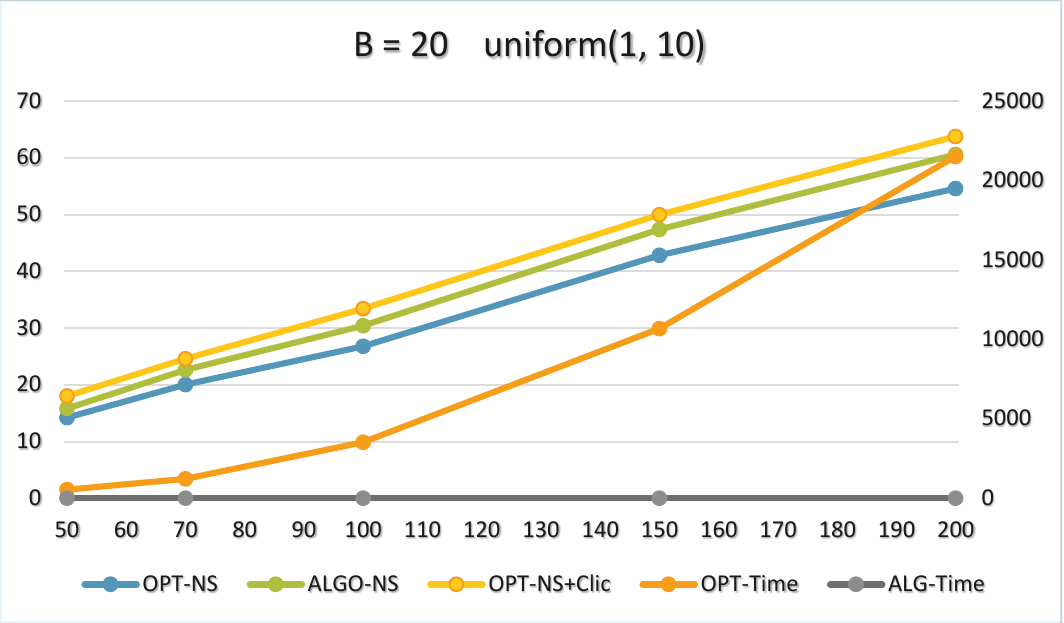}
\end{minipage}
\hfill
\begin{minipage}{0.48\columnwidth}
    \centering
    \includegraphics[width=\linewidth]{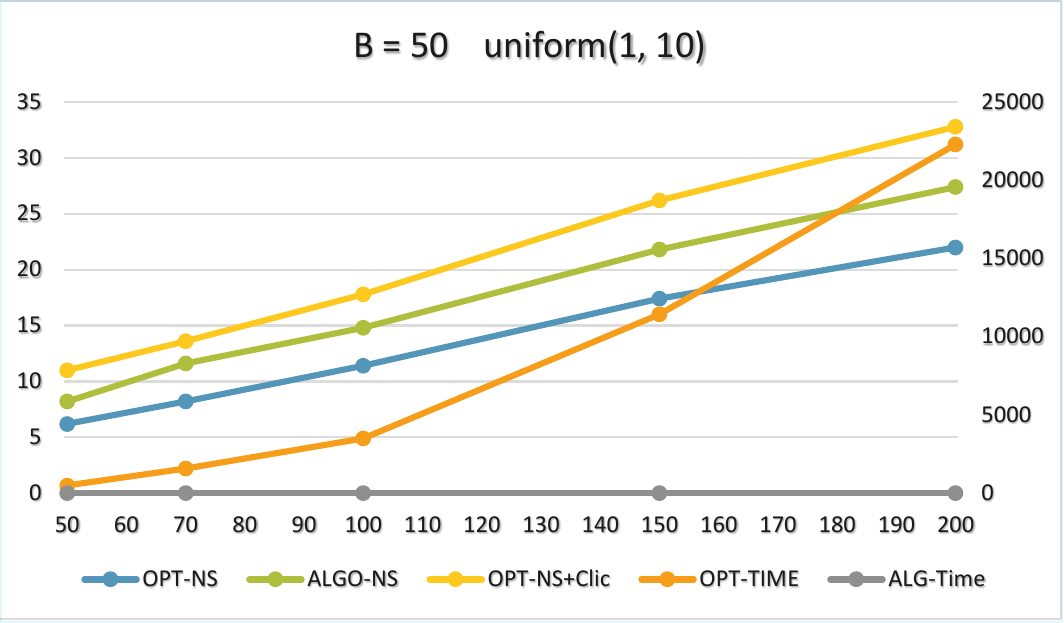}
\end{minipage}
\end{figure}
\begin{figure}[htbp]
\centering
\begin{minipage}{0.48\columnwidth}
    \centering
    \includegraphics[width=\linewidth]{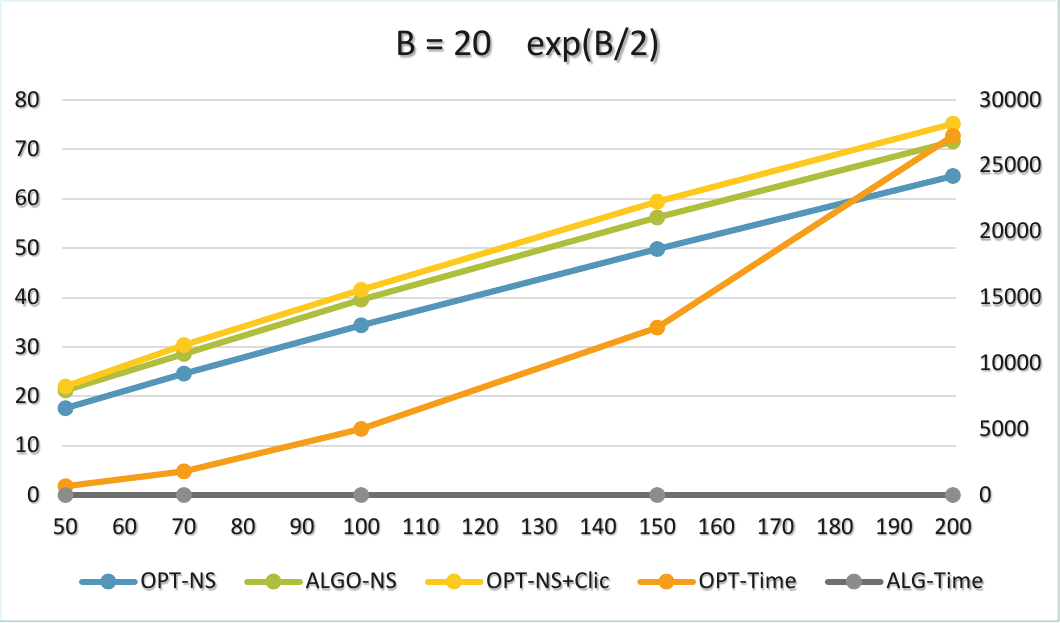}
\end{minipage}
\hfill
\begin{minipage}{0.48\columnwidth}
    \centering
    \includegraphics[width=\linewidth]{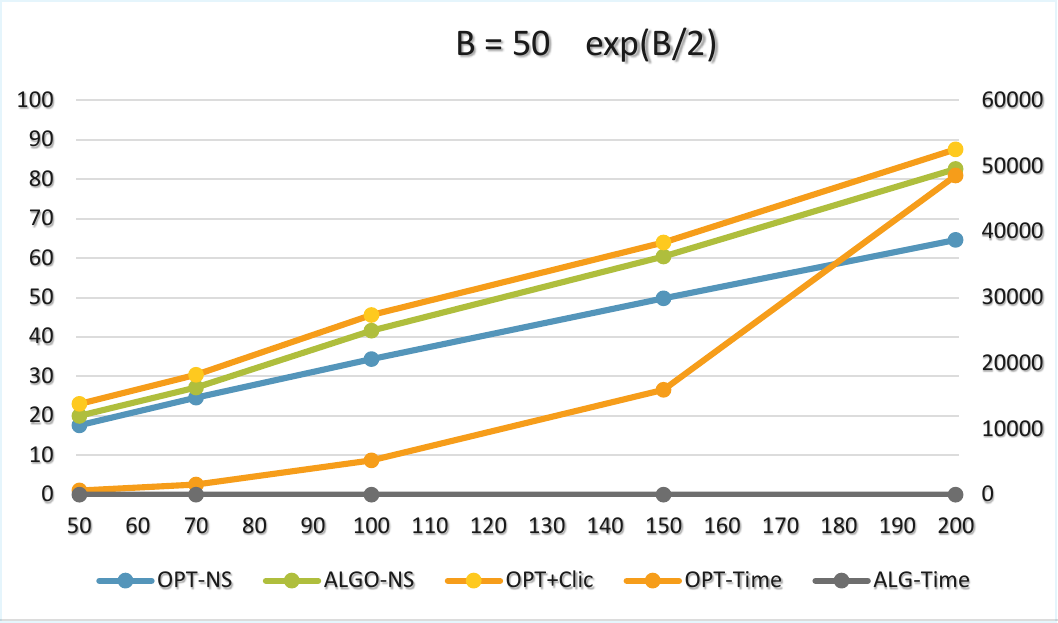}
\end{minipage}
\caption{Performance Evaluation for DDP-NS.}
\label{fig:ddp-ns}
\end{figure}
\subsection{Experimental results}
Fig.~\ref{fig:ddp-ns} presents a performance comparison between our proposed algorithm \textsc{AlgoFor-DDP-NS} and the optimal solution for DDP-NS. We consider four settings with $B \in \{20,50\}$, where the interval lengths follow either a uniform or an exponential distribution. 
Across all instances, we observe that the solution produced by our algorithm is at most the sum of the optimal value and the clique number of the interval graph constructed from the generated intervals. Consequently, our solution is at most twice the optimal value, which is consistent with the theoretical upper bound established in Corollary~\ref{cor:genralized_factor}.
Furthermore, we observe that the time taken by the optimal solution (obtained by solving an ILP) grows exponentially with the number of deliveries, whereas our algorithm takes less than 1 millisecond per instance.

% \begin{figure}[htbp]
% \centering
% \includegraphics[width=\columnwidth]{nc-uniform-3_cropped.pdf}
% \includegraphics[width=\columnwidth]{nc-unifrom-5_cropped.pdf}
% \includegraphics[width=\columnwidth]{nc-exp-3_cropped.pdf}
% \includegraphics[width=\columnwidth]{nc-exp-5_cropped.pdf}
% \caption{Performance Evaluation for DDP-NC.}
% \label{fig:ddp-nc}
% \end{figure}

Fig.~\ref{fig:ddp-nc} presents a performance comparison between our proposed algorithm \textsc{AlgoFor-DDP-NC} and the optimal solution for DDP-NC. We consider four settings with $r \in \{3,5\}$, where the interval lengths follow either a uniform or an exponential distribution. Here we fix the battery budget $B$ as $50$. 
Across all instances, we observe that the solution produced by our algorithm is at most the sum of the optimal value and two.
The empirical result shows that the practical bounds of our proposed algorithm for our chosen instances behave better compared to the shown theoretical bounds on Theorem \ref{thm:alg-ddp-nc} and Discussion \ref{disc:ddp-nc-1}.
This occurs possibly due to our chosen distribution, which is unable to consider the worst instance that matches the theoretical bound, as discussed in Discussion \ref{disc:ddp-nc-1}.
As with DDP-NS, we observe that the optimal solution's running time grows exponentially with the number of deliveries, while our algorithm requires less than 2 milliseconds per instance.
\begin{figure}[htbp]
\centering
\begin{minipage}{0.48\columnwidth}
    \centering
    \includegraphics[width=\linewidth]{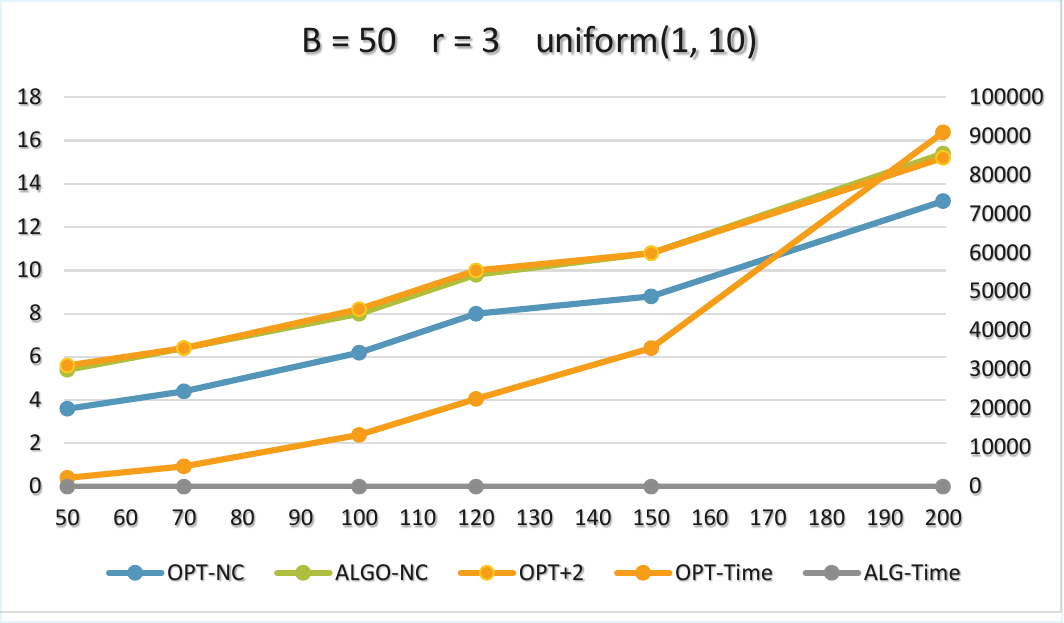}
\end{minipage}
\hfill
\begin{minipage}{0.48\columnwidth}
    \centering
    \includegraphics[width=\linewidth]{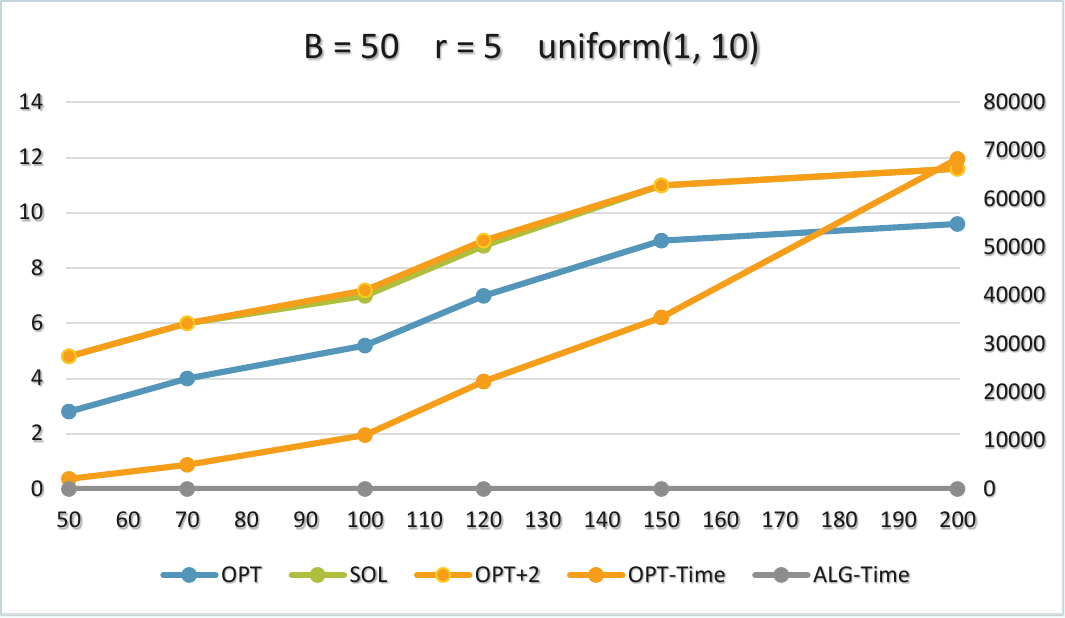}
\end{minipage}
\end{figure}
\begin{figure}[htbp]
\centering
\begin{minipage}{0.48\columnwidth}
    \centering
    \includegraphics[width=\linewidth]{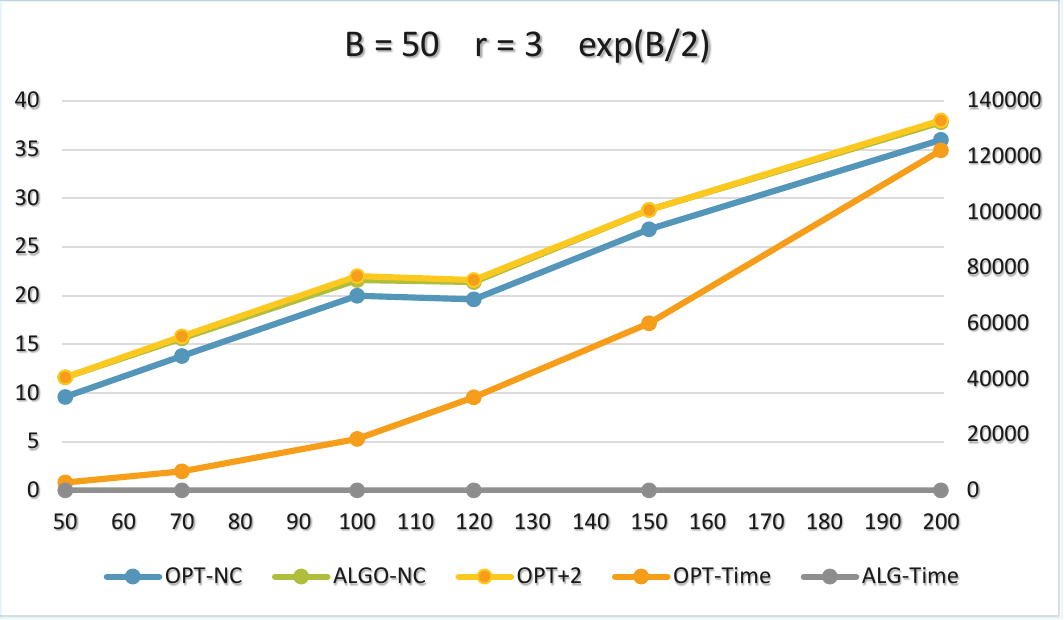}
\end{minipage}
\hfill
\begin{minipage}{0.48\columnwidth}
    \centering
    \includegraphics[width=\linewidth]{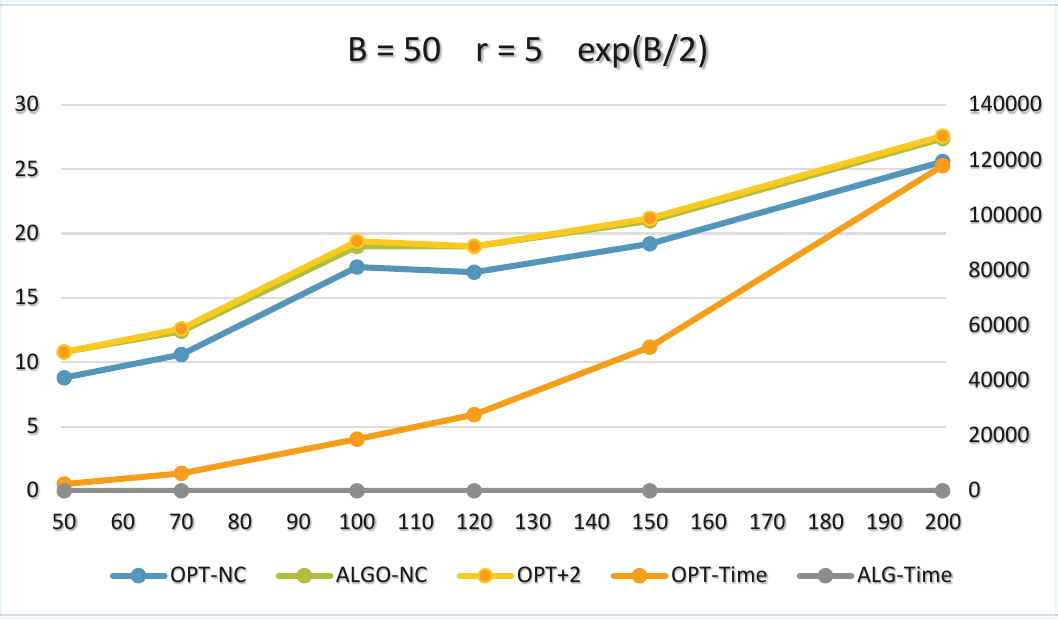}
\end{minipage}
\caption{Performance Evaluation for DDP-NC.}
\label{fig:ddp-nc}
\end{figure}

% \begin{figure}[ht]
% \centering
% \includegraphics[width=1\columnwidth]{sc-uniform-3_cropped.pdf}

% \includegraphics[width=\columnwidth]{sc-uniform-5_cropped.pdf}

% \includegraphics[width=\columnwidth]{sc-exp-3_cropped.pdf}

% \includegraphics[width=\columnwidth]{sc-exp-5_cropped.pdf}
% \caption{Performance Evaluation for DDP-SC.}
% \label{fig:ddp-sc}
% \end{figure}
Fig.~\ref{fig:ddp-sc} presents a performance comparison between our proposed algorithms  \textsc{AlgoFor-DDP-NC}, \textsc{Mod-Alg-DDP-SC} and the optimal solution for DDP-SC.
We here also consider four settings with $r \in \{3,5\}$, where the interval lengths follow either a uniform or an exponential distribution. 
For DDP-SC, we consider instances with $n \le 80$, as for $n \ge 90$, the number of constraints exceeds $10^5$, making it difficult to compute the corresponding optimal solution.
Here, we set the battery budget to 50. 
Across all instances, we observe that the solution returned by the modified algorithm \textsc{Mod-Alg-DDP-SC} is significantly less than that of the algorithm \textsc{AlgoFor-DDP-NC}.
Moreover, we find that the solution returned by the algorithm \textsc{AlgoFor-DDP-NC} is nearly thrice the optimal, and never crosses $4\,OPT\text{-}SC$. The modified algorithm \textsc{Mod-Alg-DDP-SC} is nearly twice the optimal, but never crosses $3\,OPT\text{-}SC$. These bounds align with the theoretical bound proven in Theorem \ref{thm:alg-ddp-sc}, \ref{thm:final-ddp-sc-mod}.
Additionally, the optimal solution for DDP-SC runs in exponential time, whereas both of our proposed algorithms take nearly 1 millisecond on the chosen instance. 
\begin{figure}[htbp]
\centering
\begin{minipage}{0.48\columnwidth}
    \centering
    \includegraphics[width=\linewidth]{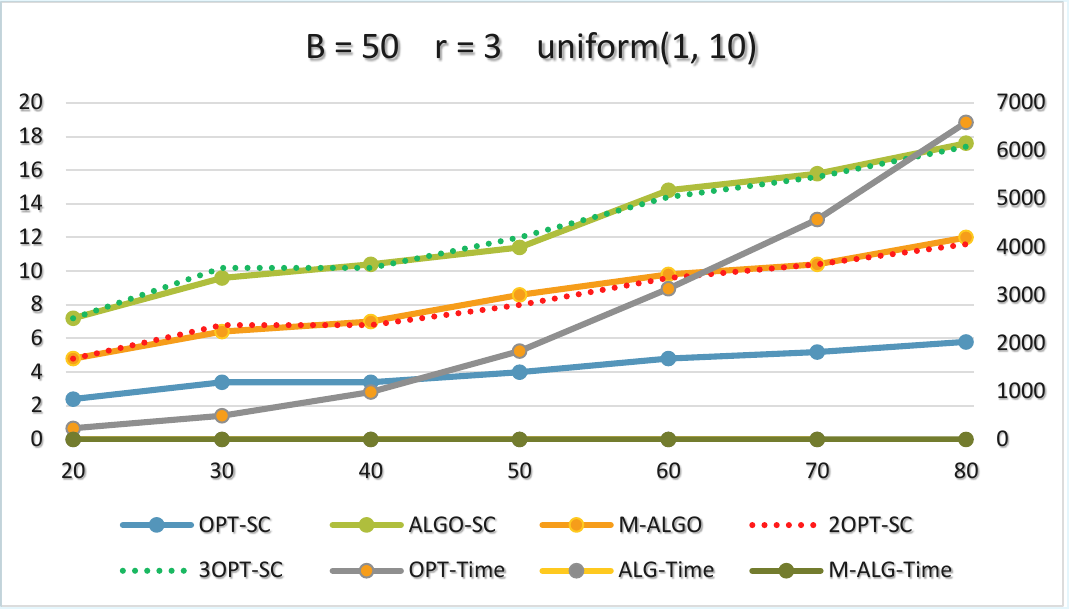}
\end{minipage}
\hfill
\begin{minipage}{0.48\columnwidth}
    \centering
    \includegraphics[width=\linewidth]{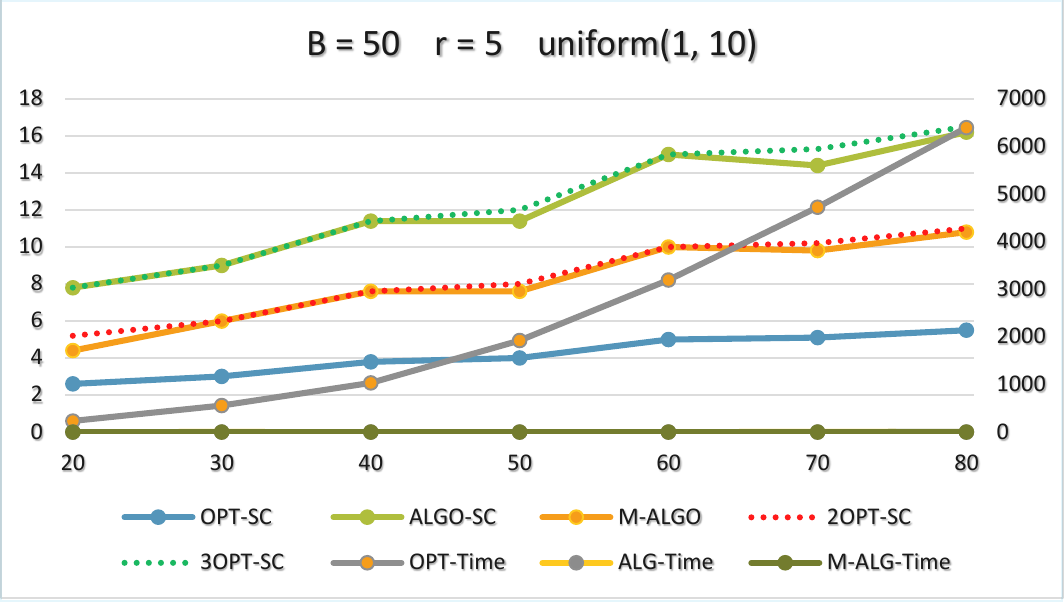}
\end{minipage}
\end{figure}
\begin{figure}[htbp]
\centering
\begin{minipage}{0.48\columnwidth}
    \centering
    \includegraphics[width=\linewidth]{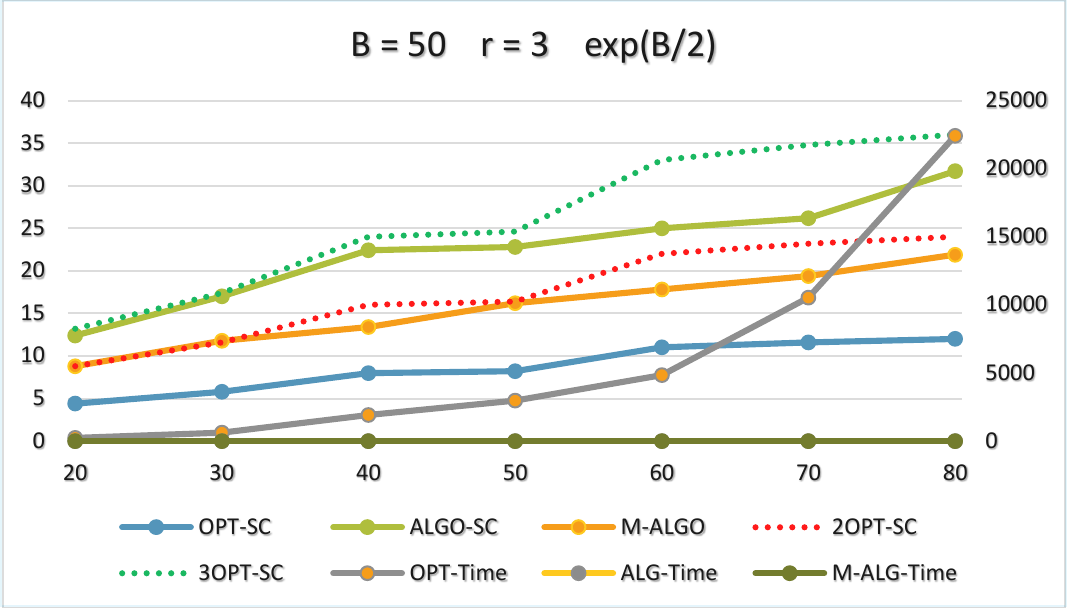}
\end{minipage}
\hfill
\begin{minipage}{0.48\columnwidth}
    \centering
    \includegraphics[width=\linewidth]{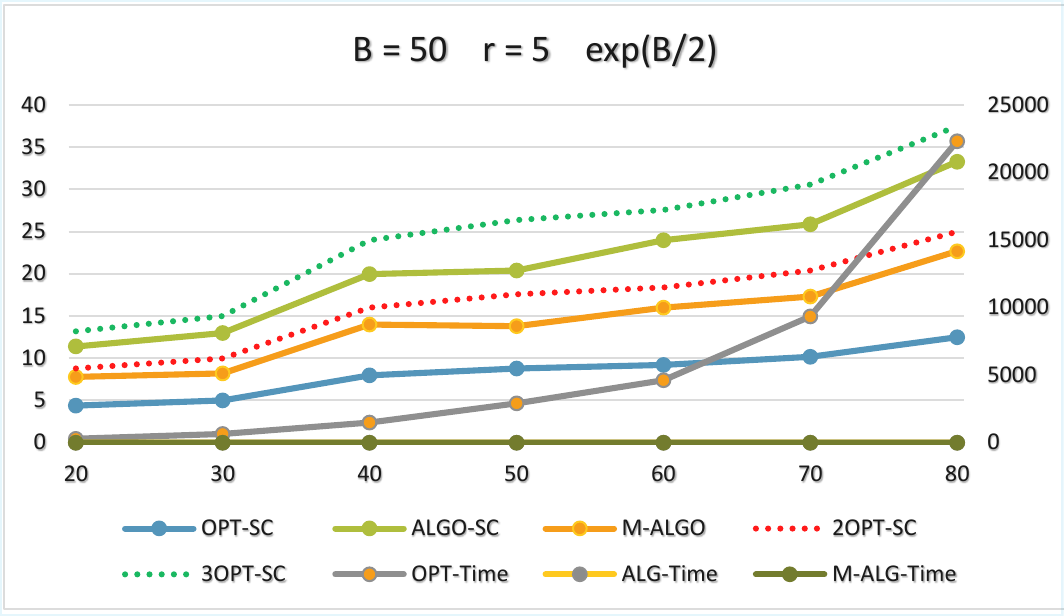}
\end{minipage}
\caption{Performance Evaluation for DDP-SC.}
\label{fig:ddp-sc}
\end{figure}

\section{Conclusion}
\label{section:concl}
In this paper, we studied the Drone-delivery Packing Problem. We discussed about three variants of the problem based on the conflicting characteristics of the delivery intervals and the presence of battery service stations. For each variant, we propose a distinct approximation algorithm and empirically compare its performance with the optimal solution.
We design our approximation algorithm to work for both types of battery service stations: swapping and recharging. Moreover, our algorithms for DDP-NC and DDP-SC can adapt any approximation algorithm for bin-packing and DDP-NS, respectively.
Finding better constant-factor approximation algorithms and asymptotic polynomial-time approximation schemes (PTAS) for all variants will be considered for future research. Additionally, it will be more challenging to consider a further realistic situation in an online environment.

%\section*{Acknowledgments}
% This should be a simple paragraph before the References to thank those individuals and institutions who have supported your work on this article.

\bibliographystyle{IEEEtran}
\bibliography{arxiv}

% \begin{IEEEbiographynophoto}{Saswata Jana}
% % Use $\backslash${\tt{begin\{IEEEbiographynophoto\}}} and the author name as the argument followed by the biography text.
% is currently a Prime Minister Research Fellow (PMRF) at the Department of
% Mathematics, Indian Institute of Technology Guwahati, India, since 2021. He pursued his M.Sc. in
% Mathematics (Gold Medalist) from the Indian Institute of Technology Patna, India, in 2021. He obtained
% his B.Sc. with Honours in Mathematics from the University of Calcutta, India, in 2019. He was a
% recipient of the Aryabhatta Gold Medal (2021) and Institute Silver Medal (2021) from IIT Patna, India.
% His present research interests include Approximation Algorithms and Distributed Algorithms.
% \end{IEEEbiographynophoto}
% \begin{IEEEbiographynophoto}{Partha Sarathi Mandal}
    
% \end{IEEEbiographynophoto}
%\appendices
\section{Appendix}
\label{sec:appendix}

\subsection{ILP}
\label{subsec:ILP}
Here, we present an integer linear program for the DDP-SC, assuming the battery stations are the swapping stations. 
Recall that $\mathcal{N} = \{1,2,\ldots,n\}$ denotes the set of deliveries and $\mathcal{R} = \{1,2,\ldots,r\}$ denotes the set of battery stations. Since each delivery and battery station corresponds to an interval, we combine these two sets and define $\mathcal{N}^+ = \mathcal{N} \cup \mathcal{R}$. For convenience, we relabel the elements of $\mathcal{N}^+$ as ${1,2,\ldots,n+r}$ in non-decreasing order of the left endpoints of their corresponding intervals.
Furthermore, we assume that the first index $1 \in \mathcal{N}^+$ corresponds to a delivery; otherwise, if it corresponds to a battery station, it can be ignored as it behaves equivalently to the warehouse.
Since, for all $j \in \mathcal{N}$, $cost(j) \le B$, $n$ is the upper bound for $OPT_{SC}$.
Let $\mathcal{M} = \{1, 2, \cdots, n\}$ be the set of drones available in the warehouse. Our objective is to use the minimum number of drones.

Let $x_{ij}$ be the binary variable, which is $1$ if the drone $i \in \mathcal{M}$ is assigned for the delivery (or the station) $j \in \mathcal{N}^+$; otherwise $0$.
$y_i$ is the binary variable that is $1$ if the drone $i \in \mathcal{M}$ is used.
$u_{ij}$ represents the battery level of the drone $i$ after processing the interval $j \in \mathcal{N}^+$.
The variable $z_{jk}$ is 1 if the intervals corresponding to the indices $j, k \in \mathcal{N}^+$ are in conflict; otherwise $0$. The variable $M >> B$ is a very large constant.

\begin{equation}
    \min \sum\limits_{i \in \mathcal{M}} y_i
    \label{eq:obj}
\end{equation}
\begin{equation}
    \text{subject to~~~} x_{ij} \le y_i~~~~~ \forall i \in \mathcal{M},~ \forall j \in \mathcal{N}^+
    \label{eq:C1}
\end{equation}
\begin{equation}
    \sum\limits_{i \in \mathcal{M}} x_{ij} = 1, ~~~~\forall j \in \mathcal{N}^+ \cap \mathcal{N}
    \label{eq:C2}
\end{equation}
\begin{equation}
    x_{ij} + x_{ik} \le 1, \forall i \in \mathcal{M}, \forall j, k \in \mathcal{N}^+ \text{~with~} z_{jk} = 1
    \label{eq:C3}
\end{equation}
\begin{equation}
    0 \le u_{ij} \le B ~~\forall i \in \mathcal{M}, ~~\forall j \in \mathcal{N}^+
    \label{eq:C4}
\end{equation}
\begin{equation}
    u_{i1} = B - cost(1) \cdot x_{i1}, ~~\forall i \in \mathcal{M}
    \label{eq:C5}
\end{equation}
\begin{equation}
    u_{ij} = u_{ij-1} - cost(j) \cdot x_{ij}, \forall i \in \mathcal{M}, \forall j \in \mathcal{N}^+ \cap \mathcal{N} \setminus \{1\}
    \label{eq:C6}
\end{equation}
\begin{equation}
    u_{ij} \ge B \cdot x_{ij}, ~\forall i \in \mathcal{M}, \forall j \in \mathcal{N}^+ \cap \mathcal{R}
    \label{eq:C7}
\end{equation}
\begin{equation}
    u_{ij} \le u_{ij-1} + M\cdot x_{ij}, ~\forall i \in \mathcal{M}, \forall j \in \mathcal{N}^+ \cap \mathcal{R}
    \label{eq:C8}
\end{equation}
\begin{equation}
    u_{ij} \ge u_{ij-1} - M\cdot x_{ij}, ~\forall i \in \mathcal{M}, \forall j \in \mathcal{N}^+ \cap \mathcal{R}
    \label{eq:C9}
\end{equation}
\begin{equation}
    cost(1) \cdot x_{i1} \le B, ~~\forall i \in \mathcal{M}
    \label{eq:C10}
\end{equation}
\begin{equation}
    cost(j) \cdot x_{ij} \le u_{ij-1}, \forall i \in \mathcal{M}, \forall j \in \mathcal{N}^+ \cap \mathcal{N} \setminus \{1\}
    \label{eq:C11}
\end{equation}
\begin{equation}
    x_{ij},~ y_i \in \{0,1\}, ~~\forall i \in \mathcal{M}, ~\forall j \in \mathcal{N}^+
    \label{eq:C12}
\end{equation}

Eq. \ref{eq:obj} is the objective function. Eq. \ref{eq:C1} says $x_{ij}$ is one only if the drone $i$ is used. Eq. \ref{eq:C2} tells that each delivery must be completed by exactly one drone.
Eq. \ref{eq:C3} confirms that if two intervals are in conflict, then a drone $i$ can not accomplish both of them.
Eq. \ref{eq:C4} tell that the battery level of a drone at any time must not exceed $B$ and does not go down $0$.
Eqs. \ref{eq:C5}~-~\ref{eq:C6} says that if the delivery $j$ is completed by the drone $i$, then after the delivery the battery level of the drone $i$ decreases by $cost(j)$ from its previous level.
However, if the delivery $j$ is not completed by drone $i$, the battery level remains the same. 
Eqs. \ref{eq:C7}~-~\ref{eq:C9} depicts that if the drone decides to swap its battery at the station $j \in \mathcal{N}^+ \cap \mathcal{R}$, then the post battery level of the drone $i$ after the swapping is $B$.
If it does not, its battery level remains the same as before.
Eqs. \ref{eq:C10}~-~\ref{eq:C11} tells that if the drone $i$ is assigned for the delivery $j$, then the current battery level of the drone $i$ must be at least $cost(j)$.

For DDP-NS, we remove the variable $u_{ij}$ and replace Eqs. \ref{eq:C4}~-~\ref{eq:C11} by $\sum_{j \in \mathcal{N}} cost(j)\cdot x_{ij} \le B$, $\forall i \in \mathcal{M}$.
For DDP-NC, the number of constraints in Eq. \ref{eq:C3} will be significantly lower, as there is no conflict between the two delivery intervals.

\subsection{Illustration of \textsc{AlgoFor-DDP-NC} with an example}
\label{sec:ex-ddp-nc}
Consider an instance of DDP-NC, as shown in Fig. \ref{fig:placeholder}, with 3 battery stations and 23 deliveries. Some intervals are represented by the black dots, and the size of the dot represents the cost of the corresponding interval. For simplicity, in the figure, we label the delivery interval $I_j$ by its index $(j)$ itself.
After Step-1, we have $\cI_1 = \{I_1, I_2, \cdots, I_6\}$; $\cI_2 =\{I_7, I_8, \cdots, I_{13}\}; \cI_3=\{I_{14}, I_{15}, \cdots, I_{19}\};$ and $\cI_4 =\{I_{20}, I_{21}, I_{22}, I_{23}\}$. 

Let $B=10$; $cost(I_1) = 3; cost(I_2) = 5; cost(I_3) = 9;cost(I_4) = 2;cost(I_5) = 3;$ and $cost(I_6) = 3$.
%Then, $\{I_3, I_2, I_1, I_5, I_6, I_4\}$ is the sorted interval set $\cI_1$.
If we apply FFD to $\cI_1$, it divides $\cI_1$ into 3 blocks, say $\{I_3\};~\{I_2,I_1,I_4\};~ \{I_5,I_6\}$. Similarly, let FFD divides $\mathcal{I}_2$ into 3 blocks: $\{I_7,I_{10}, I_{13}\};~ \{I_8,I_{11}\}; ~ \{I_9,I_{12}\};$~ $\mathcal{I}_3$ into another three blocks: $\{I_{14},I_{16},I_{18}\}; ~ \{I_{15},I_{19}\}; ~ \{I_{17}\};$ and $~\mathcal{I}_4$ into two blocks: $\{I_{20}, I_{22}\}; ~ \{I_{21}, I_{23}\}$.

We have, $I_1^{last} = I_6;~ I_2^{first} = I_7;~ I_2^{last} = I_{13};~ I_3^{first} = I_{14};~ I_3^{last} = I_{19}$;~ and $I_4^{first} = I_{20}$. Therefore, $S_1^{last} = \{I_5, I_6\};~ S_2^{first} = \{I_7, I_{10}, I_{13}\} = S_2^{last};~ S_3^{first} = \{I_{14}, I_{16}, I_{18}\};~ S_3^{last} = \{I_{15}, I_{19}\}$;~ and $S_4^{first} = \{I_{20}, I_{22}\}$.
Since, $m_1 = m_2 = m_ 3= 3$ and $m_4 =2$, we have $m_{max} = 5$. Let the $5$ used drones are $\{D_1, D_2, \cdots, D_5\}$. 
For $\cI_1$, we assign the block $\{I_3\}$ to $D_1$; the block $\{I_2, I_1, I_4\}$ to $D_2$; and the block $\{I_5, I_6\}$ to $D_3$. So, $D_1^{last} = D_3$. Then, we recharge/swap the drones $D_1$ and $D_2$ for the entire waiting interval $I_1^c$. Both drones will be at full capacity before processing the interval $I_8$.

For $\cI_2$, we assign the block $S_2^{first}$ to one of the drones except $D_0^{last} ( = \phi)$ and the drones assigned to $\cI_1$. So, we assign the block $\{I_7, I_{10}, I_{13}\}$ to $D_4$. Then the block $\{I_8, I_{11}\}$ assign to $D_1$ and the block $\{I_9, I_{12}\}$ assign to $D_2$. So, $D_2^{last} = D_4$. Thereafter, we recharge/swap the drones $D_1$, $D_2$ and $D_3$ for the entire waiting interval $I_2^c$. All of them will be at full capacity before processing the interval $I_{15}$.

For $\cI_3$, we assign the block $S_3^{first}$ to one of the drones except $D_1^{last} ( = D_3)$ and the drones assigned to $\cI_2$. So, we assign the block $\{I_{14}, I_{16}, I_{18}\}$ to $D_5$. Then the block $\{I_{15}, I_{19}\}$ is assigned to $D_1$ and the block $\{I_{17}\}$ is assigned to $D_2$. So, $D_3^{last} = D_1$. Thereafter, we recharge/swap the drones $D_2$, $D_4$ and $D_5$ for the entire waiting interval $I_3^c$. All of them will be at full capacity before processing the interval $I_{21}$.
Finally for $\cI_4$, we assign the block $S_4^{first}$ to one of the drones except $D_2^{last} ( = D_4)$ and the drones assigned to $\cI_3$. So, we assign the block $\{I_{20}, I_{22}\}$ to $D_3$. Then the block $\{I_{21}, I_{23}\}$ is assigned to $D_2$, and the algorithm terminates.
\begin{figure}[t]
    \centering
    \includegraphics[width=0.7\linewidth]{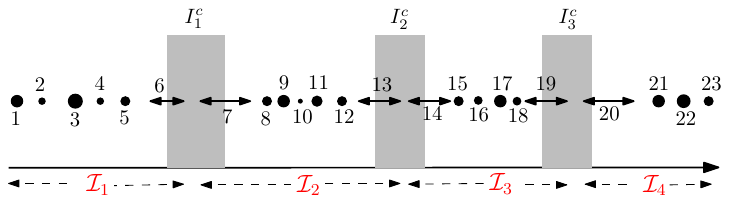}
    \caption{An example of DDP-NC with 3 waiting intervals and 23 delivery intervals.}
    \label{fig:placeholder}
\end{figure}
\subsection{Illustration of \textsc{Mod-Alg-DDP-SC}}
\label{sec:app-mod-ddp-sc}
\begin{figure}[ht]
    \centering
    \includegraphics[width=0.7\linewidth]{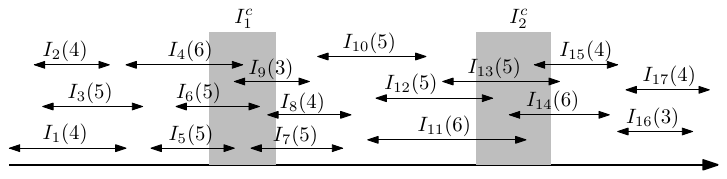}
    \caption{An example of DDP-SC with two waiting time intervals and 17 delivery intervals.}
    \label{fig:mod-ddp-sc}
\end{figure}

Consider an instance of DDP-SC, as shown in Fig. \ref{fig:mod-ddp-sc}, with 2 battery stations and 17 deliveries. $I_j(c)$ represents the delivery interval to the delivery $j$ with $cost(I_j) = c$. We assume $B = 10$. From the figure, it is easy to observe that $\omega$ is $3$. 
After the division of the intervals as of Step$-2^+$, $\widetilde{\cI}_1 = \{I_1, I_2, \cdots, I_9\}; ~\widetilde{\cI}_2 =\{I_{10}, I_{11}. \cdots, I_{15}\}$, and $\widetilde{\cI}_3 = \{I_{16}, I_{17}\}$.

We here illustrate the algorithm for $\ell =1$. The graph $G_1$ is shown in Fig. \ref{fig:G_1}. The vertex $j$ represents the vertex corresponding to the interval $I_j$. We have $V_{1}^{left} =\{4,5,6\}$ and $V_{1}^{right} = \{7,8,9\}$. Then we construct the bipartite graph $G_1^b(V_1^{left}\cup V_1^{right} , E_1)$, as depicted in Fig. \ref{fig:G_1_bipartite}. Since $I_4$ intersects with $I_9$, there is no edge between the vertices $4$ and $9$ in $G_1^b$. Also, since $cost(I_4) + cost(I_7) > B = 10$, there is no edge between the vertices $4$ and $7$ in $G_1^b$. However, being the intervals $I_4$ and $I_8$ compatible and $cost(I_4) + cost(I_8) \leq B$, there is an edge between $4$ and $8$ in $E_1$. Similarly, we add the other edges, as shown in Fig. \ref{fig:G_1_bipartite}.

\begin{figure}[h]
\centering
\linenumbers
\begin{minipage}{0.32\textwidth}
    \centering
    \includegraphics[width=.6\linewidth]{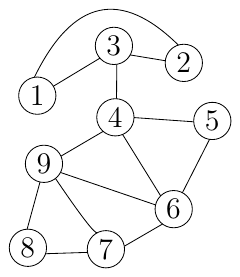}
    \caption{Induced subgraph $G_1$ of $G$ from  vertices of $\widetilde{\cI}_1$.
   % The vertex $j$ represents the interval $I_j$, where $1 \leq i \leq 9$.
    }
    \label{fig:G_1}
\end{minipage}
\hfill
\begin{minipage}{0.32\textwidth}
    \centering
    \includegraphics[width=.55\linewidth]{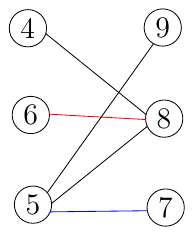}
    \caption{Bipartite graph $G_1^b$ with matching colored edges.
   % , where $V_{1}^{left} =\{4,5,6\}$ and $V_{1}^{right} = \{7,8,9\}$.
    }
    \label{fig:G_1_bipartite}
\end{minipage}
\hfill
\begin{minipage}{0.32\textwidth}
    \centering
    \includegraphics[width=.55\linewidth]{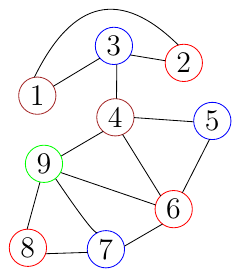}
    \caption{Coloring the vertices of $G_1^b$ from the matching.
   % , where $V_{1}^{left} =\{4,5,6\}$ and $V_{1}^{right} = \{7,8,9\}$.
    }
    \label{fig:G_1_color}
\end{minipage}
\end{figure}

The size of the maximum matching of $G_1^b$ is 2, let $(5,7)$ and $(6,8)$ be the matching edges. So, $x_1 = 2$, and $z_1 = 4$. We now color the vertices of $G_1$ with $\max\{\omega, z_1\}$ = 4 colors. The vertices $5$ and $7$ get the same color, say ``blue". Then the vertices $6$ and $8$ get the color same color, say ``red". Then, the unmatched vertices $4$ and $9$ get the different colors, say ``brown"
and ``green", respectively. Thereafter, the vertex $3$ is colored with ``blue", then the vertex $1$ is colored with ``brown", and finally the vertex $2$ with ``red". 

Applying Step-$6^+$, $\widetilde{\cI}_1$ is divided into $6$ blocks. For the ``blue" colored interval, intervals $I_5$ and $I_7$ are grouped into one block, then $I_3$ into another one. For the ``red" colored intervals, intervals $I_6$ and $I_8$ are grouped into one block, then $I_2$ into another one. Similarly, the interval $I_9$ is packed into one block, and the intervals $I_1$ and $I_4$ are packed into one block. 

Similarly, we will get $z_2 = 4$ and $\widetilde{\cI}_2$ is divided into four blocks: $\{I_{12}, I_{15}\}; ~\{I_{10}, I_{13}\};~ \{I_{11}\};~ \{I_{14}\}$, and $\widetilde{\cI}_3$ into two blocks: $\{I_{16}\}; ~\{I_{17}\}$.
Therefore, $\widetilde{m}_1 = 6;~ \widetilde{m}_2 = 4$; and $\widetilde{m}_3 = 2$, implies $\widetilde{m}_{max} = 6$. Moreover, $z_1 =4; ~z_2 = 4$, implies $z_{max} = 4$. Hence, the algorithm used $(6+4) = 10$ drones.
%\newpage
\subsection{List of Variables}
% \section{Appendix}
% \label{sec:appendix}
\begin{table}[h]
    \centering
    \scriptsize{
    \begin{tabular}{|c|c|c|c|}
        \hline
         $\mathcal{N}$& Set of deliveries & $B$  & Battery budget of the drone\\ \hline
      %  $\delta_j^L$ & Launching location of the delivery $j$  &  $\delta_j^R$ & Rendezvous location of the delivery $j$ \\ \hline
        $t_j^L$ & Launching time of the delivery $j$  &  $t_j^R$ & Rendezvous time of the delivery $j$ \\ \hline
        $I_j$ & Delivery time interval of the delivery $j$  & $\cI$ & Set of delivery time intervals\\ \hline
        $cost(I_j)$ & Cost for the delivery $j$  & $n$& Total number of deliveries \\ \hline
        $S_i^t$ & Set of deliveries assigned to drone $i$ at time $t$  & $rem_i^t$ & Remaining battery of drone $i$ at time $t$ \\ \hline
        $\mathcal{R}$& Set of battery service stations & $r$ & Total number of battery stations\\ \hline
         $t_{\ell}^A$& Arrival time of the truck at the station $\ell$  & $t_{\ell}^D$& Departure time of the truck from the station $\ell$ \\ \hline
         $I_{\ell}^c$& Waiting time interval at station $\ell$ & $\cI^c$ & Set of waiting time intervals\\ \hline
         %$n$& Total number of deliveries& $r$ & Total number of battery stations\\ \hline
         $OPT_{NS}$& Minimum no. of drones needed for DDP-NS & $OPT_{NC}$ & Minimum no. of drones needed for DDP-NC\\ \hline
         $OPT_{SC}$ & Minimum no. of drones needed for DDP-SC & $\omega$ & Maximum no. of pairwise conflicting intervals\\ \hline
         $G$& Interval graph to interval set $\cI$ & $\mathcal{J}_k$ & Set of intervals with color $k$\\ \hline
         $n_e$&Total no. of conflicts among the intervals in $\cI$ & $m_k$ &Number of drones used for $\mathcal{J}_k$\\ \hline
         $\epsilon_k$& $\frac{1}{B}\max \limits_{I_j \in \J_k} ~cost(I_j)$ & $\epsilon'_k$& $\min\{\frac{1}{2}, \epsilon_k\}$.\\ \hline
         $\epsilon_{min}$ & $ \frac{1}{B} \min \limits_{I_j \in \mathcal{J}_k}~cost(I_j)$& $\epsilon_{max}$& $ \max \limits_{1 \leq k \leq \omega} \epsilon'_k$\\ \hline
         $I_{\ell}^{first}$&Interval of $\cI_{\ell}$ containing $t_{\ell-1}^D$ & $I_{\ell}^{last}$ & Interval of $\cI_{\ell}$ containing $t_{\ell}^A$\\\hline
         $S_{\ell}^{first}$ & The block assigned for $I_{\ell}^{first}$ & $S_{\ell}^{last}$ & The block assigned for $I_{\ell}^{last}$\\\hline
         $D_{\ell}^{first}$ & The drone assigned for $I_{\ell}^{first}$ & $D_{\ell}^{last}$ & The drone assigned for $I_{\ell}^{last}$\\\hline
         $OPT_{\ell}$ & Min. no. of blocks needed for \texttt{Partition}$(\cI_{\ell})$ & $S_{\ell}^i$ & $i$-th block of $\cI_{\ell}$ returned by the algorithm\\ \hline
         $m_{\ell}$ & No. of blocks of $\cI_{\ell}$ returned by the algorithm & $m_{max}$ &  $\max\limits_{1 \le \ell \le r+1} m_{\ell}$\\ \hline
         $\cI_{\ell}^{first}$&Set of intervals of $\cI_{\ell}$ containing $t_{\ell-1}^D$ & $\cI_{\ell}^{last}$ & Set of intervals of $\cI_{\ell}$ containing $t_{\ell}^A$\\\hline
         $\mathcal{S}_{\ell}^{first}$ & Set of blocks assigned for $I_{\ell}^{first}$ & $\mathcal{S}_{\ell}^{last}$ & Set of blocks assigned for $I_{\ell}^{last}$\\\hline
         $\mathcal{D}_{\ell}^{first}$ & Set of drones assigned for $I_{\ell}^{first}$ & $\mathcal{D}_{\ell}^{last}$ & Set of drones assigned for $I_{\ell}^{last}$\\\hline
         $\widetilde{\cI}_{\ell}$& $\ell$-th modified interval set & $G_{\ell}$& Subgraph of $G$ induced by the vertices of $\widetilde{\cI}_{\ell}$\\ \hline
         $G_{\ell}^b$& Bipartite subgraph of the complement of $G_{\ell}$& $E_{\ell}^b$ &Set of edges in $G_{\ell}^b$\\ \hline
         $\cI_{\ell}^{left}$& Set of intervals in $\widetilde{\cI}_{\ell}$ containing $t_{\ell}^A$ & $\cI_{\ell}^{right}$& Set of intervals in $\widetilde{\cI}_{\ell}$ containing $t_{\ell}^D$\\ \hline
         $V_{\ell}^{left}$& Set of vertices cor. to the intervals in $\cI_{\ell}^{left}$ & $V_{\ell}^{right}$& Set of vertices cor. to the intervals in $\cI_{\ell}^{right}$\\ \hline
         $\mathcal{M}_{\ell}$& Maximum matching of $G_{\ell}^b$& $x_{\ell}$& Size of the matching of $\mathcal{M}_{\ell}$\\ \hline
         $z_{\ell}$& Sum of $x_{\ell}$ and unmatched vertices of $G_{\ell}^b$& $\widetilde{m}_{\ell}$& No. of blocks of $\widetilde{\cI}_{\ell}$ returned by algorithm \\ \hline
         $\cI_{\ell}^{ext}$ & Set of intervals in $\cI_{\ell}^{left}$ and $\cI_{\ell}^{right}$ & $\mathcal{D}_{\ell}^{ext}$ & Set of drones assigned for $\cI_{\ell}^{ext}$\\ \hline
         $z_{max}$ &$\max\limits_{1 \le \ell \le r+1} z_{\ell}$ & $\widetilde{m}_{max}$ &$ \max\limits_{1 \le \ell \le r+1} \widetilde{m}_{\ell}$\\ \hline
         \end{tabular}}
         
    \caption{List of variables used in this paper}
    \label{tab:notation}
\end{table}
%\label{sec:variable-list}
\end{document}